\newcommand{\Tr}[0]{\text{Tr}\;}
\newcommand{\sign}[0]{\text{sign}\;}
\newcommand{\ket}[1]{\left|#1 \right\rangle}
\newcommand{\bra}[1]{\left\langle#1 \right|}
\newcommand{\braket}[2]{\left\langle #1|#2 \right\rangle}
\newtheorem{theorem}{Theorem}[section]
\newtheorem{lemma}[theorem]{Lemma}
\newenvironment{proof}[1][Proof]{\begin{trivlist}
\item[\hskip \labelsep {\bfseries #1}]}{\end{trivlist}}
\renewcommand{\qed}{\nobreak \ifvmode \relax \else
      \ifdim\lastskip<1.5em \hskip-\lastskip
      \hskip1.5em plus0em minus0.5em \fi \nobreak
      \vrule height0.75em width0.5em depth0.25em\fi}
\begin{document}
\begin{frontmatter}
\title{A Ginsparg-Wilson approach to lattice CP symmetry, Weyl and Majoranna fermions, and the Higgs mechanism}
\author[a,b]{Nigel Cundy}
\address[a]{Institut f\"ur Theoretische Physik, Universit\"at Regensburg, D-93040 Regensburg, Germany}
\address[b]{Lattice Gauge Theory Research Center, FPRD, and CTP,\\ Department of Physics \&
    Astronomy, Seoul National University, Seoul, 151-747,\\ South Korea}
\date{\today}
\begin{abstract}
Recently, two solutions have been proposed to the long standing problem of $\mathcal{CP}$-symmetry on the lattice, which is particularly evident when considering the construction of chiral gauge theories. The first, based on a lattice modification of $\mathcal{CP}$ was presented by Igarashi and Pawlowski; the second by myself using the renormalisation group and Ginsparg-Wilson relation. In this work, I combine the two approaches and show that they are each part of a more general framework related to an underlying renormalisation group. I continue by formulating Weyl and Majorana fermions on the lattice, and discussing applications to the fermion propagator in the presence of the Higgs field. This resolves various difficulties when the standard continuum $\mathcal{CP}$ is applied to the lattice standard model, in particular concerning a non-local shift in the quark propagator. The modified lattice $\mathcal{CP}$ resolves the difficulties where standard $\mathcal{CP}$ broke down when applied to a lattice theory. However, although all actions and observables are invariant under $\mathcal{CP}$, this formulation gives a non-local generator of lattice $\mathcal{CP}$.    
\end{abstract}
\begin{keyword}
Chiral fermions \sep Lattice QCD  \sep Renormalisation group
\PACS  11.30.Rd \sep 11.15.Ha \sep 11.10.Hi 
\end{keyword}
\end{frontmatter}

\section{Introduction}
The procedure for constructing a lattice equivalent of a continuum symmetry was discovered by Ginsparg and Wilson in 1982~\cite{Ginsparg:1982bj}. Its best known application is to chiral symmetry~\cite{Hasenfratz:1998jp}, but in principle the same procedure can be applied to any infinitesimal continuum symmetry. It is not obvious, however, that it can be directly applied to discrete symmetries, such as the charge conjugation, parity and time reversal symmetries. Until recently, it has been assumed that these symmetries carried directly from the continuum to the lattice in the same form. If we interpret lattice QCD as a well defined quantum field theory which gives equivalent physics to continuum QCD in a particular limit of its parameters, then there is no obvious reason to suggest modifying $\mathcal{CP}$ symmetry on the lattice. There is, however, an alternative interpretation of lattice QCD, implicit in the Ginsparg-Wilson relation and the construction of fixed point fermions, that it is a quantum field theory with a particular regulator obtained by blocking in some renormalisation group scheme from an equivalent theory to continuum QCD, and in this interpretation, if the blocking is not $\mathcal{CP}$ invariant, it is clear that $\mathcal{CP}$ symmetry should be modified on the lattice. Recently, blockings to construct an equivalent of the fermionic part of overlap lattice QCD have been derived~\cite{Cundy:2009ab}. In this work, I use these blockings to construct an appropriate $\mathcal{CP}$ symmetry on the lattice.   

The Ginsparg-Wilson procedure uses a blocked renormalisation group transformation~\cite{Wilson:1977} to convert the fermion fields from one scale to another, for example from the continuum to the lattice. The operator
\begin{gather}
\int d\psi_1 d\overline{\psi}_1 e^{-(\overline{\psi}_1 - \overline{\psi}_0 B^{-1}) \alpha (\psi_1 - B^{-1} \psi_0)}\nonumber
\end{gather} 
is introduced into the generating function (which is constructed in terms of the original fermion fields $\psi_0$), and the integration is performed over the old fermion fields to give a new generating function in terms of the new fermion fields $\psi_1$. This procedure was used to construct the chirally symmetric fixed-point action~\cite{Hasenfratz:1994, Bietenholz:1995cy}. Its application to other lattice chiral Dirac operators, such as the overlap operator, was not realised until a recent study~\cite{Cundy:2009ab} which found blocking matrices $\overline{B}$, $B$ and $\alpha$ which could be used to construct the overlap Dirac operator (and could be extended to other chiral Dirac operators based on the overlap formalism~\cite{Cundy:2008cs,Cundy:2008cn}) from the continuum (although this blocking only transforms the fermion fields while leaving the partition function itself unchanged once the fermion fields have been integrated out, and so far there is no complete renormalisation group construction of an overlap lattice action from the continuum). It now seems that all lattice chiral fermions are closely related to the renormalisation group. 

The history of chiral fermions on the lattice, of course, suggests otherwise. The initial effort to cheat the Neilson-Ninoyoma no-go theorem~\cite{Nishy-Ninny}, which states that it is impossible to have a difference between the number of right and left handed fermion fields on the lattice if the Dirac operator is local, translation invariant and has the correct continuum limit, was by Kaplan with his  domain wall fermions~\cite{Kaplan:1992bt,Shamir:1993zy,Furman:1994ky}; where the lattice was extended to five dimensions, and the unwanted left handed fermions separated from the four dimensional surface used for QCD by a large fifth dimension; once the size of the fifth dimension reaches infinity chiral symmetry is restored, with finite fifth dimension it is only approximate. Inspired by this work, Neuberger and Narayanan constructed a chiral Dirac operator as the overlap between two vacua, which led them to the overlap formula~\cite{Narayanan:1993sk,Narayanan:1993ss,Neuberger:1998my,Neuberger:1998fp},
\begin{gather}
D = 1+\mu + (1-\mu)\gamma_5\sign(K),
\end{gather}
where $\mu = m/(2m_W)$ is a mass parameter, and $K[m_W]$ is the Hermitian form of another lattice Dirac operator, usually the Wilson operator. The domain wall Dirac operator reduces to a form of the overlap operator in infinite fifth dimension, and can thus be thought of as a representation of one particular inexact approximation of the overlap. Many other lattice Dirac operators with exact chiral symmetry have been found~\cite{Cundy:2008cs,Kerler:2002xk,Kerler:2003qv,Fujikawa:2000my,Fujikawa:2001fb}, but the overlap operator remains the only one in practical use.

After the overlap operator was found to satisfy the simplest form of the Ginsparg-Wilson relation, Martin L\"uscher constructed the lattice chiral symmetry transformations~\cite{Luscher:1998pqa} for the massless Dirac operator. However, it is important to emphasize, particularly in the context of this work, that the canonical Ginsparg-Wilson relation
\begin{gather}
\gamma_5 D + D \gamma_5 (1-aD) = 0
\end{gather}
and the corresponding chiral symmetry transformations is only one of an infinite number of ways to describe chiral symmetry on the lattice.

One troubling problem has remained for the Ginsparg-Wilson construction: $\mathcal{CP}$ symmetry. This is perhaps most clearly seen within the context of a chiral gauge theory, where the action
\begin{gather}
\overline{\psi} D (1+\gamma_5) \psi\nonumber
\end{gather} 
transforms under standard $\mathcal{CP}$ to
\begin{gather}
\overline{\psi} D (1+\gamma_5(1-aD)) \psi\nonumber,
\end{gather} 
which, unlike the continuum chiral gauge theory, is clearly not invariant at non-zero lattice spacing. This is related to the observation that the fermion and anti-fermion lattice chiral symmetry transformations are not symmetric and do not respect the standard $\mathcal{CP}$ symmetry.

Two no-go theorems, similar to the Neilson-Ninoyama theorem, have been constructed to say that it is impossible to construct a $\mathcal{CP}$-invariant lattice chiral gauge theory under certain reasonable conditions~\cite{Fujikawa:2002is,Jahn:2002kg}. It is tempting to just dismiss this as an interesting and inconvenient anomaly, but there are more serious concerns. A detailed study of the effects of $\mathcal{CP}$ breaking on the lattice was carried out by Fujikawa and his collaborators~\cite{Fujikawa:2002vj}, and they found three effects of $\mathcal{CP}$ breaking: (I) a constant phase in the generating functional; (II) a constant multiplied to the generating functional; and (III) a shift in the fermion propagator. The first two are unimportant, but the third is concerning. It is tempting to treat $\mathcal{CP}$ invariance in the same way that the lattice treats translation invariance, i.e. restored in the continuum limit, but there is one problem with this approach, namely that in the presence of a Higgs field, for example in the electro-weak theory, the shift in the propagator is non-local, which usually implies that there is no smooth continuum limit. A lack of $\mathcal{CP}$ invariance also creates problems constructing Weyl and Majoranna actions.

The question of $\mathcal{CP}$ on the lattice has recently been revived by three different studies, each giving a different solution. The first was by Gattringer and Pak~\cite{Gattringer:2007dz,Gattringer:2008je} (inspired by an idea in~\cite{Hasenfratz:2007dp}), who increased the number of fermionic degrees of freedom. I shall not consider this work here. The second solution was by Igarashi and Pawlowski~\cite{Igarashi:2009kj,Igarashi:2009wr}, who realised that a remnant of $\mathcal{CP}$ symmetry could be restored on the lattice in the same way that chiral symmetry is restored on the lattice by modifying the $\mathcal{CP}$ transformations. Although this approach works well, the motivations for doing this, other than to make $\mathcal{CP}$ work, were not well explored: the different $\mathcal{CP}$ transformation was applied arbitrarily rather than as a consequence of some more general approach, and the modified $\mathcal{CP}$ was applied to the left and right handed fermion fields rather than the original fermion fields and thus cannot be applied to the chiral symmetry transformation itself. The third work was my own~\cite{Cundy:2009ab}, where, as a footnote to a study of the connection between the renormalisation group and overlap fermions, I constructed a $\mathcal{CP}$ invariant chiral gauge theory using the standard continuum $\mathcal{CP}$ operators and a non-standard form of the Ginsparg Wilson relation. This solution followed directly from the Ginsparg-Wilson procedure and the renormalisation group. I showed that the measure of the chiral gauge theory was gauge invariant, necessary in the construction, but I did not directly demonstrate that my chiral projection operators were local for the $\mathcal{CP}$ invariant theory, only for the standard construction of lattice chiral symmetry. My chiral gauge theory was only valid for this one particular form of the Ginsparg-Wilson equation.\footnote{After this work was prepared, another work on this subject was presented, using mirror fermions~\cite{Poppitz:2010at}.}

In this work, I shall turn to the question of $\mathcal{CP}$ on the lattice more directly and in more detail. The basis of this work is the renormalisation group construction of the overlap operator described in my earlier paper, although the final approach is similar to that suggested by Igarashi and Pawlowski. I shall show that the Ginsparg-Wilson relation itself requires that $\mathcal{CP}$ symmetry should be modified on the lattice, and the form of lattice $\mathcal{CP}$ for a particular lattice chiral symmetry flows directly from the Ginsparg-Wilson relation. I will explicitly construct the required modification for an entire group of lattice chiral symmetries. I shall show that all lattice observables do, in fact, satisfy $\mathcal{CP}$ symmetry, but only if the correct lattice $\mathcal{CP}$ symmetry is used. I shall consider the anomalies discovered in~\cite{Fujikawa:2002vj}, and show that they are all resolved with this modified $\mathcal{CP}$ symmetry. Finally, I shall construct a chiral gauge theory, a Weyl action, a Majoranna action on the lattice, and consider the Yukawa coupling between the lattice Higgs field and the fermions within the context of the electroweak theory. However, this construction is only valid for Ginsparg-Wilson fermions, and for other lattice actions the $\mathcal{CP}$ anomalies will remain.

In section \ref{sec:2}, I overview the Ginsparg-Wilson relation and the renormalisation group construction of the lattice overlap action. In section \ref{sec:2a}, I construct the lattice $\mathcal{CP}$ operators, and in section \ref{sec:2b} consider the effects of $\mathcal{CP}$ on the generating functional. In section \ref{sec:3}, I consider the locality of my chiral symmetry transformations, and in section \ref{sec:4} I digress into a discussion of the conserved currents and Ward identities associated with the various lattice chiral symmetries. In sections  \ref{sec:7} and \ref{sec:8} I construct Weyl and Majoranna actions, and consider fermion propagators in the presence of a Higgs field in section \ref{sec:8b}. After my conclusions, there are three appendices establishing notation, and describing the Ginsparg-Wilson operators.

\section{The generalised Ginsparg-Wilson relation}\label{sec:2}
$D$ is the massless overlap operator~\cite{Neuberger:1998my},
\begin{gather}
D = 1+\gamma_5\sign(K),\label{eq:ov}
\end{gather}
whose eigenvalues lie on the unit circle of radius 1 centred at 1, and $K$ is some valid kernel operator. One commonly used kernel in lattice gauge theory simulations is the lattice Wilson Dirac operator,
\begin{gather}
K_{xy} = \gamma_5[\delta_{xy} - \kappa\sum_{\mu}(1-\gamma_{\mu} )U_{\mu}(x) \delta_{x+\mu,y} + (1+\gamma_{\mu})U^{\dagger}_{\mu}(x-\mu)\delta_{x-\mu,y}],
\end{gather}
with the Wilson mass, $m_W = 4-1/(2\kappa)$, satisfying $0<m_W<2$. In this work, however, I require an overlap operator which is well defined in the continuum. This means that the kernel should be taken from a blocked continuum Dirac operator (plus an additional term to account for the non-lattice degrees of freedom) which is either equivalent to or smoothly reduces to a lattice Dirac operator in a particular limit (here I use an operator which reduces to the Wilson Dirac operator). Defining the lattice Dirac operator as a limit of a continuum operator means that it is possible for the blockings to be invertible, and that both the continuum and `lattice' Dirac operators have the same number of degrees of freedom, including the same number of eigenvalues.  One possible example is given in~\cite{Cundy:2009ab}, although for this work the precise details of the kernel operator are unimportant. The kernel operator itself does not need to be analytic in the lattice limit as long as the overlap operator itself is well defined and equivalent to the target lattice theory. I shall refer to this Dirac operator $D$ as the lattice Dirac operator, even though, strictly, I am always using continuum fermion fields. 

Ginsparg and Wilson originally derived their famous equation using Wilson's formulation of the renormalisation group~\cite{Ginsparg:1982bj}. They considered how a symmetry is affected by a renormalisation group blocking. This formulation is more general than just the application to chiral symmetry in lattice QCD, and it includes the various blockings which can be used to convert continuum QCD to lattice QCD. The process considers a renormalisation group of the partition function  such as 
\begin{align}
Z(J_0,\overline{J}_0,U,g_1) =& \int d\psi_0 d\overline{\psi}_0 e^{-\frac{1}{4g_0^2} F_{\mu\nu}^2 - \overline{\psi}_0 D_0 \psi_0 + \overline{J}_0 \psi_0 + \overline{\psi}_0 J_0}\int d\psi_1 d\overline{\psi}_1 e^{-(\overline{\psi}_0 -  \overline{\psi}_1\overline{B}^{-1})\alpha(\psi_0 - B^{-1} \psi_1)}\\
=&\int d\psi_1 d \overline{\psi}_1 e^{-\frac{1}{4g_1^2} F_{\mu\nu}^2  - \overline{\psi_1} D_1 \psi_1 + \overline{\psi}_1 \overline{B}'J_0 + \overline{J}_0 B' \psi_1}
,
\intertext{where the original generating functional was,}
Z(J_0,\overline{J}_0,U,g_0) =&\int d\psi_0 d\overline{\psi}_0 e^{\frac{1}{-4g_0^2} F_{\mu\nu}^2 - \overline{\psi}_0 D_0 \psi_0 + \overline{J}_0 \psi_0 + \overline{\psi}_0 J_0},
\end{align}
and $D_1$, $B'$ and $\overline{B}'$ can be calculated by integrating out the $\psi_0$ and $\overline{\psi}_0$ fields. As a simplifying (though not necessary) assumption, I shall assume that $B^{-1}$ and $\overline{B}^{-1}$ are invertible, and use $B' = B$, $\overline{B}' = \overline{B}$. I assume that the Jacobian created from this integration can be absorbed into a modification of the gauge coupling.
Demanding that the original action is invariant under chiral symmetry leads to a Ginsparg-Wilson relation~\cite{Cundy:2009ab,Cundy:2008cn,Borici:2007ft}
\begin{gather}
\overline{B}\gamma_5\overline{B}^{-1} D + D B^{-1}\gamma_5B = D(B^{-1}\gamma_5B\alpha^{-1} + \alpha^{-1}\overline{B}\gamma_5\overline{B}^{-1}) D,\label{eq:2}
\end{gather}
for invertible local operators $B$ and $\overline{B}$ which do not necessarily have to commute with $\gamma_5$ (Ginsparg and Wilson's original formulation, and its use in constructing the perfect action~\cite{Hasenfratz:1994}, assumed that $B$ and $\overline{B}$ commute with $\gamma_5$). In~\cite{Cundy:2009ab}, I showed that a continuum form of the overlap operator which smoothly reduced to the lattice operator in a certain limit could be derived using such an approach. In this operator, the Dirac operator was constructed so that there was a decoupling between eigenvectors entirely on the lattice sites and eigenvectors entirely off-lattice. The eigenvalues of the off-lattice part of the Dirac operator are then projected to $2/a$, so that they do not contribute to the fermion propagators. In this way, the the theory was constructed without reducing the number of degrees of freedom but in such a way that the dynamics of the theory could be treated on the lattice while the off lattice site correction computed analytically (and trivially, since the determinant is just a constant). This is analogous to Wilson's approach to remove the fermion doublers. This overlap operator satisfies the Ginsparg-Wilson relation for various possible functions $\alpha$ and $B$.  This relationship between the renormalisation group and the overlap operator explains why the Ginsparg-Wilson equation can be applied to overlap lattice QCD. If certain lattice operators were not connected to the continuum via the renormalisation group and the Ginsparg-Wilson equation only fulfilled coincidentally, then much of the physical significance of the relation is lost. Equation (\ref{eq:2}) can be re-expressed as
\begin{gather}
\gamma_L D + D \gamma_R = 0,
\end{gather}
where, for example,
\begin{align}
\gamma_L =& \overline{B}\gamma_5\overline{B}^{-1}\nonumber\\
\gamma_R = & B^{-1}\gamma_5B - (B^{-1}\gamma_5B\alpha^{-1} + \alpha^{-1}\overline{B}\gamma_5\overline{B}^{-1}) D.
\end{align}

There are, in fact, an infinite number of possible solutions $\gamma_L$ and $\gamma_R$ for a given chiral lattice Dirac operator, which only become equivalent in the continuum limit~\cite{Mandula:2007jt,Mandula:2009yd}. Following the formulation of~\cite{Cundy:2009ab}, I here set $\alpha = \infty$ so that for finite and non-zero $B$ and $\overline{B}$ the right hand side of equation (\ref{eq:2}) vanishes, and the operators $\gamma_L$ and $\gamma_R$ can be easily constructed from $B$, $\overline{B}$ and $\gamma_5$. In principle, any of these chiral symmetries could be used, each with their own conserved current to take. Which of these chiral symmetries is `correct'? In a sense, they all are, since they all give the correct continuum limit and the ambiguities disappear in that limit. But the additional symmetries may yet cause difficulties at finite lattice spacing. The problem is that the generators of these symmetries are not independent, so in principle at non-zero lattice spacing the different currents will mix and one has to consider the entire infinite group of chiral symmetries. The reason for this infinite group is that $\gamma_L$ and $\gamma_R$ are not restricted to conjugate operators in Euclidean space, unlike in Minkowski space. This means that under standard $\mathcal{CP}$ symmetry, one representation of the chiral symmetry will transform into a different representation, and these two representations need not give the same physics at finite lattice spacing, as discussed in~\cite{Mandula:2007jt,Mandula:2009yd}. This is the root of the problem of $\mathcal{CP}$ symmetry on the lattice. These differences will disappear in the continuum limit. It is also strongly suggested that the differences will not be present in the renormalised theory, because the different chiral symmetries arise from different representations of the renormalisation group derivation of the overlap operator~\cite{Cundy:2009ab}. However, for the lattice $\mathcal{CP}$ problem to be removed, either some $\mathcal{CP}$-symmetric formulation of lattice chiral symmetry much be constructed or $\mathcal{CP}$-symmetry treated using an analogue to the Ginsparg-Wilson procedure. Since the best lattice theory respects, as much as possible, the symmetries of the continuum theory, the preferred chiral symmetry would be one that does respect $\mathcal{CP}$-symmetry, if such a symmetry exits. 

It is tempting to suggest that since the 'real world' (Minkowski) has conjugate fermion variables, the best chiral symmetry should be one where the Euclidean variables are transformed in some symmetric way. Two solutions consistent with conjugate fermion fields were considered in~\cite{Mandula:2007jt,Mandula:2009yd}, $\gamma_L = \gamma_R = \gamma_5(1-D/2)$ and $\gamma_L = (1-D)^{1/2}\gamma_5; \gamma_R= \gamma_5 (1-D)^{1/2}$; however the first of these is not invertible and the second is ill-defined. A third group of chiral symmetry transformations was described in~\cite{Fujikawa:2002vj}, where $\gamma'_L = (1-sD)\gamma_5$, $\gamma'_R = \gamma_5 (1-(1-s)D$, and $\gamma_L = \gamma'_L/\sqrt{(\gamma'_L)^2}$, $\gamma_R = \gamma'_R/\sqrt{(\gamma'_R)^2}$. It was noted in that work that the symmetric version, at $s = 1/2$, contains potential difficulties concerning locality due to the zero mode doublers (with eigenvalue $\lambda = 2/a$) of the Dirac operator. Thus finding a lattice chiral symmetry with symmetric $\gamma$ operators seems problematic; and two no-go theorems have been constructed to demonstrate its impossibility~\cite{Fujikawa:2002is,Jahn:2002kg}.\footnote{However, both these no go theorems seem to assume that the chiral projectors are continuous functions of the parameter used to interpolate between $\gamma_L$ and $\gamma_R$. For example, if we write that $\gamma^{(\upsilon)}_R = \sign((1+\upsilon)\gamma_5 - (1-\upsilon)\sign(K))$, and $\gamma_L^{(\upsilon)} = \sign((1-\upsilon)\gamma_5 - (1+\upsilon)\gamma_5\sign(K)\gamma_5)$, then these operators are both discontinuous at $\upsilon = 0$ in the presence of zero modes (this can be shown by considering the zero mode doublers). \cite{Fujikawa:2002is} shows that if $\gamma_R^{(\upsilon)}\gamma_5 = \gamma_5\gamma_L^{(\upsilon)}$, then the projection operator is non-local. $\mathcal{CP}$ symmetry converts $\gamma_R^{(\upsilon)} \rightarrow - \gamma_L^{(-\upsilon)}$ and the $\mathcal{CP}$ breaking term is of order $\upsilon$. The $\mathcal{CP}$ symmetric case at $\upsilon = 0$ seems to be ruled out. However, if the the operators are discontinuous at $\upsilon = 0$ then the $\mathcal{CP}$ breaking can be removed by taking a limit towards $\upsilon \rightarrow 0$ while always maintaining a difference between the two operators. The more sophisticated analysis of \cite{Jahn:2002kg} makes the same assumption. It demonstrates that the two projectors must be in two topologically distinct states, and assumes that the only winding comes from the discontinuity in the Dirac operator. If however, the projectors are also discontinuous, then there are additional discontinuities in the function $g$ used to to interpolate between the projector $1\pm \gamma_R$ and the projector $1\mp \gamma_L$. This means that the difference of the Chern index of the two projectors need not be the topological charge associated with the Dirac operator. However, I have not found any local operators which sidestep the no-go theorem by these means, and my approach in this work takes an entirely different direction.}

In~\cite{Igarashi:2009kj}, a different approach to the lattice $\mathcal{CP}$ problem was proposed: to modify the way that lattice fermion fields transform under $\mathcal{CP}$. However, although the approach works, it is unclear from that work why this approach should be taken. Ideally, the solution to lattice $\mathcal{CP}$ should flow naturally from the Ginsparg-Wilson relation. The solution of~\cite{Igarashi:2009kj} also only discussed the standard Ginsparg-Wilson relation $\gamma_L = \gamma_5; \gamma_R = \gamma_5 (1-D)$ without any reference to the larger group of chiral symmetry transformations. I shall later re-derive and extend their result based on a different approach inspired by the methods of Ginsparg and Wilson.

In~\cite{Cundy:2009ab}, my considerations of the renormalisation group led me towards a `natural' group of lattice chiral symmetry transformations. The symmetry transformations were expressed in terms of a continuum form of the overlap operator and the usual continuum Dirac operator, $D_0$, $\gamma_R = D^{-(1+\eta)/2} Z D_0^{(1+\eta)/2}\gamma_5D_0^{-(1+\eta)/2} Z^{\dagger}D^{(1+\eta)/2}$, where $\eta$ is an arbitrary tunable parameter and the unitary operator $Z$ projects the eigenvectors of $D_0$ onto the eigenvectors of $D$. Several possible choices of $Z$ are given in~\cite{Cundy:2009ab}; the simplest (though not practical or obviously local) is $Z=\ket{g_i^+}\bra{\tilde{g}_i^+} + \ket{g_i^-}\bra{\tilde{g}_i^-}$, where $\ket{g_i^{\pm}}$ are the chiral eigenvectors of $D^{\dagger}D$ and $\ket{\tilde{g}_i^{\pm}}$ are (in some sense corresponding) chiral eigenvectors of $D_0^{\dagger}D_0$. This choice of $Z$ obviously commutes with $\gamma_5$ ($[Z,\gamma_5]=0$ ensures that $D = \overline{B}^{(\eta)}D_0B^{(\eta)} = \overline{B}^{(-\eta)}D_0B^{(-\eta)}$ is $\gamma_5$-Hermitian). The blockings used to generate this $\gamma_R$ are $B^{(\eta)}=D_0^{-(1+\eta)/2}Z^{\dagger}D^{(1+\eta)/2}$ and $\overline{B}^{(\eta)} = D^{(1-\eta)/2}ZD_0^{-(1-\eta)/2}$ with $\alpha = \infty$. Using the matrix representation of the eigenvalue decomposition of $D$, as discussed in appendix \ref{app:eigenvalues}, and using the simplification that the matrix components of $D_0$ are real, then, as outlined in appendix \ref{app:eigenvalues} (equations \ref{eq:Cga} - \ref{eq:Cgb}), the class of $\gamma$ matrices can be expressed as
\begin{align}
{\gamma}^{(\eta)}_R = & \gamma_5 \cos\left[\frac{1}{2}(1+\eta)(\pi - 2 \theta)\right] + \sign(\gamma_5 (D^{\dagger}-D)) \sin\left[ \frac{1}{2}(1+\eta)(\pi - 2 \theta)\right]\nonumber\\
{\gamma}^{(\eta)}_L = & \gamma_5 \cos\left[\frac{1}{2}(\eta-1)(\pi - 2 \theta)\right] + \sign(\gamma_5(D^{\dagger}-D)) \sin\left[ \frac{1}{2}(\eta-1)(\pi - 2 \theta)\right]\label{eq:8a}\\
\tan\theta =& 2 \frac{\sqrt{1-a^2D^{\dagger} D/4}}{\sqrt{a^2 D^{\dagger} D}}\label{eq:gamma_eta0}
\end{align}
Note that $D^{\dagger}D$ commutes with both $\gamma_5$ and $\sign(K)$. $\gamma_5 = \gamma_R^{(-1)} = \gamma_L^{(1)}$. $\gamma_L^{(\eta)} = \gamma_5 \gamma_R^{(-\eta)}\gamma_5 \forall\eta$. The standard form of lattice chiral symmetry is the solution at $\eta = \pm 1$, while the symmetric form is at $\eta = 0$:
\begin{align}
\gamma^{(0)}_L =& \gamma_5 \sqrt{1-a^2D^{\dagger}D/4} + \frac{a}{4}\gamma_5 (D-D^{\dagger})(1-a^2D^{\dagger}D/4)^{-1/2}\nonumber\\
\gamma^{(0)}_R =& \gamma_5 \sqrt{1-a^2D^{\dagger}D/4} - \frac{a}{4}\gamma_5 (D-D^{\dagger})(1-a^2D^{\dagger}D/4)^{-1/2}.\label{eq:glgr}
\end{align}
This symmetric form can also be written as $\gamma^{(0)}_R = \sign(\gamma_5 - \sign(K))$.  
It is immediately clear that there are potential problems with locality, since $(D-D^{\dagger})(1-a^2D^{\dagger}D/4)^{-1/2}$ is a doubler free Dirac operator which apparently anti-commutes with $\gamma_5$\footnote{The ambiguity in the definition of $\gamma_R^{(0)}$ for those eigenvectors of $D$ and $D^{\dagger}$ with eigenvalue two means that, depending on how these eigenvectors are treated, this operator may not, in fact, anti-commutate with $\gamma_5$. This is why the anti-commutation may only be apparent.}, and by the Nielson-Ninoyama theorem is therefore non-local. A discussion of the locality of these operators is thus essential, and will be given in section \ref{sec:3}. In equation (\ref{eq:gamma_eta0}) the difficulties occur when $D$ has an eigenvalue of 2, which corresponds to $\theta = 0$. For the zero modes of $D$, $\theta = \pi/2$ and the second term in equation (\ref{eq:8a}) is zero: there is no ambiguity in the definition of the operator in this case. Any problems with an ambiguous definition, and the related issue of locality, are avoided if $\eta$ takes an odd integer value, because the coefficient of $\sign(\gamma_5(D-D^{\dagger}))$ is zero for both the zero modes and the eigenvalues at $\lambda = 2/a$. 

At other $\eta$, these operators have one obvious problem: they are ill defined at eigenvalues of $D^{\dagger}D = 4$. These are the zero mode doublers, which are inevitable if we move outside the trivial topological sector. It is tempting to say that this is unimportant, given that there are solutions which are valid, but doing so leads to certain problems which we shall encounter later, in the definition of $\mathcal{CP}$-symmetry on the lattice. In practice, for non-odd integer values of $\eta$ one can create a well defined operator by deflating the zero mode doublers, and replacing them with $\gamma_5$, for example by writing (for non-integer $\eta+\epsilon$, where $\epsilon$ is some infinitesimal real number used to shift the ambiguity away from integer $\eta$)
\begin{align}
{\gamma_R'}^{\eta} =& \gamma_R^{\eta} (1-\ket{\psi_2}\bra{\psi_2}) - \sign(\sin({(\eta+\epsilon)\pi}/{2})) \gamma_5\ket{\psi_2}\bra{\psi_2}\nonumber\\
 {\gamma_L'}^{\eta} =& \gamma_L^{\eta} (1-\ket{\psi_2}\bra{\psi_2}) +\sign(\sin({(\eta+\epsilon)\pi}/{2})) \gamma_5\ket{\psi_2}\bra{\psi_2}.\label{eq:gamma_eta1}
\end{align}
This will not, of course, be local for all but certain specific values of $\eta$.

There is, however, an additional concern involving the zero mode doublers during the construction of these blockings. To write the $\gamma$ matrices solely in terms of the lattice operator, one needs to match eigenvectors between the standard continuum operator, and the continuum equivalent of the lattice operator. This is achieved by the operator $Z$ in the definition of the blocking. But there is no continuum equivalent to the zero mode doublers. We can neglect this, by using the infinite number of variables in the continuum to hide one additional lattice mode, and this is what I did in~\cite{Cundy:2009ab}, and it is good enough if we only wish to consider the construction of the overlap operator and lattice chiral symmetry. For this work, I shall continue to use this approach, despite problems which will arise in the discussion of $\mathcal{CP}$ symmetry. In a subsequent article, I will explore the possibility of introducing a second `doubler' field in the continuum (with a mass of the order of the cut-off), which will allow a more careful analysis of the difficulties that arise from the zero mode doublers.

These $\gamma_5$ operators do not commute (except at $\eta_1 = \eta_2$), but satisfy the following relations\footnote{See appendix \ref{app:glgr} for proofs of this and other results given in this discussion},
\begin{align}
\gamma_R^{(\eta_1)}\gamma_R^{(\eta_2)} =& \gamma_R^{2 \eta_1 - \eta_2} \gamma_R^{\eta_1}\nonumber\\
 \gamma_L^{(\eta_2)}\gamma_L^{(\eta_1)} =& \gamma_L^{\eta_1} \gamma_L^{2 \eta_1 - \eta_2}. 
\end{align}
Additional Ginsparg-Wilson equations can also be found by addition; for example $\gamma_R =n \sum_{i} c_i \gamma_R^{(\eta_i)}$, $\gamma_L =n \sum_{i} c_i \gamma_L^{(\eta_i)}$, for arbitrary $c_i$ and $\eta_i$, and the coefficient $n$, which will be a function of $D^{\dagger}D$, is chosen to ensure that $\gamma_R^{2} = \gamma_L^2 = 1$. However, I have not yet found a renormalisation group blocking which constructs these combined operators; and it is therefore unclear that the different Ginsparg-Wilson symmetries are connected to each other by the renormalisation group.

These $\gamma_5$-matrices satisfy the following properties (the proofs are either by inspection or can be found in appendix \ref{app:glgr}):
\begin{enumerate}
\item \textbf{Continuum limit:}
In the continuum limit ($a \rightarrow 0$), $\theta = \pi/2$, and $\gamma_L^{(\eta)} = \gamma_R^{(\eta)} = \gamma_5$.
\item \textbf{Hermiticity:} $\gamma_R^{(\eta)} = (\gamma_R^{(\eta)})^{\dagger}$. $\gamma_L^{(\eta)} = (\gamma_L^{(\eta)})^{\dagger}$.\footnote{More precisely, this depends on how the zero mode doublers are regulated in the matrix sign function of $\gamma_5(D^{\dagger} - D)$.}
\item \textbf{Unitarity:}  $(\gamma_R^{(\eta)})^2 = (\gamma_L^{(\eta)})^2 = 1$.
\item \textbf{Ginsparg-Wilson chiral symmetry}: $\gamma_L D + D \gamma_R = 0$.
The chiral transformations associated with this Ginsparg Wilson equation are
\begin{align}
\psi \rightarrow& e^{i\epsilon \gamma_R}\psi; & \overline{\psi} \rightarrow \overline{\psi}  e^{i\epsilon \gamma_L},\label{eq:chisym}
\end{align}
and for an infinitesimal transformation the change in the fermionic action is
\begin{gather}
\Delta S = i \epsilon \overline{\psi}(\gamma_L D + D \gamma_R) \psi = 0.
\end{gather}
The topological charge can be defined by considering the Jacobian of the chiral transformation
\begin{gather}
Q = -\frac{1}{2}\Tr(\gamma_R + \gamma_L)
\end{gather}
\item $\mathcal{CP}$\textbf{-symmetry} (see appendices \ref{app:CP} and \ref{app:eigenvalues}):
\begin{align}
\mathcal{CP:} \gamma_R^{(\eta)} =& -W(\gamma_L^{(-\eta)})^TW^{-1} \nonumber\\
\mathcal{CP:} \gamma_L^{(\eta)} =& -W(\gamma_R^{(-\eta)})^TW^{-1} \label{eq:glgrcp}
\end{align}
\end{enumerate}
The behaviour of the zero modes $\phi_0$ and their partners $\phi_2$ under  $\mathcal{CP}$ is discussed in appendix \ref{app:eigenvalues}, where it is shown that $\mathcal{CP}$ preserves the eigenvectors, $\mathcal{CP}:\ket{\phi_0} = \ket{\phi_0^{CP}} =  W^{-1}\ket{\phi_0}$, but the chirality of the eigenvectors is switched, $\bra{\phi_0}\gamma_5\ket{\phi_0} = - \bra{\phi_0^{CP}}\gamma_5\ket{\phi_0^{CP}}$. This means that when $\mathcal{CP}$ is applied to the chiral Lagrangian, $\frac{1}{4}\overline{\psi}(1-\gamma_L^{(\eta)}) D (1+\gamma_R^{(\eta)})\psi$ the contribution of the zero mode doublers remain the same: if they contributed to the original Lagrangian, the zero mode doublers of the $\mathcal{CP}$ transformed Dirac operator will contribute to the transformed Lagrangian. The only other effect of $\mathcal{CP}$ symmetry on Dirac operators and blocking operators  is to switch $\eta \rightarrow - \eta$.

\section{Lattice $\mathcal{CP}$ and the Ginsparg Wilson relation}\label{sec:2a}
As stated in~\cite{Cundy:2009ab}, overlap lattice QCD can be derived directly from the renormalisation group. This presents a natural mechanism for deriving a lattice $\mathcal{CP}$ symmetry. The generating function is
\begin{align}
Z[U,J_0,\overline{J}_0, g_0] =  \int d\psi_0 d\overline{\psi}_0 d \psi_1^{(\eta)} d \overline{\psi}_1^{(\eta)} &e^{-\overline{\psi}_0 D_0 \psi_0 - \frac{1}{4g_0^2}F_{\mu\nu}^2 - \overline{J}_0 \psi_0 - \overline{\psi}_0 J}\nonumber\\ & e^{-(\overline{\psi}_1^{(\eta)} - \overline{\psi}_0(\overline{B}^{(\eta)})^{-1})\alpha({\psi}_1^{(\eta)} - ({B}^{(\eta)})^{-1}\psi_0)}. 
\end{align}
Given that the original action is invariant, applying $\mathcal{CP}$ gives,
\begin{align}
Z[U,J_0,\overline{J}_0, g_0] = & \int d\psi_0 d\overline{\psi}_0 d \psi_1^{(\eta)} d \overline{\psi}_1^{(\eta)} e^{-\overline{\psi}_0 D_0 \psi_0 - \frac{1}{4g_0^2}F_{\mu\nu}^2 - \overline{J}_0 \psi_0 - \overline{\psi}_0 J} \nonumber\\
&e^{-({(\overline{\psi}^{(\eta)}_1)^{CP}} - {\overline{\psi}_0^{CP}}((\overline{B}^{(\eta)})^{-1})^{CP})\alpha^{CP}({({\psi}_1^{(\eta)})^{CP}} - ((B^{(\eta)})^{-1})^{CP}{\psi_0^{CP}})}. 
\end{align}
Where, as shown in equation (\ref{eq:CPfromB}) of appendix \ref{app:eigenvalues}, $\mathcal{CP}:B^{(\eta)} = (B^{(\eta)})^{CP} =  (\overline{B})^{(-\eta)}$.

The generating function is invariant under $\mathcal{CP}$ if $\alpha^{CP} = \alpha$, and
\begin{align}
\ket{(\psi_1^{(\eta)})^{CP}} =& -W^{-1}(\bra{\overline{\psi}_0}(\overline{B}^{(-\eta)})^{-1})^{T} = -W(\bra{\overline{\psi}_1^{(\eta)}}\overline{B}^{(\eta)} (\overline{B}^{(-\eta)})^{-1})^T \nonumber\\
\bra{(\overline{\psi}_1^{(\eta)})^{CP}} =& (({B}^{(-\eta)})^{-1}\ket{{\psi_0}})^{T}W^{-1} = ((B^{(-\eta)})^{-1} B^{(\eta)} \ket{\psi_1^{(\eta)}})^TW^{-1} .\label{eq:19}
\end{align}
I define
\begin{align}
(B^{(-\eta)})^{-1}B^{(\eta)} =& \hat{\gamma}_R^{(-1-\eta)}\gamma_5\nonumber\\
\overline{B}^{(\eta)} (\overline{B}^{(-\eta)})^{-1} = &\gamma_5\hat{\gamma}_R^{(-\eta-1)}.\label{eq:dontknowwhattocallthis}
\end{align}
For the non-zero eigenvectors of $D$, equation (\ref{eq:dontknowwhattocallthis}) follows from the eigenvalue decomposition of $B$ and $\overline{B}$ used to derive $\hat{\gamma}_L$ and $\hat{\gamma}_R$. For example, if we match the eigenvalues so that $\lambda = \lambda_0$,
\begin{align}
\overline{B}^{(\eta)} =& \left(\frac{\lambda_2}{\lambda_0}\right)^{(1-\eta)/2}\left(\begin{array}{l l}
\cos((1-\eta)(\theta - \pi/2)/2)&\sin((1-\eta)(\theta - \pi/2)/2)\\
-\sin((1-\eta)(\theta - \pi/2)/2)&\cos((1-\eta)(\theta - \pi/2)/2)\end{array}\right) \nonumber\\
\overline{B}^{(\eta)} (\overline{B}^{(-\eta)})^{-1} =& \left(\begin{array}{l l}
\cos((\eta)(\theta - \pi/2)/2)&\sin((\eta)(\theta - \pi/2)/2)\\
-\sin((\eta)(\theta - \pi/2)/2)&\cos((\eta)(\theta - \pi/2)/2)\end{array}\right)
\nonumber\\
=&\gamma_5 \overline{B}^{(-1-\eta)}\gamma_5(\overline{B}^{(-1-\eta)})^{-1}  = \gamma_5\hat{\gamma}_R^{(-\eta-1)}.\label{eq:BB}
\end{align}
Equation (\ref{eq:dontknowwhattocallthis}) can also be extended to the zero modes and zero mode doublers; although except when $\eta$ is an even integer it will not be possible to give a local closed form that describes these operators both for the zero mode doublers and the non-zero modes. Combining equations (\ref{eq:19}) and (\ref{eq:dontknowwhattocallthis}) gives
\begin{align}
\hat{\gamma}_R^{(-\eta - 1)} =& \left[\gamma_5 \cos(\eta(\theta - \pi/2)) - \sign(\gamma_5(D-D^{\dagger}))\sin(\eta(\theta - \pi/2))\right](1-\ket{\phi_2}\bra{\phi_2}) + \gamma_5 \ket{\phi_2}\bra{\phi_2}\nonumber\\
\mathcal{CP}: \overline{\psi}^{(\eta)} = & \left(\hat{\gamma}_R^{(-\eta - 1)} \gamma_5\psi^{(\eta)}\right)^T
\nonumber\\
\mathcal{CP}: {\psi}^{(\eta)} = &- \left(\overline{\psi}^{(\eta)}  \gamma_5 \hat{\gamma}_R^{(-\eta - 1)}\right)^T
\end{align}
It is not necessary that these transformations are local since the fermion fields themselves need not be local. After all, even in the continuum the parity operation is not local. It is, however, necessary that lattice $\mathcal{CP}$ smoothly reduces to continuum $\mathcal{CP}$ as the limit of zero lattice spacing is taken.
Using equations (\ref{eq:resultneededinCP}) and (\ref{eq:secondresultneededinCP}), and noting that the $\mathcal{CP}$ transformation of the zero mode doublers has the opposite chirality, it can be shown that both the standard and chiral gauge fermionic Lagrangians are invariant under lattice $\mathcal{CP}$:
\begin{align}
\mathcal{CP}:&\overline{\psi} D \psi = \left(\overline{\psi} \gamma_5 \hat{\gamma}_R^{(-\eta - 1)} D\hat{\gamma}_R^{(-\eta - 1)} \gamma_5 \psi\right)^T \nonumber\\
&= \overline{\psi} D \psi\nonumber\\
 \mathcal{CP}:&\frac{1}{2}\overline{\psi} D(1+\gamma_R^{(\eta)}) \psi \nonumber\\
&=\frac{1}{2} \left(\overline{\psi} \gamma_5 \hat{\gamma}_R^{(-\eta - 1)} (1-\gamma_L^{(-\eta)})D\hat{\gamma}_R^{(-\eta - 1)} \gamma_5 \psi\right)^T \nonumber\\
&= \overline{\psi}  D(1+\gamma_5 \hat{\gamma}_R^{(-\eta - 1)}\gamma_R^{(-\eta)}\hat{\gamma}_R^{(-\eta - 1)} \gamma_5) \psi  \nonumber\\
&= \frac{1}{2}\overline{\psi} D(1+\gamma_R^{(\eta)})\psi\label{eq:psiCP}
\end{align}
This formulation generalises the approach suggested in~\cite{Igarashi:2009kj}.

This means that any of these primary formulations of a lattice chiral gauge theory (as opposed to the secondary formulations derived from combinations of $\gamma_{L,R}^{(\eta)}$ derived by adding two or more solutions together) are $\mathcal{CP}$ invariant if the correct lattice $\mathcal{CP}$ symmetry is used.

This approach can be easily extended to the additional Ginsparg-Wilson operators based on the overlap operator described in~\cite{Cundy:2008cs} which are related to the overlap operator by additional Ginsparg-Wilson transformations~\cite{Cundy:2008cn}.
\section{The effects of $\mathcal{CP}$ violation}\label{sec:2b}
The effects of $\mathcal{CP}$ violation in a chiral gauge theory were discussed in~\cite{Fujikawa:2002vj}. They discovered three effects: (1) an overall constant phase in the generating functional; (2) an overall constant coefficient in the fermion generating functional; and (3) a shift in the quark propagator in external lines and when connected to Yukawa vertices. I shall discuss this last point, which could, in principle, lead to non-localities when coupled to a Higgs field, in section \ref{sec:8b}. In this section, I discuss the measure of the chiral gauge theory and the fermion propagator.

\subsection{Construction of Weyl fermion fields}\label{sec:4.0}
Unlike the standard action, the functional form of the lattice Weyl action will depend on the blocking used to convert from the continuum to the lattice. With this caveat, a lattice Weyl action can be easily constructed from a continuum Weyl action:
\begin{align}
\frac{1}{2}\overline{\psi}_0 D_0 (1+\gamma_5)\psi_0 =& \frac{1}{2}\overline{\psi}_1 \overline{B}^{(\eta)} D_0 B^{(\eta)}(1+ (B^{(\eta)})^{-1}\gamma_5B^{(\eta)})\psi_1\nonumber\\
=&  \frac{1}{2}\overline{\psi}_1^{(\eta)} D (1+\gamma_R^{(\eta)}) \psi_1^{(\eta)}.
\end{align}
This is therefore the particular lattice Weyl action associated each blocking.

The prescription to consider changes in the measure caused by infinitesimal changes in the gauge field was developed by Martin Luscher in \cite{Luscher:1998du,Luscher:1999un}. We can select a set of basis vectors $v_0$ and $\overline{v}_0$ which satisfy $\hat{\gamma}_R^{\eta} v_0 = v_0$ and $\overline{v}_0 \hat{\gamma}_L^{\eta}  = \overline{v}_0$. Under an infinitesimal change in the gauge field,
\begin{align}
\delta_{\xi}U_{\mu}(x) =& \xi(x)U_{\mu}(x);&\xi(x) =& \xi^a(x) T^a,
\end{align}
the Jacobian for the change in the measure for the physical fields is
\begin{gather}
e^{-i\mathfrak{L}_{\xi}^{(\eta)}},\nonumber
\end{gather}
where
\begin{gather}
\mathfrak{L}_{\xi}^{(\eta)} = i\sum_i [(v_0,\delta_{\xi} v_0) + (\delta_{\xi} \overline{v}_0,\overline{v}_0)],\label{eq:Ljac}
\end{gather}
and $\delta_{\xi} v_i$ is the infinitesimal change in the basis vector associated with the change in the gauge field. Then, to consider the change in the measure for a change in the gauge field and the basis one computes the Wilson line across a trajectory in the space of gauge fields
\begin{align}
W=&e^{i\int dt \mathfrak{L}^{(\eta)}_{\xi}}& \xi(x) =& \partial_t (U^t_{\mu}(x)) (U^t_{\mu}(x))^{-1}.\label{eq:wilsonline}
\end{align}

\subsection{The fermionic measure}\label{sec:4.1}

The eigenvalues of the overlap operator come in pairs, either non-zero pairs $\phi_+^{(\eta)}$ and $\phi_-^{(\eta)}$, where $H^2 \phi_+^{(\eta)} = \lambda^2\phi_+^{(\eta)}; H^2 \phi_-^{(\eta)} = \lambda^2\phi_-^{(\eta)}$, or the zero modes and their partners. Here $\phi_+^{(\eta)}$ and $\phi_-^{(\eta)}$ are chosen to be eigenvectors of $\gamma_R^{(\eta)}$, while we can equally construct two vectors  $\overline{\phi}_+^{(\eta)}$ and $\overline{\phi}_-^{(\eta)}$ which are eigenvectors of $\gamma_L^{(\eta)}$. The fermion fields can be written in terms of these vectors,
\begin{align}
\ket{\psi} =& \sum c_{-i}^{(\eta)} \ket{\phi_{-i}^{(\eta)}} +\sum c_{+i}^{(\eta)} \ket{\phi_{+i}^{(\eta)}} + \sum c_{0i}\ket{\phi_{0i}} + \sum c_{2i}\ket{\phi_{2i}}  \nonumber\\
\bra{\overline{\psi}} =&\sum \overline{c}^{(\eta)}_{-i} \bra{\overline{\phi}_{-i}^{(\eta)}} +\sum \overline{c}_{+i}^{(\eta)} \bra{\overline{\phi}_{+i}^{(\eta)}} + \sum \overline{c}_{0i}\bra{\overline{\phi}_{0i}} + \sum \overline{c}_{2i}\bra{\overline{\phi}_{2i}}\label{eq:21} 
\end{align}
which allows the measure to be defined in terms of $c^{(\eta)}$ and $\overline{c}^{(\eta)}$. Each basis, which I shall label by $\omega$, is only defined up to a phase $e^{i\theta[\omega,U,\eta]}$. An infinitesimal change in the projection operators, whether from the gauge field or a change in $\eta$, will induce an infinitesimal change in the eigenvectors $\delta \phi^{(\eta)}$ unless there is a change in the topological index, and will change the fermion variables from $c$ to $c'$, where, for example,
\begin{align}
\psi =& \sum c_i \phi_i = \sum c'_i (\phi^{(\eta)}_i + \delta \phi^{(\eta)}_i),\nonumber\\
 c_i  =& \sum c'_j (\delta_{ij} + (\phi^{(\eta)}_j,\delta \phi^{(\eta)}_i).
\end{align}
Using the result $(\phi_i,\delta\phi_i)=0$, the Jacobian for this transformation is $\det(\delta_{ij} + (\phi_j,\delta\phi_i)) = \exp(\Tr\log(\delta_{ij} +(\phi_j,\delta\phi_i))) = \exp(\Tr (\phi_j,\delta\phi_i)) = 1$ (see section \ref{sec:7}), as long as the change in the gauge field induces only an infinitesimal change in the fermion field, which will be the case as long as the topological index remains constant. Thus the measure remains invariant for all continuous changes of a gauge field within a topological sector, including gauge transformations.  This result will also hold for infinitesimal changes in $\eta$. There is no discontinuity at $\eta=0$ because the contribution of the zero modes and their partners to the fermionic measure is independent of $\eta$. 


I need to show that the measure is invariant under both gauge transformations and $\mathcal{CP}$. In~\cite{Cundy:2009ab}, I discussed the question of gauge invariance of the measure for $\eta = 0$ in the absence of zero modes and their partners, and I shall generalise that argument in section \ref{sec:7}. Here, I shall concentrate on the $\mathcal{CP}$ invariance of the measure. In~\cite{Fujikawa:2002vj}, an analysis based on L\"uscher's approach concluded that is was possible to construct a basis where the measure was $\mathcal{CP}$-invariant.\footnote{Some care needs to be taken with this approach because the construction of the Wilson line in equation (\ref{eq:wilsonline}) is only valid if the gauge field does not cross the topological index boundary, since the basis changes discontinuously when there is a change in the topological index. Except for trivial topology, $U$ and $U^{CP}$ will be in different topological sectors.} Here, I will take a different approach based on a specific construction of the basis. Here the basis vectors $v_R^{(\eta)}$ and $v_L^{(\eta)}$ are constructed as the eigenvectors of the projectors $P^{(\eta)}_R = (1+\hat{\gamma}_R^{(\eta)})/2$ and $P^{(\eta)}_L = (1-\hat{\gamma}^{(\eta)}_L)/2$ so that they satisfy $P^{(\eta)}_R v_R^{(\eta)} = v_R^{(\eta)}$, $P^{(\eta)}_L v_L^{(\eta)} = v_L^{(\eta)}$. 

Using ${\gamma}_L^{(\eta)} = \hat{\gamma}_R^{(-\eta - 1)}{\gamma}_R^{(\eta)}\hat{\gamma}_R^{(-\eta - 1)}$, and with the Weyl fermion fields, $\psi_+$ and $\psi_-$ restricted to the chiral sector containing zero modes\footnote{The fermion fields in the opposite chiral sector can be treated in the same way.}, the basis given in equation (\ref{eq:21}) can be defined as
\begin{align}
\ket{\psi_+} =& \sum_i c_i^{(\eta)} \ket{\phi^{(\eta)}_{+,i}} + c_0 \ket{\phi_0}\nonumber\\
\bra{\overline{\psi}_-} =& \sum_i \overline{c}_i^{(\eta)} \bra{\phi^{(\eta)}_{-,i}}\hat{\gamma}_R^{(-\eta - 1)}  + \overline{c}_0 \bra{\phi_0},\label{eq:basis7}
\end{align}
with a measure
\begin{gather}
\prod_i c_i \overline{c}_i c_0 \overline{c}_0. 
\end{gather} 
The basis used in the $\mathcal{CP}$ transformed action will be constructed from the eigenvectors of $\hat{\gamma}_L^{(-\eta)} = \gamma_5 \hat{\gamma}_R^{(\eta)} \gamma_5$ and $\hat{\gamma}_R^{(-\eta)} = \gamma_5\hat{\gamma}_R^{(\eta - 1)} \hat{\gamma}_R^{(\eta)}\hat{\gamma}_R^{(\eta -1)} \gamma_5$, giving 
\begin{align}
-\ket{\psi^{CP}_+} = & \sum_i c_i^{CP} W^{-1}\left(\bra{\phi^{(\eta)}_{+,i}}\hat{\gamma}_R^{(\eta - 1)}\gamma_5\right)^T + c_0^{CP} \ket{\phi_0^{CP}}\nonumber\\
\bra{\overline{\psi}^{CP}_-} = & \sum_i \overline{c}^{CP}_i  \left(\gamma_5\ket{\phi^{(\eta)}_{-,i}}\right)^T W + \overline{c}_0^{CP} \bra{\phi_0^{CP}}\label{eq:basisCP7}
\end{align}

The basis is constructed in terms of the eigenvectors of the Dirac operator. In appendix \ref{app:eigenvalues}, I show that, for each non-zero eigenvalue pair, the eigenvectors of $\hat{\gamma}_R^{(\eta)}$ can expressed in terms of the non-zero eigenvalues of $H=\gamma_5 + \sign(K)$, 
\begin{align}
\ket{\phi_+^{(\eta)}} =&\cos \alpha^{(\eta)} \ket{H_+} + \sin \alpha^{(\eta)} \ket{H_-}\nonumber\\
\ket{\phi_-^{(\eta)}} =&\cos \alpha^{(\eta)} \ket{H_-} - \sin \alpha^{(\eta)} \ket{H_+}\nonumber\\
\alpha^{(\eta)} =&(\theta + (\eta + 1) (\pi/2 - \theta))/2\label{eq:phialpha}
\end{align} 
where
\begin{align}
(\gamma_5 + \sign(K))|H_{+i}\rangle =& \lambda_i |H_{+i}\rangle\nonumber\\
(\gamma_5 + \sign(K))|H_{-i}\rangle =& -\lambda_i |H_{-i}\rangle.
\end{align}
For simplicity, I will usually suppress the eigenvalue index.
As shown in appendix \ref{app:eigenvalues}, under $\mathcal{CP}$, the eigenvectors transform as
\begin{align}
\mathcal{CP}:\ket{H_+[U]}\rightarrow \ket{H_+^{CP}} =& -W^{-1}(\bra{H_-[U^{CP}]})^T\nonumber\\
\mathcal{CP}:\ket{H_-[U]}\rightarrow \ket{H_-^{CP}} =& W^{-1}(\bra{H_+[U^{CP}]})^T.\label{eq:CPeigenvectors}
\end{align}
The zero modes and their partners transform as
\begin{align}
\mathcal{CP}:\ket{\phi_2[U]} \rightarrow \ket{\phi_2^{CP}} = & W^{-1}(\bra{\phi_2[U^{CP}]})^T\nonumber\\
\mathcal{CP}:\ket{\phi_0[U]} \rightarrow \ket{\phi_0^{CP}} = & W^{-1}(\bra{\phi_0[U^{CP}]})^T,
\end{align}
where $|\phi_2\rangle[U^{CP}] = |\phi_0\rangle [U]$. Putting together equations (\ref{eq:phialpha}) and (\ref{eq:CPeigenvectors}) and that $\mathcal{CP}$ transforms $\mathcal{CP}:\eta \rightarrow -\eta$, we have
\begin{align}
\mathcal{CP}:\ket{\phi_+^{(\eta)}}[U] =&\ket{\phi_+^{(\eta)}[U]^{CP}}= - W^{-1}\bra{\phi_-^{(-\eta)}}^T [U^{CP}])\nonumber\\
\mathcal{CP}:\ket{\phi_-^{(\eta)}}[U] =&\ket{\phi_-^{(\eta)}[U]^{CP}}=  W^{-1}\bra{\phi_+^{(-\eta)}}^T[U^{CP}].
\end{align}
Inserting the $\mathcal{CP}$ transformations of $\psi$ and $\overline{\psi}$ given in equation (\ref{eq:psiCP}) into equation (\ref{eq:basisCP7}) gives
\begin{align}
\bra{\overline{\psi}_-} = & -\sum_i c_i^{CP} \bra{\phi^{(\eta)}_{-,i}} - c_0^{CP} \bra{\phi_0}\nonumber\\
\ket{\psi_+} = & \sum_i \overline{c}^{CP}_i  \hat{\gamma}_R^{(\eta - 1)}\ket{\phi^{(\eta)}_{+,i}} + \overline{c}_0^{CP} \ket{\phi_0}.
\end{align}
Comparing this with equation (\ref{eq:basis7}) gives
the measure for the transformation as $(\det(\gamma_R^{(\eta-1)}))^2 = 1$. 

Therefore the measure of the chiral gauge field is invariant under $\mathcal{CP}$. 
\subsection{The generating function}

Observables are defined with respect to the generating functional of fermionic Green's functions. We can write
\begin{align}
&\langle \psi(x_1) \psi(x_2) \ldots\overline{\psi}(y_1)\overline{\psi}(y_2)\ldots \rangle_U = \lim_{\chi\rightarrow0}\lim_{\overline{\chi}\rightarrow0}\left(\frac{\partial}{\partial{\overline{\chi}(x_1)}}\frac{\partial}{\partial{\overline{\chi}(x_2)}} \frac{\partial}{\partial{\chi(y_1)}}\frac{\partial}{\partial{\chi(y_2)}}\ldots\right) \log Z_F\nonumber\\
&Z_F = \int d\psi d\overline{\psi} Z_F^{\omega}\nonumber\\
&Z_F[\omega,U,\eta,\chi,\overline{\chi},\psi,\overline{\psi}]= e^{i\theta[\omega,U,\eta]}
e^{-\frac{1}{2}\overline{\psi} D (1+\gamma_R^{(\eta)} )\psi - \frac{1}{2}\overline{\psi}(1-\gamma_L^{(\eta)}) \chi - \frac{1}{2}\overline{\chi}(1+\gamma_R^{(\eta)})\psi}.\nonumber\\
&=e^{i\theta[\omega,U,\eta]}e^{-\frac{1}{2} (\overline{\psi}+ \overline{\chi}D^{-1}(1-\phi_0\phi_0^{\dagger})) D(1+\gamma_R^{(\eta)} )(\psi + (1-\phi_0\phi_0^{\dagger})D^{-1}\chi) + \frac{1}{2}\overline{\chi}  (1-\phi_0\phi_0^{\dagger}) (1+\gamma_R^{(\eta)} )D^{-1}\chi}\nonumber\\
& \phantom{space}e^{\frac{1}{2}(\overline{\psi},\phi_0)(\phi_0,(1-\gamma_L^{(\eta)})\chi) + \frac{1}{2}(\overline{\chi},(1+\gamma_R^{(\eta)})\phi_0)(\phi_0, \psi)}\nonumber\\
&=e^{i\theta[\omega,U,\eta]}\det(D(1+\gamma_R^{(\eta)})(1-\phi_0\phi_0^{\dagger})) (\overline{\chi}\frac{1}{2}(1+\gamma_R^{(\eta)}),\phi_0)(\phi_0\frac{1}{2}(1-\gamma_L^{(\eta)}),\chi)e^{\overline{\chi} G^{(\eta)} \chi},\label{eq:gf}
\end{align}  
where $Z_F^{\omega}$ is the generating functional defined in a particular basis of the gauge fields, and the propagator $G^{(\eta)}$ satisfies
\begin{gather}
G^{(\eta)} D= \frac{1}{2}(1+\gamma_R^{(\eta)})(1 - \phi_0 \phi_0^{\dagger}).
\end{gather} 
Under $\mathcal{CP}$, $\eta \rightarrow -\eta$ and the zero mode $\phi_0\rightarrow \phi_0^{CP}$. The generating functional is given by
\begin{align}
{Z_F^{\omega}}^{CP}&[\omega,U^{CP},-\eta,-W\overline\gamma_R^{(-\eta-1)}\gamma_5{\chi}^T,{\chi}^T\gamma_5\gamma_R^{(-\eta - 1)} W,\overline{\psi},\psi] =\nonumber\\
&e^{i\theta[\omega,U^{CP},-\eta]}
e^{-\frac{1}{2}\overline{\psi}\gamma_5\gamma_R^{(-\eta-1)} D (1+\gamma_R^{(-\eta)} )\gamma_R^{(-\eta - 1)}\gamma_5\psi}\nonumber\\& \phantom{space}e^{- \frac{1}{2}\overline{\psi}\gamma_5\gamma_R^{(-\eta-1)}(1-\gamma_L^{(-\eta)}) \gamma_R^{(-\eta - 1)}\gamma_5\chi - \frac{1}{2}\overline{\chi}\gamma_5\gamma_R^{(-\eta-1)}(1+\gamma_R^{(-\eta)})\gamma_R^{(-\eta - 1)}\gamma_5\psi}\nonumber\\
=&e^{-\frac{1}{2} (\overline{\psi}\gamma_5\gamma_R^{(-\eta-1)}+ \overline{\chi}\gamma_5\gamma_R^{(-\eta-1)}D^{-1}(1-\phi_0^{CP}(\phi_0^{CP})^{\dagger})) D(1+\gamma_R^{(-\eta)} )(\gamma_R^{(-\eta - 1)}\gamma_5\psi + (1-\phi^{CP}_0(\phi^{CP}_0)^{\dagger})D^{-1}\gamma_R^{(-\eta - 1)}\gamma_5\chi)}\nonumber\\
&\phantom{space} e^{i\theta[\omega,U^{CP},-\eta]}e^{\frac{1}{2}\overline{\chi}\gamma_5\gamma_R^{(-\eta-1)}  (1-\phi_0\phi_0^{\dagger}) (1+\gamma_R^{(-\eta)} )D^{-1}\gamma_R^{(-\eta - 1)}\gamma_5\chi}\nonumber\\
& \phantom{space}e^{\frac{1}{2}(\overline{\psi},\gamma_5\gamma_R^{(-\eta-1)}\phi_0^{CP})(\phi_0^{CP},(1-\gamma_L^{(-\eta)})\gamma_R^{(-\eta - 1)}\gamma_5\chi) + \frac{1}{2}(\overline{\chi},\gamma_5\gamma_R^{(-\eta-1)}(1+\gamma_R^{(-\eta)})\phi_0^{CP})(\phi_0^{CP}, \gamma_R^{(-\eta - 1)}\gamma_5\psi)}\nonumber\\
=&e^{i\theta[\omega,U,\eta]}\det(D(1+\gamma_R^{(\eta)})(1-\phi_0(\phi_0)^{\dagger})) (\overline{\chi}\frac{1}{2}(1+\gamma_R^{(\eta)}),\phi_0^{CP})\nonumber\\
&\phantom{some new space and really new and big space}(\phi_0\frac{1}{2}(1-\gamma_L^{(\eta)}),\chi)e^{\overline{\chi} (G^{(\eta)})^{CP} \chi},\label{eq:gfcp}
\end{align}
with the propagator
\begin{align}
D (G^{(\eta)})^{CP} =& \frac{1}{2}\gamma_5\gamma_R^{(-1-\eta)}(1-\ket{\phi_0^{CP}}\bra{\phi_0^{CP}})(1+(\gamma_R^{(-\eta)}))\gamma_R^{(-1-\eta)}\gamma_5\nonumber\\
=&\frac{1}{2}(1-\ket{\phi_0}\bra{\phi_0})(1+(\gamma_R^{(\eta)})).\nonumber\\
=&D (G^{(\eta)})
\end{align}
Thus, comparing equations (\ref{eq:gfcp}) and (\ref{eq:gf}) with the correct lattice formulation of $\mathcal{CP}$ symmetry, the chiral gauge propagator and generating functional are invariant under $\mathcal{CP}$.
    
\section{Locality of the $\gamma_5$ operators}\label{sec:3}
Clearly, for this work to be valid, $\gamma_L^{(\eta)}$ and $\gamma_R^{(\eta)}$ must be local. I outlined an argument why $\tilde{\gamma}_{L,R}$ should be local at $\eta=\pm 1$ in~\cite{Cundy:2009ab}, although I did not prove it for the general case. Here I argue that these operators are local for odd integer $\eta$.

The potential problem in ${\gamma}_R$ is the presence of eigenvalues of $D$ at exactly 2, where both the denominator and numerator in the second term of equation (\ref{eq:glgr}) are zero, or (equivalently) the kernel of the sign function is zero. At odd integer $\eta$, this term does not contribute so there is not this difficulty. It should first be observed that $\gamma_L$ and $\gamma_R$ themselves are finite: this is obvious from $\gamma_L^2 = 1$ and $\gamma_R^2 = 1$.  However, the presence of the square roots leads to concerns that there may be branch cuts in the Fourier transform of the Dirac operators. 

I shall also only consider ${\gamma}_R$ in the topological trivial sector (in the presence of zero modes, the argument following is invalid since I am neglecting the outer sign function of $\gamma_R^{(\eta)}$).
I note that, using equation (\ref{eq:notabene}),
\begin{align}
{\gamma}_R^{(\eta -2)} = \gamma_5{\gamma}_R^{(1)}{\gamma}_R^{(\eta)} = (1-D)\gamma_R^{(\eta)}.
\end{align}
If we assume, following already well-established results, that $D$ and $D^{\dagger}$ are local on a smooth enough gauge field that the overlap operator is local, it is sufficient to consider the cases when $\eta = 1$ and $\eta = 0$ and the locality (or otherwise) of $\gamma_R^{(\eta)}$ for all integer $\eta$ follows by induction. The locality (or non-locality) of $\gamma_L$ can then be demonstrated using $\gamma_L^{(\eta)} = \gamma_5 \gamma_R^{(-\eta)} \gamma_5$.

\subsection{The Paley-Wiener theorem}

If we consider a function $F(x)$ in the continuum, then we can construct the Fourier transform $\tilde{F}(p)$ according to
\begin{align}
\tilde{F}(p) =& \int_{-\infty}^{\infty} d^4x F(x) e^{i(p,x)}\nonumber\\
F(x) = & \frac{1}{(2\pi)^4} \int_{-\infty}^{\infty} d^4p \tilde{F}(p) e^{-i(p,x)}.
\end{align}
If $F$ possesses an O(4) rotational symmetry, then it suffices to only consider $x$ in one particular direction, for example along $p_0$
\begin{align}
F(x) = & \frac{1}{(2\pi)^4} \int_{-\infty}^{\infty} d^4p \tilde{F}(p_0,p_1,p_2,p_3) e^{-ip_0x}.
\end{align}
For $x>0$, the integral can be completed in the complex plane around the lower half circle as long as $\tilde{F}(p)$ is finite on this circle and zero at $p = \pm \infty$. If $\tilde{F}(p)$ is analytic along this contour, which, most importantly, includes being analytic along the real axis, then the integral over $p$ is then given by the sum over the residues of $\tilde{F}(p)$ and the integral around the branch cuts in the lower half complex plane. If there are no branch cuts,  and if the poles are at $\pi_i = \pi_i^{re} \pm i \pi_i^{im}$ ($\pi_i^{re},\pi_i^{im}$ both real and $\pi_i^{im} > 0$; $\pi_i$ will in general be a function of the other components of the momenta), then $F(x)$ will have the form,
\begin{gather}
F(x) = \sum_i \int d^3 p \alpha_i(\pi_i,p_1,p_2,p_3) e^{-\pi_i^{im}|x|}.\label{eq:pw1}
\end{gather}     
Assuming that when the remaining integral is calculated or estimated (for example, by using the method of steepest descent) it retains the exponential form (which will certainly occur if $\pi_i^{im}$ is constrained to be greater than some positive number $\beta$), then
\begin{gather}
|F(x)| < \alpha e^{-\beta |x|}\label{eq:pw2}
\end{gather}
 for some positive $\alpha$ and $\beta$ and $F$ is local. If there are additionally branch cuts as well as poles, from momentum $\pi_{j\;\text{start}}$ (with imaginary component closest to zero) to $\pi_{j\;\text{end}}$ then the functional form of equation (\ref{eq:pw1}) will no longer be valid, and instead we must use
\begin{align}
F(x) = \int d\theta_1 d\theta_2 d\theta_3 &\sin^2{\theta_1}\sin{\theta_2} \bigg[\sum_i \alpha_i(\pi_i,p_1,p_2,p_3) e^{-\pi_i^{im}|x|} +\nonumber\\ 
& \sum_j \int_{\pi_{j\;\text{end}}}^{\pi_{j\;\text{start}}} d\pi \alpha^j(\pi,p_1,p_2,p_3) e^{-i\pi x}\bigg].
\end{align}
Since, at each point along the integral, $ |\alpha^j(\pi,p_1,p_2,p_3) e^{-i\pi x}| < C' e^{-p_{j\;\text{start}} |x|}$ for some positive $C'$, the whole integral is smaller than $C e^{-\pi_{j\;\text{start}}|x|}$ and the equation (\ref{eq:pw2}) still holds as long as the branch cut does not cross the real axis. Thus for any function in the continuum where the Fourier transform is analytic along the real axis, the function itself is at least exponentially local. This is, of course, the Paley-Wiener theorem~\cite{Paley-Wiener}. 

On the lattice, the momentum is bounded, $|p| < \pi/a$, so before the Paley-Weiner theorem can be applied it is necessary to transform  to a new momentum variable, $\hat{p}$ which is not bounded, for example using
\begin{gather}
 \hat{p}_{\mu} \frac{a}{2} = \tan\left(\frac{a}{2}{p}_{\mu}\right). 
\end{gather}
This gives
\begin{gather}
d p_{\mu} = d \hat{p}_{\mu} \frac{1}{1 + \left(\frac{a \hat{p}_{\mu}}{2}\right)^2}.\label{eq:latticepole}
\end{gather}
After this transformation of variables, the Paley-Wiener theorem can be derived in the same way; once again the result is that if $\tilde{F}(p)$ as analytic along the real axis for $-\pi/a<\hat{p}<\pi/a$, then the resulting operator $F(x)$ will be exponentially local or better. Equation (\ref{eq:latticepole}) indicates that any lattice operator for which this construction is valid will only be exponentially local, with a rate of decay inversely proportional to the lattice spacing, and cannot be ultra-local.

\subsection{Application to $\gamma_R^{(\eta)}$}
It is useful to consider two cases separately, when $\eta$ is an odd integer, and $\eta$ is an even integer. 

The definition of $\gamma_R^{(\eta)}$ is
\begin{align}
\gamma_R^{(\eta)} =& \gamma_5 \cos((1+\eta)\pi/2)\left[\cos((1+\eta)\theta) - \frac{D^{\dagger}-D}{2\sin2\theta} \sin((1+\eta)\theta)\right]+\nonumber\\
&\gamma_5 \sin((1+\eta)\pi/2)\left[\sin((1+\eta)\theta) + \frac{D^{\dagger}-D}{2\sin2\theta} \cos((1+\eta)\theta)\right],
\end{align}
where
\begin{align}
\cos(\theta) =& \sqrt{1-D^{\dagger}D/4}\nonumber\\
\sin(\theta) = & \sqrt{D^{\dagger}D/4}.
\end{align}

\subsubsection{$\eta$ odd integer}
If $\eta$ is an odd integer, then $\sin((1+\eta)\pi/2) = 0$. The locality of $\gamma_R^{(\eta)}$ can be proved using the well-known result given in lemma \ref{lemma:5.1}
\begin{lemma}\label{lemma:5.1}
For integer $n>0$, $\cos(2n\theta) = P_C^n(\cos^2 \theta)$ and $\sin(2n\theta) = \sin (2 \theta) P_S^{n-1}(\cos^2 \theta)$ where $P_C^n$ and $P_S^n$ are polynomials of order $n$.
\end{lemma}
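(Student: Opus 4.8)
The plan is to prove the two identities simultaneously by induction on $n$, stepping from $2n\theta$ to $2(n+1)\theta$ with the angle-addition formulae. The observation that makes every expression polynomial in $\cos^2\theta$ is that both $\cos 2\theta = 2\cos^2\theta - 1$ and $\sin^2 2\theta = 4\cos^2\theta(1-\cos^2\theta)$ are polynomials in $\cos^2\theta$, of degrees $1$ and $2$ respectively, whereas $\sin 2\theta$ itself is the single ``odd'' factor that never gets squared in the sine identity and is therefore carried along unchanged.

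For the base case $n=1$ I would note that $\cos 2\theta = 2\cos^2\theta - 1$ is a degree-one polynomial in $\cos^2\theta$, so $P_C^1(x)=2x-1$, and that $\sin 2\theta = \sin 2\theta \cdot 1$, so $P_S^0\equiv 1$ has degree zero, as required.

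For the inductive step, assuming the claim holds for $n$, I would start from
\begin{align}
\cos(2(n+1)\theta) =& \cos 2n\theta\,\cos 2\theta - \sin 2n\theta\,\sin 2\theta \nonumber\\
\sin(2(n+1)\theta) =& \sin 2n\theta\,\cos 2\theta + \cos 2n\theta\,\sin 2\theta.\nonumber
\end{align}
Substituting the inductive hypotheses $\cos 2n\theta = P_C^n(\cos^2\theta)$ and $\sin 2n\theta = \sin 2\theta\,P_S^{n-1}(\cos^2\theta)$, the cosine line becomes $P_C^n(\cos^2\theta)(2\cos^2\theta-1) - \sin^2 2\theta\,P_S^{n-1}(\cos^2\theta)$, which is a polynomial in $\cos^2\theta$ once $\sin^2 2\theta$ is replaced by $4\cos^2\theta(1-\cos^2\theta)$; the sine line becomes $\sin 2\theta\,\big[P_S^{n-1}(\cos^2\theta)(2\cos^2\theta-1) + P_C^n(\cos^2\theta)\big]$, which displays exactly the single factor of $\sin 2\theta$ multiplying a polynomial in $\cos^2\theta$. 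This closes the induction at the level of ``is a polynomial of the right shape''.

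The only real obstacle is confirming that the orders come out \emph{exactly} as claimed, i.e. that the top-degree terms do not cancel. In the cosine identity the first summand has degree $n+1$ and the second has degree $2+(n-1)=n+1$, so one must check the leading coefficient is nonzero; tracking leading coefficients through the recursion shows they are positive powers of $2$ that \emph{add} constructively rather than cancel, so $P_C^{n+1}$ has degree exactly $n+1$, and the same bookkeeping gives $P_S^{n}$ degree exactly $n$. Equivalently, one can bypass the induction by identifying $P_C^n(\cos^2\theta)=T_n(2\cos^2\theta-1)$ and $P_S^{n-1}(\cos^2\theta)=U_{n-1}(2\cos^2\theta-1)$ with the Chebyshev polynomials of the first and second kind, whose degrees and leading coefficients are standard, making the degree claims immediate.
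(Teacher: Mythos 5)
Your proof is correct, but it takes a genuinely different route from the paper. The paper proves the lemma in one step via de Moivre's theorem: it expands $(\cos\theta + i\sin\theta)^{2n}$ binomially, takes real and imaginary parts, and replaces $\sin^{2k}\theta$ by $(1-\cos^2\theta)^k$, which yields explicit closed-form expressions for $P_C^n$ and $P_S^{n-1}$ with binomial coefficients. Those explicit coefficient formulas are not incidental --- they are reused later in the paper (in the construction of the conserved current $\mathfrak{J}_C^{(\eta)}$, which is written as exactly such a double sum), so the de Moivre route buys the closed form needed downstream. Your induction with the angle-addition formulae is more elementary and, to its credit, addresses a point the paper's proof leaves implicit: that the top-degree terms do not cancel, so the degree is \emph{exactly} $n$. (In the paper's closed form this is visible because the leading coefficient is $\sum_m \binom{2n}{2m} = 2^{2n-1} \neq 0$, but it is never remarked upon.) Your alternative identification $P_C^n(\cos^2\theta) = T_n(2\cos^2\theta-1)$ and $P_S^{n-1}(\cos^2\theta) = U_{n-1}(2\cos^2\theta-1)$ is the cleanest way to certify the degrees, though it does not by itself hand you the explicit binomial-sum coefficients the paper wants for the current; for that you would still need to unpack the Chebyshev polynomials, at which point you have essentially rederived the paper's formulae.
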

\begin{proof}
From de Moivre's theorem,
\begin{align}
\cos (2n\theta) =& \Re\left[(\cos \theta + i \sin \theta)^{2n}\right]\nonumber\\
=& \Re\left[\sum_{m = 0}^{2n} \frac{(2n)!}{m! (2n-m)!} \cos^{m}\theta (i\sin\theta)^{2n-m} \right]\nonumber\\
=& \sum_{m = 0}^n \frac{(2n)!}{(2m)! (2n-2m)!} (-1)^m \cos^{2m}\theta (1-\cos^{2}\theta)^{n-m}\label{eq:cosntheta}
\end{align}
and the result immediately follows. Similarly, 
\begin{gather}
\sin(2n \theta) = -\sum_{m=0}^{n-1} (-1)^m\frac{(2n)!}{(2m+1)! (2n-2m -1)!} \cos\theta \sin\theta \cos^{2m}\theta (1-\cos^2\theta)^{n-m}.\label{eq:sinntheta}
\end{gather}
\qed
\end{proof}
Therefore,
\begin{align}
\gamma_R^{(\eta)} =& \gamma_5 \cos((1+\eta)\pi/2)\left[\cos((1+\eta)\theta) - \frac{D^{\dagger}-D}{2\sin2\theta} \sin((1+\eta)\theta)\right] \nonumber\\
=&\gamma_5 \cos((1+\eta)\pi/2) \left[P_C^{((\eta+1)/2)}(\cos^2\theta) - (D^{\dagger} - D) P_S^{((\eta-1)/2)}(\cos^2\theta)\right]\nonumber\\
=&\gamma_5 \cos((1+\eta)\pi/2) \left[P_C^{((\eta+1)/2)}(D^{\dagger}D/4) - (D^{\dagger} - D) P_S^{((\eta-1)/2)}(D^{\dagger}D/4)\right].
\end{align}
Given that a converging polynomial of a local function is itself local, it is clear that, for odd integer $\eta$, $\gamma_R^{(\eta)}$ is local on all gauge field configurations where the overlap operator is local for $\eta$ an odd integer. This includes the standard case where $\eta =  1$ or $\eta = -1$.
\subsubsection{$\eta$ an even integer}
It is enough to consider the case where $\eta = 0$; the locality or non-locality of other even $\eta$ will follow as outlined above. After Fourier transforming twice, 
\begin{align}
\gamma_5\gamma_R^{(0)}(x) = &\frac{1}{(2\pi^2)^4} \int_{-\infty}^{\infty} dt \int_{-\infty}^{\infty} d^3\hat{p}d \hat{p}_0  (1-D(\hat{p})/2) e^{i 2 x\arctan(a\hat{p}_0/2)/a }\nonumber\\
&\phantom{somespace} \frac{1}{1+ (a \hat{p}_0/2)^2} \prod_i\frac{1}{1+ (a \hat{p}_i/2)^2}\frac{1}{t^2 + 1-D^{\dagger} D/4}.
\end{align}
To solve this integral is rather challenging, but an exact solution is not necessary to draw out general features, and a qualitative argument is enough to establish whether or not it is local. I proceed by using a contour integration over $\hat{p}_0$, then examining the integral over $t$ using either another contour integration or the method of steepest descent to establish the general shape of the function.

The integral over $\hat{p}_0$ has simple poles at $\hat{p}_0 = \pm i2/a$, which will not affect this argument as the poles are imaginary, and at the solutions to $t^2 + 1-D^{\dagger}(\hat{p}_0) D(\hat{p}_0)/4 = 0$, which will in general be at  $\hat{p}_0 = {\pi}^{(0)}(t,\hat{p}_1,\hat{p}_2,\hat{p}_3)$ for complex $\pi^{(0)}$. Here I will assume that there is just one such pole, but the argument can be easily extended if there are several solutions. Focusing on the pole with positive imaginary part, we can write
\begin{align}
\frac{1}{t^2 + 1-D^{\dagger} D/4} =& \frac{1}{t} \left(\frac{1}{t + i\sqrt{1-D^{\dagger} D/4}} + \frac{1}{t - i\sqrt{1-D^{\dagger} D/4}}\right)\nonumber\\
=& -\frac{1}{it} \frac{1}{(\hat{p}_0 - \pi^{(0)})\frac{\partial \sqrt{1-D^{\dagger} D/4}}{\partial \hat{p}_0} } + \ldots.  
\end{align}
Closing the contour around the appropriate semi-circle and performing the contour integration over $\hat{p}_0$ gives
\begin{align}
\gamma_5\gamma_R^{(\eta)}(x) = \ldots - & \frac{1}{\pi(2\pi)^4} \int_{-\infty}^{\infty} dt   (1-D(\hat{p}^{(0)}(t,\hat{p}_1,\hat{p}_2,\hat{p}_3))/2) \nonumber\\
&e^{-R((t,\hat{p}_1,\hat{p}_2,\hat{p}_3)) x/a + i T(t,\hat{p}_1,\hat{p}_2,\hat{p}_3) x/a} \frac{1}{1+ (a \hat{\pi}^{0}(t,\hat{p}_1,\hat{p}_2,\hat{p}_3)/2)^2} \frac{1}{t\frac{\partial \sqrt{1-D^{\dagger} D/4}}{\partial \hat{p}}},\label{eq:tt}
\end{align}
where $R$ is a positive function and $T$ another function. The $\ldots$ contains the contributions from the simple poles at $\hat{p}_i = \pm 2i/a$ and any branch cuts caused by the square roots in the definitions of $\cos\theta$ and $\sin\theta$ and the branch cut from the sign function in the overlap operator. Note that these functions are independent of the lattice spacing, since they are constructed by substituting the poles of $4t^2 + 4- D^{\dagger}(a\hat{p}_0)D(a\hat{p}_0) = 0$ into $\arctan (a \hat{p}_0/2)$. If $\pi^{(0)}$ has a large imaginary part, then $e^{i2x \arctan(\pi^{(0)}/2)}$ gives an exponential decay in $x$ (it cannot be an exponential growth because the contour for the integration must be closed in the semi-circle in the complex plane where $\Re (i2x \arctan(p_0 \cos\theta_1/2)) < 0$), and the contribution to the integrand from this particular $\pi^{(0)}$ is local. A real $\pi^{(0)}$ can only occur at $t = 0$ since $D^{\dagger}(\pi^{(0)}D^{(0)}$ is real, positive and less than or equal to 4 for real momenta. Therefore $\gamma_5\gamma_R^{(\eta)}$ may be non-local only if the integral is dominated by the section where $t$ is small. We can freely expand around $t=0$, knowing that where this expansion breaks down the effects will not affect a discussion of the locality.

 We can write $D(\pi^{(0)}) = c^a(t)T^a + i \gamma_{\mu} d^a_{\mu}(t)T^a$, where $T^a$ contains the colour structure. First I diagonalise the Dirac spinor components of this matrix. The eigenvalues of this operator are the eigenvalues of $\pm \sqrt{c^2 + d^2 + f_{abc}i\gamma_{\mu} d_{\mu}^b c^c T^a}$, which, at $D^{\dagger} D = 4 + 4t^2$ and small $t$ means that $1-D(\hat{p}^0)/2$ has eigenvalues of order $2 - (O(t^2))$ or  $t^2/2 + O(t^4)$. These two cases can be considered separately:
\begin{itemize}
\item Eigenvalues of O($t^2$): For large enough $|x|$, the integrals over $\hat{p}_1$, $\hat{p}_2$, $\hat{p}_3$ and $t$ can be performed using the method of steepest descent, expanding around the minima of $R(\pi^{(0)})$. The integral over $t$ is performed first. Here we need to find the minima of 
\begin{gather}
2 x/a \Im(\arctan (a \pi^{(0)}/2) - \log t + \Re(\log(1+(a\pi^{(0)}/2)^2), 
\nonumber
\end{gather}
which are located at
\begin{gather}
t^{-1} = \left[ x \Im \left[\frac{1}{ 1 + (a \pi^{(0)}/2)^2}\right] + \Re \frac{a \pi^{(0)}}{1 + (a \pi^{(0)} /2)^2}\right]\frac{\partial \pi^{(0)}}{\partial{t}}.\label{eq:tsol}
\end{gather}
While it is difficult to solve this exactly, nonetheless it is possible to describe the solution qualitatively.
If there is a real solution to $1-D^{\dagger} D /4 = 0$, the troublesome point, where $\pi^{(0)}(t,p_1,p_2,p_3)$ is close to the real axis, is around $t = 0$. We can therefore expand $\pi^{(0)}(t,p_1,p_2,p_3)$ around the real $\pi^{(0)}(0,\hat{p}_1,\hat{p}_2,\hat{p}_3)$, giving
\begin{gather}
-it + \sqrt{1- D^{\dagger}D/4} = 0 = 
-it + a(\pi^{(0)}(t,\hat{p}_1,\hat{p}_2,\hat{p}_3) - \pi^{(0)}(0,\hat{p}_1,\hat{p}_2,\hat{p}_3))\left.\frac{\partial\sqrt{1-D^{\dagger}D/4}}{\partial \hat{p}_0}\right|_{\pi^{(0)}}, \label{eq:56751} 
\end{gather}
thus, at small $t$, 
\begin{gather}
\Im(\arctan (a \pi^{(0)}/2) \sim \frac{\alpha t}{2}\nonumber\\
\alpha = \frac{1}{(1+(a\pi^{(0)}(0,\hat{p}_1,\hat{p}_2,\hat{p}_3)/2)^2)\frac{\partial\sqrt{1-D^{\dagger}D/4}}{\partial \hat{p}_0}}.
\end{gather}
At large enough $x$, the integrand will be exponentially suppressed for all $t$ except $tx < O(1)$, giving an integral of the form
\begin{gather}
\gamma_R^{(0)} \sim \int_0^{\infty} dt (t + \ldots) e^{-\alpha t x/a + 2ix/a(\arctan (a \pi^{(0)}(0,\hat{p}_1,\hat{p}_2,\hat{p}_3)/2) + \ldots},
\end{gather}
which will contain terms of the order of $a^2/x^2$. Thus, if there is a real solution to $1-D^{\dagger}D/4 = 0$, $\gamma_R^{(0)}$ will be non-local. If $\partial\sqrt{(1-D^{\dagger}D/4}/\partial \hat{p}_0 = 0$ at $\hat{p}_0 = \pi^{(0)}$ it would not affect the conclusions of this argument: one would simply have to incorporate higher order terms into the expansion of equation (\ref{eq:56751}), and be left with an integration such as $\int dt t e^{-x \sqrt{t}}$, which again has a power law decay in $x$.
\item Eigenvalues of O($2$): here there is a simple pole in $1/t$, and the integral over $t$ can be performed simply using contour integration. The pole at $t=0$ corresponds to a $\pi^{(0)}$ with large real part and small imaginary part. Therefore $\Im [\arctan (a \pi^{(0)}/2] \sim 0$ and for these eigenvectors $\gamma_R^{(\eta)}$ does not even have a polynomial decay in $x$.
\end{itemize} 
$\gamma_R^{(\eta)}$ for other even integer $\eta$ will also be non-local.

Therefore $\gamma_R^{(\eta)}$ is local when $\eta$ is an odd integer, and non-local if it is an even integer.\footnote{This corrects a statement I made in~\cite{Cundy:2009ab}.} Similar arguments can be developed to show that $\gamma_R^{(\eta)}$ is non-local for non-integer $\eta$.

It can be argued that this is enough for chiral symmetry on the lattice: the odd-integer $\eta$ $\gamma_5$-operators themselves form a closed group and are enough to define a chiral symmetry. However, the lattice $\mathcal{CP}$-transformation contains $\gamma_R^{(-\eta-1)}$, which is even integer and thus non-local. Does this matter? 
The continuum $\mathcal{CP}$ symmetry of the fermion fields is itself non-local, and we have no reason to desire that the fermion fields $\psi$ and $\overline{\psi}$ should be local (whatever that means in the case of the fields). No action or observable contains $\gamma_R^{(-\eta-1)}$. In these circumstances, there is no obvious reason to desire a local generator of lattice $\mathcal{CP}$ except to ensure that the lattice symmetry has a smooth limit to the continuum symmetry. And here we have a major conceptual problem: the difficulties on the lattice are concerned with the doublers of the zero modes. There is no continuum counterpart for these eigenvectors of the lattice overlap operator, and for physical momenta there is no non-analyticity in the Fourier Transform of the $\mathcal{CP}$ operators: the non analyticity occurs only at momentum of the order of the cut-off (or at infinite $\hat{p}$), and is contained within the ultra-violet physics. The difficulty is that we are trying to map a lattice theory with doublers (albeit with a mass of the order of the cut-off) to a continuum theory without doublers. One possible solution is to add additional fermion fields to the continuum action to simulate the lattice doublers, as is explored in~\cite{Cundy:newpaper}. If this is done, the portion of the overlap operator which is mapped to the physical modes has a local $\mathcal{CP}$ symmetry, and the portion which is mapped to the doublers a non-local $\mathcal{CP}$. It seems as though a non-local lattice $\mathcal{CP}$ symmetry transformation is unavoidable, but the symmetry is expressed in such a way that this non-local operator will not affect any observable.

\section{Conserved Currents and Ward identities}\label{sec:4}
Including a matrix $\lambda$ which mixes the flavours, the change in the fermion fields under an infinitesimal local chiral rotation are
\begin{align}
\delta^{(\eta)}\psi =& i \epsilon(x)\gamma_R^{(\eta)}\lambda\psi\nonumber\\
\delta^{(\eta)}\overline{\psi} =& i \overline{\psi}\gamma_L^{(\eta)}\lambda\epsilon(x),\label{eq:cc1}
\end{align}
where $\epsilon(x)$ is diagonal in the spinor and colour indices.
The change in the action is
\begin{gather}
\Delta S = i \overline{\psi}\lambda(\gamma_L^{(\eta)}\epsilon D + D \epsilon\gamma_R^{(\eta)})\psi.
\end{gather}
The conserved current can be constructed by Noether's procedure. We write the action as
\begin{align}
\Delta S =& i \overline{\psi}\lambda(\epsilon(\gamma_L^{(\eta)} D + D \gamma_R^{(\eta)}) + [D,\epsilon]\gamma_R^{(\eta)} + [\gamma_L^{(\eta)},\epsilon] D)\psi\nonumber\\
=& i \overline{\psi}\lambda([D,\epsilon]\gamma_R^{(\eta)} + [\gamma_L^{(\eta)},\epsilon] D)\psi\nonumber\\
=& \sum_{\mu,x}\partial_{\mu}\epsilon(x) J^{(\eta)}_{\mu}(x),\label{eq:current1}
\end{align}
where $\partial_{\mu}$ is the forward normal derivative. We can write that
\begin{align}
[K,\epsilon] 
 =& \partial_{\mu}\epsilon(x) T_{\mu}(x)\\
[K^n,\epsilon] = & \sum_m K^m \partial_{\mu}\epsilon T_{\mu} K^{n-m-1}
\end{align}
where 
\begin{align}
T_{\mu}(x) =& \frac{\partial}{\partial U_{\mu}(x)} {K} U_{\mu}(x) - U^{\dagger}_{\mu}(x) \frac{\partial}{\partial U_{\mu}(x)} {K}\nonumber\\
=& \partial_{\mu} \epsilon(x) \gamma_5\left[(1-\gamma_{\mu})U_{\mu}(x)- (1+\gamma_{\mu})U_{\mu}^{\dagger}(x)\right]
.
\end{align}
 Therefore,
\begin{align}
[D,\epsilon(x)] =& \gamma_5 \mathfrak{J}= \gamma_5 \frac{1}{\pi} \int_{-\infty}^{\infty} dt \frac{1}{t^2 + K^2} (t^2 \partial_{\mu}(\epsilon) T_{\mu} - K\partial_{\mu}(\epsilon) T_{\mu} K) \frac{1}{t^2 + K^2}\label{eq:current2}\\
[\gamma_R^{(\eta)},\epsilon(x)] = & \gamma_5 \mathfrak{J}_C^{(\eta)} + \mathfrak{J}_{\epsilon} \sin\left((\eta + 1)\left(\frac{\pi}{2} - \theta\right)\right) + \sign(\gamma_5(D^{\dagger} - D)) \mathfrak{J}_S^{(\eta)}, \label{eq:current3}
\end{align}
where, for odd integer $\eta$ 
\begin{align}
\mathfrak{J}_{\epsilon} = &\frac{1}{\pi}\int_{-\infty}^{\infty} dt' \frac{1}{(t')^2 + ([D,\gamma_5])^2} ((t')^2[[D,\gamma_5],\epsilon] - [D,\gamma_5][[D,\gamma_5],\epsilon][D,\gamma_5]\frac{1}{(t')^2 + ([D,\gamma_5])^2}\nonumber\\
\mathfrak{J}_{C}^{(\eta)}=&-\sum_{m=0}^{(\eta+1)/2}\;\;\sum_{k=0}^{m-1} (-1)^m \frac{(\eta+1)!}{(2m)! (\eta + 1 - 2m)!}\left(1-\frac{D^{\dagger}D}{4}\right)^k \{\gamma_5 D,\gamma_5[D,\epsilon]\}\nonumber\\
&\phantom{space}\left(1-\frac{D^{\dagger}D}{4}\right)^{m-1-k} \left(\frac{D^{\dagger}D}{4}\right)^{(\eta + 1)/2 - m} +\nonumber\\
&\sum_{m=0}^{(\eta+1)/2}\;\;\sum_{k=0}^{(\eta + 1)/2 - m-1} (-1)^m \frac{(\eta+1)!}{(2m)! (\eta + 1 - 2m)!}\left(1-\frac{D^{\dagger}D}{4}\right)^m  \left(\frac{D^{\dagger}D}{4}\right)^{k}  \nonumber\\
&\phantom{space}\{\gamma_5 D,\gamma_5[D,\epsilon]\}\left(\frac{D^{\dagger}D}{4}\right)^{(\eta - 1)/2 - m-k},
\end{align}
and $\mathfrak{J}_{S}^{(\eta)}$ can be found in the same way from the differential of
\begin{gather}
-\sum_{m=0}^{(\eta-1)/2} (-1)^m \frac{(\eta+1)!}{(2m+1)! (\eta+1-2m-1)!} \sqrt{1-\frac{D^{\dagger}D}{4}}\sqrt{D^{\dagger}D}\left(1-\frac{D^{\dagger}D}{4}\right)^m \left( \frac{D^{\dagger} D}{4}\right)^{(\eta-1)/2 - m}\nonumber
\end{gather}
To derive these expressions, $\cos(\eta+1)(\pi/2-\theta)$ and $\sin (\eta+1)(\pi/2-\theta)$ were expanded using equations (\ref{eq:cosntheta}) and (\ref{eq:sinntheta}), and the definition of $\theta$ was used from equation (\ref{eq:gamma_eta0}). The conserved current can be constructed from equations (\ref{eq:current1}), (\ref{eq:current2}) and (\ref{eq:current3}).

The Ward identity is derived by considering the expectation value of an operator $\mathcal{O}$. We can write that
\begin{gather}
\langle\mathcal{O}[\psi,\overline{\psi}]\rangle = \int d\psi d\overline{\psi} d U \mathcal{O}[\psi,\overline{\psi}] e^{-S_g[U] - S_f[\psi,\overline{\psi},U]}.
\end{gather}
When the change (\ref{eq:cc1}) in the fermion fields is applied, the expectation value is invariant, which means that the contributions at $O(\epsilon(x))$ must cancel. There are three places within the expectation value where these effects can enter the expression: from the change in the action, the operator, and the integration measure. The change from the action leads to a term 
\begin{align}
-\int d\psi d\overline{\psi} d U \mathcal{O}[\psi,\overline{\psi}] e^{-S_g[U] - S_f[\psi,\overline{\psi},U]} \partial_{\mu}(\epsilon) J^{(\eta)}_{\mu} = \epsilon(x) \langle \mathcal{O} \partial_{\mu}^* J^{(\eta)}_{\mu}(x)\rangle,
\end{align} 
where $\partial_{\mu}^*$ is the backwards non-covariant derivative and I have used the identity 
\begin{gather}\sum_x (\partial_{\mu}(\epsilon(x)) J^{(\eta)}_{\mu}(x) = -\sum_x \epsilon(x) \partial_{\mu}^* J^{(\eta)}(x).\nonumber
\end{gather}
 From the change in the operator, we obtain
\begin{gather}
\left\langle \frac{\partial \mathcal{O}}{\partial \psi(x)} (\gamma_R^{(\eta)} \psi)(x) + (\overline{\psi} \gamma^{(\eta)}_L)(x)  \frac{\partial \mathcal{O}}{\partial \overline{\psi}(x)}\right\rangle = \langle \delta^{(\eta)} \mathcal{O}[\psi,\overline{\psi}] \rangle. 
\end{gather} 
Finally, from the change in the measure, we obtain obtain the topological charge
\begin{gather}
\langle \Tr (\gamma_L^{(\eta)} + \gamma_R^{(\eta)}) \mathcal{O}\rangle = - \langle Q[U]\mathcal{O}\rangle.
\end{gather}
If the expectation values are taken in a fixed topological sector, then the Ward identity is
\begin{gather}
\langle \delta^{(\eta)} \mathcal{O}\rangle_Q - Q \langle \mathcal{O} \rangle_Q + \partial_{\mu}^*\langle  J^{(\eta)}(x) \mathcal{O}\rangle_Q.
\end{gather}
Demonstrating the locality (or otherwise) of the conserved current for general $\eta$ is clearly a non-trivial exercise, and beyond the intended scope of this work. When $\eta = \pm 1$, and the chiral symmetry transformations reduce to their canonical form, the expression for the current simplifies to the well known formula~\cite{Kikukawa:1998py}. For other $\eta$, the locality can be considered using a method similar to that of the previous section. Once again, the currents are local for odd integer $\eta$ and non-local otherwise. 

\section{Weyl Fermions on the lattice}\label{sec:7}

The construction of a Weyl fermion action is now straightforward. Using the Ginsparg-Wilson relation, we can write 
\begin{gather}
S^{(\eta)} = \overline{\psi} D \psi = \frac{1}{4}\overline{\psi}(1+\gamma^{(\eta)}_L)D(1-\gamma^{(\eta)}_R)\psi + \frac{1}{4}\overline{\psi}(1-\gamma^{(\eta)}_L)D(1+\gamma^{(\eta)}_R)\psi. 
\end{gather}
As already discussed, this chiral Lagrangian is $\mathcal{CP}$ invariant. We can write a new action in terms of Weyl fermions $\psi^{(\eta)}_+$ and $\psi^{(\eta)}_-$,
\begin{gather}
S^{(\eta)} = \overline{\psi}_+^{(\eta)} D \psi_-^{(\eta)} + \overline{\psi}_-^{(\eta)}D\psi_+^{(\eta)},
\end{gather}
where
\begin{align}
\psi_{\pm}^{(\eta)} =& \frac{1}{2}(1\pm \gamma^{(\eta)}_R) \psi;& \overline{\psi}^{(\eta)}_{\pm} =& \overline{\psi}\frac{1}{2}(1\pm \gamma^{(\eta)}_L).
\end{align}
It is not immediately clear that the measure of this transformation of the fermion variables is gauge invariant, because $\gamma_L^{(\eta)}$ and $\gamma_R^{(\eta)}$ depend on the gauge fields. If the zero modes are in the positive chiral sector (the case where they are in the negative chiral sector is analogous), then the measure can be calculated in terms of the basis used in section \ref{sec:4.1}
\begin{align}
\psi_+^{(\eta)} =& c_i^{(\eta)} \phi^{(\eta)}_{+i} + c_0 \phi_0; &\overline{\psi}_-^{(\eta)} =& \overline{c}_i\phi_{-i}^{(\eta)}\hat{\gamma}_R^{(-\eta-1)} + \overline{c}_0\phi_0 = \overline{c}_i\phi_{-i}^{(-\eta)}\gamma_5 + \overline{c}_0\phi_0.
\end{align}
The Jacobian for the change in the measure for an infinitesimal change in the gauge field is $ e^{-i\mathfrak{L}_{\xi}^{(\eta)}}$, where $\mathfrak{L}_{\xi}^{(\eta)}$ is given by equation (\ref{eq:Ljac}):
\begin{align}
\mathfrak{L}_{\xi}^{(\eta)} =& i\sum_i \left[\braket{\phi^{(\eta)}_{+i}}{\delta_{\xi} \phi^{(\eta)}_{+i}} +\braket{\delta_{\xi}\phi^{(-\eta)}_{-i}}{ \phi^{(-\eta)}_{-i}}\right] + \braket{\delta_{\xi} \phi_0}{\phi_0} + \braket{\phi_0}{\delta_{\xi} \phi_0},\label{eq:Ljac2}
\end{align} 
For the zero modes (or, equivalently their doublers if the Weyl fermion is in the chiral sector containing the doublers rather than the zero modes),
\begin{gather}
\braket{\delta_{\xi} \phi_0}{\phi_0} + \braket{\phi_0}{\delta_{\xi} \phi_0} = \delta_{\xi}\braket{\phi_0}{\phi_0} = 0.
\end{gather}
For the non-zero modes, I use equation (\ref{eq:phialpha}) to write,
\begin{gather}
(\phi^{(\eta)}_{+i},\delta_{\xi} \phi^{(\eta)}_{+i}) = \frac{\sin 2\alpha^{(\eta)}}{2}((H_{-i},\delta_{\xi} H_{+i}) + (H_{+i},\delta_{\xi} H_{-i})).  
\end{gather}
The differential of the  eigenvectors is (for example, see~\cite{Cundy:2007df}),
\begin{gather}
\delta_{\xi} \ket{H_{+i}} = (1-\ket{H_{+i}}\bra{H_{+i}}) \frac{1}{H_{+i} - \lambda_i} \delta_{\xi}H \ket{H_{+i}},
\end{gather}
which gives,
\begin{gather}
\bra{H_{-i}}\delta_{\xi} \ket{H_{+i}} + \bra{H_{+i}}\delta_{\xi} \ket{H_{-i}} = \frac{1}{2\lambda_i}(\bra{H_{-i}} \delta_{\xi} H \ket{H_{+i}} - \bra{H_{+i}} \delta_{\xi} H \ket{H_{-i}}).\label{eq:non2}
\end{gather}
For a gauge transformation, with the gauge field is in a representation $R$, 
\begin{gather}
A_{\mu} \rightarrow A_{\mu} - \partial_{\mu} \xi,
\end{gather}
the change in the Dirac operator is~\cite{Luscher:1999un}
\begin{gather}
\delta H = \gamma_5 \delta D = \gamma_5 [R(\xi),D].
\end{gather}
Inserting this relation into equation (\ref{eq:non2}) and using the relation 
\begin{gather}
 \sum_i \frac{1}{\lambda}(\ket{H_{+i}}\bra{H_{-i}} -\ket{H_{-i}}\bra{H_{+i}} )= \frac{1}{D^{\dagger}D\sqrt{1-\frac{D^{\dagger}D}{4}}} (D^{\dagger} - D)(1-\ket{\phi_0}\bra{\phi_0} - \ket{\phi_2}\bra{\phi_2})
\end{gather}
gives
\begin{gather}
(\phi^{(\eta)}_{+},\delta_{\xi} \phi^{(\eta)}_{+}) = -\frac{1}{2}\Tr_{\text{Non-zero eigenvectors}}\left[\gamma_5 R(\xi)\sqrt{1-\frac{D^{\dagger}D}{4}} \sin (2 \alpha^{(\eta)})\right].
\end{gather}
Since $R(\xi)$, $\alpha^{(\eta)}$ and $D^{\dagger}D$ commute with $\gamma_5$ this trace is zero. By repeating this argument for the remaining term in equation (\ref{eq:Ljac2}), it can be shown that $\mathfrak{L}_{\xi}^{(\eta)}=0$ and therefore the measure of the Weyl fermion is invariant under gauge transformations.

\section{Majorana Fermions on the lattice}\label{sec:8}
Majorana fermions are neutral fermions which are their own anti-particle. Although no such fermions have been found in practice, they play a role in various beyond the standard model scenarios, such as, when coupled with a Yukawa coupling, massive neutrinos and certain super-symmetric models. 
\subsection{Majorana Fermions in the continuum}
$C$ is the charge conjugation operator, which has the properties
\begin{align}
C^{-1}\gamma_5^TC =& \gamma_5\nonumber\\
C^{-1}D^T[U]C = & D[U_C] \nonumber\\
C^{\dagger}C =& 1\nonumber\\
C\gamma_{\mu} C^{-1} = - \gamma_{\mu}^T.
\end{align}
Because they are neutral particles, Majorana fermions satisfy the constrain $\psi^* = \mathfrak{B}\psi$, where $\mathfrak{B} = \gamma_5 C$. To fulfil both this condition and $(\psi^*)^* = \psi$, it is necessary to double the fermionic degrees of freedom, so that 
\begin{gather}
\psi = \left(\begin{array}{c} \psi_1\\ \psi_2\end{array}\right)
\end{gather}
we can define a charge conjugation operator in this enlarged space as
\begin{align}
\mathcal{C} =& \left(\begin{array}{c c}
0&C\\
C&0
\end{array}\right)\nonumber \\
\mathcal{B} =&\left(\begin{array}{c c}
0&\mathfrak{B}\\
-\mathfrak{B}&0
\end{array}\right)
\end{align}
with a corresponding Dirac operator $\mathcal{D}$,
\begin{align}
\mathcal{D} =&\left(\begin{array}{c c}
0&D\\
D&0
\end{array}\right).
\end{align}
This Dirac operator satisfies
\begin{align}
(\mathcal{CD})^T =& -\mathcal{CD};&(CD)^T =& -C D;& D^* =& \mathfrak{B} D \mathfrak{B}^{-1};\label{eq:CDisantisymmetric}
\end{align}
The fermionic action is given as
\begin{gather}
S_f = \int d^4x \psi^T \mathcal{CD}\psi = \int d^4x (\psi_1^T CD \psi_2 + \psi_2^T CD \psi_1).
\end{gather}
The action also contains a chirally symmetric Yukawa term, coupling the fermion field to a scalar field $S$,
\begin{gather}
S_Y[\psi,S] = g \int d^4 x \left(\psi^T \mathcal{CP}S^{\dagger}\psi + \psi^T \mathcal{C}(1-\mathcal{P})S\psi\right), 
\end{gather}
where $S$ takes the form,
\begin{gather}
S = \left(\begin{array}{l l} 
0&\upsilon\\
\upsilon & 0
\end{array}
\right),
\end{gather}
for a complex scalar field $\upsilon$,
and $\mathcal{P}$ is a projection operator,
\begin{gather}
\mathcal{P} = \frac{1}{2}(1+\Gamma_5).
\end{gather}
$\Gamma_5$ is the higher dimensional equivalent of $\gamma_5$,
\begin{align}
\Gamma_5=&\left(\begin{array}{c c}
\gamma_5&0\\
0&-\gamma_5
\end{array}\right).
\end{align}
Under an infinitesimal chiral symmetry transformation,
\begin{align}
\psi \rightarrow& (1+i\epsilon \Gamma_5)\psi; & S\rightarrow (1-2i\epsilon) S,
\end{align}
the action $S=S_f + S_Y$ remains invariant if
\begin{gather}
\Gamma_5 \mathcal{D} - \mathcal{D}\Gamma_5 = 0,
\end{gather}
which is satisfied for,
\begin{gather}
D\gamma_5 + \gamma_5 D = 0.
\end{gather}
\subsection{Lattice representation}
Suppose that we have some Ginsparg-Wilson relation,
\begin{gather}
\gamma_L^{(\eta)} D + D \gamma_R^{(\eta)} = 0,
\end{gather}
where $\gamma_L^{(\eta)}$ and $\gamma_R^{(\eta)}$ both reduce to $\gamma_5$ in the continuum limit. It is then natural to construct projection operators,
\begin{align}
\Gamma_{L,R}^{(\eta)} =& \left(\begin{array}{l l} \gamma_{L,R}^{(\eta)}&0\\
0&-\gamma_{L,R}^{(\eta)}\end{array}\right)\nonumber\\
P_{L,R} =& \frac{1}{2}\left(1+\Gamma_{L,R}^{(\eta)}\right).
\end{align}
In the higher dimensional representation, the Ginsparg-Wilson equation becomes.
\begin{gather}
\Gamma_L^{(\eta)} D - D \Gamma_R^{(\eta)} = 0.
\end{gather}
From the continuum action, we may apply the fermion blockings and use the relations
\begin{align}
(B^{(\eta)})^T C =& C \overline{B}^{(-\eta)}\nonumber\\
\intertext{and}
\overline{B}^{(-\eta)}(\overline{B}^{(\eta)})^{-1} = \gamma_R^{(-1-\eta)}\gamma_5 \nonumber
\end{align} 
to write down a lattice action
\begin{align}
S_f =& \int d^4 x \psi^T\mathcal{C}^{(\eta)}\mathcal{D} \psi\nonumber\\
S_Y = &\int d^4 x g \int d^4 x \left(\psi^T \mathcal{C}^{(\eta)}\mathcal{P}^{(\eta)}_LS^{\dagger}(1-\mathcal{P}^{(\eta)}_R)\psi + \psi^T \mathcal{C}^{(\eta)}(1-\mathcal{P}^{(\eta)}_L)S\mathcal{P}^{(\eta)}_R\psi\right),
\end{align}
where
\begin{gather}
C^{(\eta)} = C\gamma_R^{(-\eta - 1)} \gamma_5.
\end{gather}
The charge conjugation operator acts on $\gamma_R^{(\eta)}$ according to
\begin{align}
C^{-1}\gamma_R^{(\eta)} C =& \gamma_5^T \cos((1+\eta)(\pi/2 - \theta)) + \sign((\gamma_5)^T (D^{\dagger}-D)^T)\sin((1+\eta)(\pi/2 - \theta))\nonumber\\
=&\gamma_5^T\cos((1+\eta)(\pi/2 - \theta)) - (\sign(\gamma_5(D^{\dagger} - D)))^T \sin((1+\eta)(\pi/2 - \theta))\nonumber\\
=&(\gamma_R^{(-\eta - 2)})^T.
\end{align}
Therefore, using equation (\ref{eq:gamma_multiply}),
\begin{align}
\gamma_5\gamma_R^{(-\eta - 1)}C^{-1}(\gamma_R^{(\eta)})^TC \gamma_R^{(-\eta - 1)}\gamma_5 = \gamma_L^{(\eta)},
\end{align}
giving
\begin{gather}
(\mathcal{C}^{(\eta)})^{-1}(\Gamma_R^{(\eta)})^T\mathcal{C}^{(\eta)} = -\Gamma_L^{(\eta)}.\label{eq:upthere}
\end{gather}
An infinitesimal chiral symmetry rotation is
\begin{align}
\psi \rightarrow& (1+i\epsilon\Gamma_R) \psi; &\psi^T \rightarrow& \psi^T(1+i\epsilon\Gamma_R^T); & S \rightarrow&(1+2i\epsilon) S.
\end{align}
The fermion action transforms as
\begin{gather}
S_f\rightarrow S_f + i \epsilon \int d^4 x\psi^T \mathcal{C} ( \mathcal{C}^{-1} \Gamma_R^T \mathcal{C} \mathcal{D} + \mathcal{D} \gamma_R) \psi,
\end{gather}
which, given equation (\ref{eq:upthere}) is symmetric under chiral symmetry. 
The change in the Yukawa action under chiral symmetry is
\begin{align}
&i\epsilon(\psi^T(\Gamma_R^{(\eta)})^T\mathcal{C}^{(\eta)}\mathcal{P}^{(\eta)}_LS^{\dagger} (1-\mathcal{P}^{(\eta)}_R) \psi + 2\psi^T\mathcal{C}^{(\eta)}\mathcal{P}^{(\eta)}_LS^{\dagger} (1-\mathcal{P}^{(\eta)}_R) \psi )+\phantom{a}\nonumber\\ &i\epsilon(\psi^T\mathcal{C}^{(\eta)}\mathcal{P}^{(\eta)}_LS^{\dagger} (1-\mathcal{P}^{(\eta)}_R) \Gamma_R^{(\eta)} \psi+\psi^T(\Gamma_R^{(\eta)})^T\mathcal{C}^{(\eta)}(1-\mathcal{P}^{(\eta)}_L)S (\mathcal{P}^{(\eta)}_R) \psi) -\phantom{a}\nonumber\\
&i\epsilon( 2\psi^T\mathcal{C}^{(\eta)}(1-\mathcal{P}_L^{(\eta)})S \mathcal{P}_R^{(\eta)} \psi  -\psi^T\mathcal{C}^{(\eta)}(1-\mathcal{P}_L^{(\eta)})S \mathcal{P}_R^{(\eta)} \Gamma_R^{(\eta)} \psi) =0  
\end{align}
since $\mathcal{P}^{(\eta)}_{L,R}\Gamma^{(\eta)}_{L,R} = \mathcal{P}^{(\eta)}$ and $(1-\mathcal{P}^{(\eta)}_{L,R})\Gamma_{L,R}^{(\eta)} = - (1-\mathcal{P}_{L,R}^{(\eta)})$. Therefore this term in the action also remains symmetric under chiral symmetry. Thus this action correctly transforms under chiral symmetry.

The massive overlap operator can be written as $D[m] = (1-\frac{m}{2m_W} D[0]) + \frac{m}{m_W}$. The massive Majorana fermion action is given by $S_f[m] = \sqrt{1-(m/2m_W)^2}S_f[0] + S_M$, where the form of $S_M$ can be derived from the continuum mass using the blockings,
\begin{align}
S_{M} =& i\frac{m}{m_W} \psi_0^T \mathcal{C} \left(\begin{array}{l l} 0 &1\\1&0\end{array}\right) \Gamma_5 \psi_0\nonumber\\
=&\psi^T (B^{(\eta)})^T \mathcal{C} \left(\begin{array}{l l} 0 &1\\1&0\end{array}\right) \Gamma_5  B^{(\eta)} \psi\nonumber\\
=&\psi^T \mathcal{C}  \left(\begin{array}{l l} 0 &1\\1&0\end{array}\right) \overline{B}^{(-\eta)} \Gamma_5 B^{(\eta)} \psi\nonumber\\
=&\psi^T \mathcal{C}  \left(\begin{array}{l l} 0 &1\\1&0\end{array}\right) \Gamma_R^{(-\eta - 1)}\psi.
\end{align}
The partition function for the fermion fields can be written as
\begin{align}
Z= \int d\psi_1 d\psi_2 &e^{-\psi_1^T(C\gamma_R^{-1-\eta}\gamma_5 D[0]\sqrt{1-(m/2m_W2} + i (m/m_W) C \gamma_R^{-1-\eta})\psi_1}\nonumber\\
&\phantom{spacespace}e^{- \psi_2^T(C\gamma_R^{-1-\eta}\gamma_5D[0]\sqrt{1-(m/2m_W)^2} - i (m/m_W) C \gamma_R^{-1-\eta}) \psi_2},  
\end{align}
using a standard result for Grassman variables $a$ and an anti-symmetric matrix $M$,
\begin{gather}
\int da e^{-a^T Ma} = Pf(M),
\end{gather}
where $Pf(M)$ is the Pfaffian of the matrix, by noting equation (\ref{eq:CDisantisymmetric}), we have
\begin{align}
Z =& Pf(C\gamma_R^{(-1-\eta)}\gamma_5D[0]\sqrt{1-(m/2m_W)^2} + i (m/m_W) C \gamma_R^{(-1-\eta)})\nonumber\\
&\phantom{spacespace}Pf(C\gamma_R^{(-1-\eta)}\gamma_5D[0]\sqrt{1-(m/2m_W)^2} - i (m/m_W) C \gamma_R^{(-1-\eta)}) \nonumber\\
=& Pf( \gamma_5 D[0] \gamma_5 D[0] (1-(m/2m_W)^2) + (m/m_W)^2)\nonumber\\
=& Pf( \gamma_5 D[0] (1-m/(2m_W)) + \gamma_5(m/m_W))^2 \nonumber\\
=& \det[D[m]].
\end{align}
Thus this Majorana fermion representation is both Pfaffian and, upon integration over the fermion variables, gives the correct partition function.

\section{Propagators in the presence of a Higgs field}\label{sec:8b}
In the Glashow-Weinberg-Salam theory of the weak interaction, the Higgs boson, $S_0$, interacts with left and right handed fermion fields by a Yukawa coupling
\begin{gather}
\Delta \mathcal{L} = - \frac{\mathcal{\lambda}}{4}\big[ \overline{\psi}_0(1-\gamma_5) S_0 (1-\gamma_5)\psi_0 +  \overline{\psi}_0(1+\gamma_5) S^{\dagger}_0 (1+\gamma_5)\psi_0\big],
\end{gather}
where $\mathcal{\lambda}$ is a coupling constant and $\psi_0$ and $\overline{\psi}_0$ are the continuum fermion fields. The lattice fermion fields are constructed from
\begin{align}
\psi_0 =& B \psi, & \overline{\psi}_0 =& \overline{\psi}\; \overline{B}.
\end{align}
The lattice action is therefore
\begin{align}
\Delta \mathcal{L} = - \frac{\mathcal{\lambda}}{4}(\overline{\psi}(1-\gamma_L^{(\eta)})S(1-\gamma_R^{(\eta)})\psi + \overline{\psi}(1+\gamma_L^{(\eta)})S^{d}(1+\gamma_R^{(\eta)})\psi),
\end{align} 
where the lattice Higgs field is
\begin{align}
S =& (\overline{B}^{(\eta)})^{-1}S_0 (B^{(\eta)})^{-1}, & S^d = &(\overline{B}^{(\eta)})^{-1}S_0^{\dagger} (B^{(\eta)})^{-1}. 
\end{align}
The change of the measure for the transformation of the Higgs field can be used to cancel the Jacobian for the fermion fields, or, as in the original proposal for the renormalisation group construction of the lattice theory, absorbed into the blocking of the gauge fields. The two Higgs fields are no longer Hermitian conjugates unless the blockings are symmetric ($\eta = 0$), but must be treated as independent variables. Under $\mathcal{CP}$,
\begin{align}
\mathcal{CP}:S =& (\gamma_R^{(-1-\eta)}\gamma_5S\gamma_5\gamma_R^{(-1-\eta)})^T\\
\mathcal{CP}:S^d =& (\gamma_R^{(-1-\eta)}\gamma_5S^d\gamma_5\gamma_R^{(-1-\eta)})^T.
\end{align}
These relations can be proved following the methods already considered for the fermion fields.

Replacing the Higgs field $S$ with its vacuum expectation value $\upsilon$ generates a mass for the fermion fields, and models using spontaneous symmetry breaking of a scalar field are the only known ways to generate the quark mass in the standard electroweak theory. To implement this interaction on the lattice requires well defined chiral projectors, which requires Ginsparg-Wilson fermions, and simulations using the standard projectors have been carried out~\cite{Luscher:1998pqa,Gerhold:2010bh,Gerhold:2009ub,Gerhold:2007yb}. However, it is clear that if the right and left handed vectors are not related to each other by $\mathcal{CP}$ symmetry, then, if the Higgs has a non-zero expectation value, there will be a change in the action under $\mathcal{CP}$ symmetry and therefore a change in the quark propagators. In~\cite{Fujikawa:2002vj} this change was calculated and found to be non-local, suggesting that it may continue to the continuum limit (although the authors of that expected that the difference would nonetheless vanish in the continuum, given that the Lagrangian itself has the correct continuum limit). It is therefore necessary to consider the implications of $\mathcal{CP}$ violation on the propagators in the presence of the Higgs field.

Consider the action,
\begin{gather}
S = 
\left(\begin{array}{l l}
\overline{\psi}_- &\overline{\psi}_+
\end{array}\right)
\left(\begin{array}{l l}
D_-^{(\eta)}&\upsilon^{(\eta)}_{-+}\\
\upsilon^{(\eta)}_{+-}&D_+^{(\eta)}
\end{array}\right)
\left(\begin{array}{l}
\psi_-\\
\psi_+
\end{array}\right)
 +
 \left(\begin{array}{l l}
\overline{\psi}_- &\overline{\psi}_+
\end{array}\right)
\left(\begin{array}{l}
\chi_-\\
\chi_+
\end{array}\right)
+
\left(\begin{array}{l l}
\overline{\chi}_- &\overline{\chi}_+
\end{array}\right)
\left(\begin{array}{l}
\psi_-\\
\psi_+
\end{array}\right)
\end{gather} 
where
\begin{align}
D_-^{(\eta)} =& \frac{1}{4}(1-\gamma_L^{(\eta)})D(1+\gamma_R^{(\eta)})\nonumber\\
D_+^{(\eta)} =& \frac{1}{4}(1+\gamma_L^{(\eta)})D(1-\gamma_R^{(\eta)})\nonumber\\
\upsilon_{-+}^{(\eta)} =& \frac{\mathcal{\lambda}}{4}(1-\gamma_L^{(\eta)})S(1-\gamma_R^{(\eta)})\nonumber\\
\upsilon_{+-}^{(\eta)} =& \frac{\mathcal{\lambda}}{4}(1+\gamma_L^{(\eta)})S^{d}(1+\gamma_R^{(\eta)})
\end{align}
and $S$ is a scalar field. We can use a Schur decomposition to transform to a new basis
\begin{align}
\left(\begin{array}{l l}
\overline{\psi}_- &\overline{\psi}_+
\end{array}\right) =&
\left(\begin{array}{l l}
\overline{\psi}'_- &\overline{\psi}'_+
\end{array}\right)\left(\begin{array}{l l}
1&0\\
-\upsilon^{(\eta)}_{+-} (D_-^{(\eta)})^{-1}&1
\end{array}\right)\nonumber\\
\left(\begin{array}{l}
\psi_-\\
\psi_+
\end{array}\right)
=&\left(\begin{array}{l l}
1&-(D_-^{(\eta)})^{-1}\upsilon^{(\eta)}_{-+}\\
0&1
\end{array}\right)
\left(\begin{array}{l}
\psi'_-\\
\psi'_+
\end{array}\right)
\nonumber\\
\left(\begin{array}{l l}
\overline{\chi}'_- &\overline{\chi}'_+
\end{array}\right) =&
\left(\begin{array}{l l}
\overline{\chi}'_- &\overline{\chi}_+
\end{array}\right)
\left(\begin{array}{l l}
1&-(D_-^{(\eta)})^{-1}\upsilon^{(\eta)}_{-+}\\
0&1
\end{array}\right)
\nonumber\\
\left(\begin{array}{l}
\chi'_-\\
\chi'_+
\end{array}\right)
=&\left(\begin{array}{l l}
1&0\\
-\upsilon^{(\eta)}_{+-} (D_-^{(\eta)})^{-1}&1
\end{array}\right)
\left(\begin{array}{l}
\chi_-\\
\chi_+
\end{array}\right).
\end{align}
In these new coordinates, the action is
\begin{align}
S = &
\left(\begin{array}{l l}
\overline{\psi}'_- &\overline{\psi}'_+
\end{array}\right)
\left(\begin{array}{l l}
D_-^{(\eta)}&0\\
0&D_+^{(\eta)} - \upsilon^{(\eta)}_{+-} (D_-^{(\eta)})^{-1} \upsilon^{(\eta)}_{-+}
\end{array}\right)
\left(\begin{array}{l}
\psi'_-\\
\psi'_+
\end{array}\right)+\nonumber\\
 &
 \left(\begin{array}{l l}
\overline{\psi}'_- &\overline{\psi}'_+
\end{array}\right)
\left(\begin{array}{l}
\chi'_-\\
\chi'_+
\end{array}\right)
+
\left(\begin{array}{l l}
\overline{\chi}'_- &\overline{\chi}'_+
\end{array}\right)
\left(\begin{array}{l}
\psi'_-\\
\psi'_+
\end{array}\right).
\end{align}
Following equation (\ref{eq:gf}), we can write the generating function $Z_F[\omega,U,\eta,\chi,\overline{\chi},\phi]$ as
\begin{align}
Z_F[\omega,U,\eta,\chi,\overline{\chi},\phi]=&e^{i\theta[\omega,U,\eta]}\det(D_-^{(\eta)}(1-\phi_0\phi_0^{\dagger})) \left(\overline{\chi}'_-\frac{1}{2}(1+\gamma_R^{(\eta)}),\phi_0\right)\left(\phi_0\frac{1}{2}(1-\gamma_L^{(\eta)}),\chi'_-\right)\nonumber\\
&\phantom{space}e^{\overline{\chi}_-'  {G'_-}^{(\eta)} \chi'_-}\det((D_+^{(\eta)} - \upsilon_{+-}^{(\eta)}D_-^{-1}\upsilon_{-+}^{(\eta)})(1-\phi_0\phi_0^{\dagger}))\nonumber\\
&\phantom{space} \left(\overline{\chi}'_+\frac{1}{2}(1-\gamma_R^{(\eta)}),\phi_0\right)\left(\phi_0\frac{1}{2}(1+\gamma_L^{(\eta)}),\chi'_+\right)e^{\overline{\chi}'_+ {G'_+}^{(\eta)} \chi_+}\nonumber\\
=&e^{i\theta[\omega,U,\eta]}\det(D_-^{(\eta)}(1-\phi_0\phi_0^{\dagger})) \left(\overline{\chi}_-\frac{1}{2}(1+\gamma_R^{(\eta)}),\phi_0\right)\left(\phi_0\frac{1}{2}(1-\gamma_L^{(\eta)}),\chi_-\right)\nonumber\\
&\phantom{space}\det((D_+^{(\eta)} - \upsilon_{+-}^{(\eta)}D_-^{-1}\upsilon_{-+}^{(\eta)})(1-\phi_0\phi_0^{\dagger}))\nonumber\\
&\phantom{space} \left((\overline{\chi}_+ - \overline{\chi}_- \left(D_-^{(\eta)}\right)^{-1}\upsilon_{-+}^{(\eta)})\frac{1}{2}(1-\gamma_R^{(\eta)}),\phi_0\right)\nonumber\\
&\phantom{space}\left(\phi_0\frac{1}{2}(1+\gamma_L^{(\eta)}),\chi_+-\upsilon^{(\eta)}_{+-}(D_-^{(\eta)})^{-1}\chi_-\right)\nonumber\\
&\phantom{space}e^{\overline{\chi}_- G_{--}^{(\eta)} \chi_- + \overline{\chi}_+ G_{+-}^{(\eta)} \chi_- +\overline{\chi}_- G_{-+}^{(\eta)} \chi_+ +\overline{\chi}_+ G_{++}^{(\eta)} \chi_+},
\end{align}
where
\begin{align}
{G'_-}^{(\eta)} =& \frac{1}{4}(1+ \gamma_R^{(\eta)})\frac{1}{D_-^{(\eta)}}(1-\gamma_L^{(\eta)})(1-\ket{\phi_0}\bra{\phi_0})\nonumber\\
{G'_+}^{(\eta)} =& \frac{1}{4}(1- \gamma_R^{(\eta)})\frac{1}{D_+^{(\eta)} - \upsilon^{(\eta)}_{+-} (D_-^{(\eta)})^{-1}\upsilon^{(\eta)}_{-+}}(1+\gamma_L^{(\eta)})(1-\ket{\phi_0}\bra{\phi_0})\nonumber\\
G_{--}^{(\eta)}[U,S,S^{d}] = &\frac{1}{4}(1+ \gamma_R^{(\eta)})\frac{1}{D_-^{(\eta)} - \upsilon^{(\eta)}_{-+} (D_+^{(\eta)})^{-1}\upsilon^{(\eta)}_{+-}} (1-\gamma_L^{(\eta)})(1-\ket{\phi_0}\bra{\phi_0})\nonumber\\
G_{++}^{(\eta)}[U,S,S^{d}] = &\frac{1}{4}(1- \gamma_R^{(\eta)})\frac{1}{D_+^{(\eta)} - \upsilon^{(\eta)}_{+-} (D_-^{(\eta)})^{-1}\upsilon^{(\eta)}_{-+}} (1+\gamma_L^{(\eta)})(1-\ket{\phi_0}\bra{\phi_0})\nonumber\\
G_{-+}^{(\eta)}[U,S,S^{d}] = &\frac{1}{4}(1+ \gamma_R^{(\eta)})\frac{1}{\upsilon^{(\eta)}_{+-} - D_+^{(\eta)}(\upsilon^{(\eta)}_{-+})^{-1} D_-^{(\eta)}}(1-\ket{\phi_0}\bra{\phi_0}) (1+\gamma_L^{(\eta)})\nonumber\\
G_{+-}^{(\eta)}[U,S,S^{d}] = &\frac{1}{4}(1- \gamma_R^{(\eta)})\frac{1}{\upsilon^{(\eta)}_{-+} - D_-^{(\eta)}(\upsilon^{(\eta)}_{+-})^{-1} D_+^{(\eta)}}(1-\ket{\phi_0}\bra{\phi_0}) (1-\gamma_L^{(\eta)})
\end{align}
Under $\mathcal{CP}$,
\begin{align}
\mathcal{CP}:G^{\eta}[U,S,S^d] \rightarrow& \gamma_5 \gamma_R^{-1-\eta} G^{(-\eta)}[U,\gamma_R^{(-1-\eta)}\gamma_5S\gamma_5\gamma_R^{(-1-\eta)},\gamma_R^{(-1-\eta)}\gamma_5S^{d}\gamma_5\gamma_R^{(-1-\eta)} \gamma_R^{-1-\eta}\gamma_5\nonumber\\
=&G^{\eta}[U,S,S^d].
\end{align}
Therefore the propagators are invariant under $\mathcal{CP}$ even in the presence of a non-vanishing Higgs field.
\section{Conclusions}\label{sec:9}
I have shown that the problems concerning $\mathcal{CP}$ invariance in Ginsparg-Wilson chiral gauge theories are a result of naively applying the continuum $\mathcal{CP}$ symmetry to the lattice. However, by considering the Ginsparg Wilson method, I have shown that this is incorrect, and a more natural definition of lattice $\mathcal{CP}$ flows from the same methodology from which lattice chiral symmetry was derived. Using this lattice $\mathcal{CP}$, which generalises and strengthens the approach in~\cite{Igarashi:2009kj}, I have constructed the chiral symmetry currents, ward identities, and lattice chiral gauge actions and Majoranna fermions, and shown that a subset of the chiral operators considered here are local. I have also demonstrated that the inclusion of the Higgs in the lattice electroweak does not lead to any non localities if the correct chiral symmetry and $\mathcal{CP}$ symmetries are applied.

One troubling aspect, however, remains, namely the issue that the lattice charge conjugation matrix is non-local, and it is therefore not clear that this has a smooth limit to its continuum limit. This is certainly better than having non-local shifts in propagators and thus having observables without a clear continuum limit. If lattice QCD is to be understood in terms of a blocked continuum theory, this non-locality seems to be inevitable. If we abandon this understanding of lattice QCD, we abandon the theoretical basis of lattice chiral symmetry. My own belief is that the problem relates to the absence of doublers in the continuum, and can be understood, if not necessarily resolved, by blocking to the lattice theory from a continuum theory with an additional doubler field. I will investigate this in a subsequent work~\cite{Cundy:newpaper}. 

\section{Acknowledgements}
This research has been funded by the DFG grant FOR-465 and the SFB TR 55, and the BK21 program funded by NRF, Republic of Korea. I am grateful to Andreas Sch\"afer and Weonjong Lee for many useful discussions. 
\appendix
\section{$\mathcal{CP}$ in the continuum}\label{app:CP}
In the continuum, charge conjugation is defined as
\begin{align}
\psi_0(x)\rightarrow& -C^{-1}\overline{\psi}_0^T(x),&\overline{\psi}_0(x)\rightarrow &\psi_0(x)^TC,\nonumber\\
U(x,\mu)\rightarrow & U(x,\mu)^*,
\end{align}
where `$T$' denotes the transpose and `$*$' the complex conjugate, and the charge conjugation matrix $C$ satisfies
\begin{align}
C^{\dagger}C =& 1,&C^T =& -C,&C\gamma_{\mu}C^{-1} = &-\gamma_{\mu}^T,&C\gamma_5 C^{-1} = \gamma_5^T.
\end{align}
The Dirac operator $D_0$ transforms as
\begin{gather}
D_0[U](x,y) \rightarrow C^{-1}D_0[U^*](x,y)^T C,
\end{gather}

The Parity operation is defined as
\begin{align}
\psi_0(x) \rightarrow& \gamma_4 \psi_0(\overline{x}),&\overline{\psi}_0(x)\rightarrow& \overline{\psi}_0(\overline{x})\gamma_4,\nonumber\\
U(x,\mu)\rightarrow &U^P(x,\mu)=\left\{\begin{array}{l l}
U^{\dagger}(\overline{x} - a \hat{\mu},\mu)&\mu = 1,2,3\\
U(\overline{x},\mu)&\mu = 4
\end{array}\right.,
\end{align}
where
\begin{gather}
\overline{x} = (-x_1,-x_2,-x_3,x_4),
\end{gather}
and infinitesimal $a$.
In this case,
\begin{gather}
D_0[U](x,y)\rightarrow \gamma_4D_0[U^P](\overline{x},\overline{y})\gamma_4.
\end{gather}
The $\mathcal{CP}$ transformation in the continuum can be defined as
\begin{align}
\psi_0(x)\rightarrow&-W^{-1}\overline{\psi}_0^T(\overline{x}),&W^T=&W,&\overline{\psi}_0(x)\rightarrow&\overline{\psi}_0^T(\overline{x})W,\nonumber\\
U(x,\mu)\rightarrow&U^{CP}(x,\mu),
\end{align}
where
\begin{align}
W^{\dagger}W=&1&W\gamma_{\mu}W^{-1} =& \left\{\begin{array}{l l}
\gamma_{\mu}^{T}&\mu = 1,2,3\\
-\gamma_{\mu}^T&\mu = 4\end{array}\right.& W\gamma_5W^{-1} = -\gamma_5^T.\label{eq:gammacp}
\end{align}
Under this transformation,
\begin{gather}
 W (\mathcal{CP}:D_0[U](x,y))W^{-1} \rightarrow  D_0[U^{CP}](\overline{x},\overline{y})^T.\label{eq:121}
\end{gather}
The continuum massless action transforms under $\mathcal{CP}$ according to
\begin{align}
\overline{\psi}_0(x)D_0[U](x,y)\psi_0(y)\rightarrow& -\psi_0(\overline{x})^T W W^{-1} (D_0[U^{CP}](\overline{x},\overline{y}))^TW W^{-1} \overline{\psi}_0(\overline{y})^T\nonumber\\
=&\overline{\psi}_0(\overline{y})D_0[U](\overline{y},\overline{x})\psi_0(\overline{x}),
\end{align}
where there is a minus sign from fermion anti-commutation, and thus this action is invariant under $\mathcal{CP}$. Similarly, for the chiral decomposition of the action
\begin{align}
\overline{\psi}_0(x)D_0[U](x,y)\psi_0(y) = \frac{1}{4}\overline{\psi}_0(x)&(1+\gamma_5)D_0[U](x,y)(1-\gamma_5)\psi_0(y) + \nonumber\\&\frac{1}{4}\overline{\psi}_0(x)(1-\gamma_5)D_0[U](x,y)(1+\gamma_5)\psi_0(y), 
\end{align}
both of the Weyl fermion actions are invariant under CP.
\section{Properties of the $\gamma^{(\eta)}$ matrices}\label{app:glgr}
The equivalents of $\gamma_5$ in this formulation are $\gamma^{(\eta)}_L$ and $\gamma^{(\eta)}_R$, defined by equation (\ref{eq:gamma_eta1}). In the formulation given below, I have deflated the zero mode doublers to ensure that these expressions are well-defined.  
\begin{align}
{\gamma}^{(\eta)}_R = & \left[\gamma_5 \cos\left[\frac{1}{2}(1+\eta)(\pi - 2 \theta)\right] + \sign(\gamma_5 (D^{\dagger}-D)) \sin\left[ \frac{1}{2}(1+\eta)(\pi - 2 \theta)\right]\right](1-\ket{\phi_2}\bra{\phi_2})\nonumber\\& - \sign(\sin((\eta+\epsilon)\pi/2))\ket{\phi_2}\bra{\phi_2}\gamma_5\nonumber\\
{\gamma}^{(\eta)}_L = & \left[\gamma_5 \cos\left[\frac{1}{2}(\eta-1)(\pi - 2 \theta)\right] + \sign(\gamma_5(D^{\dagger}-D)) \sin\left[ \frac{1}{2}(\eta-1)(\pi - 2 \theta)\right]\right](1-\ket{\phi_2}\bra{\phi_2}) \nonumber\\&+ \sign(\sin((\eta+\epsilon)\pi/2))\ket{\phi_2}\bra{\phi_2}\gamma_5\nonumber\\
\tan\theta =& 2 \frac{\sqrt{1-D^{\dagger} D/4}}{\sqrt{D^{\dagger} D}},\label{eq:gamma_eta}
\end{align}
where $\epsilon$ is some infinitesimal real number.
At odd integer $\eta$, the dependence on the zero mode doublers cancels out, and these operators are local. For non integer $\eta$, the operators will in general be non-local. For even integer $\eta+\epsilon$, they will also be non-local. The ambiguity in $\sign(\sin((\eta+\epsilon)\pi/2))$ for these values of $\eta$ should first be resolved by consistently replacing this term in the definition by either the identity operator or minus the identity operator. 

$\gamma_L$ and $\gamma_R$ have the following properties:

\begin{lemma} \label{point:1} $\gamma_L^{(\eta)} = \gamma_5 \gamma_R^{(-\eta)} \gamma_5$.\end{lemma}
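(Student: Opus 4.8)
The plan is to work directly from the closed-form expressions for $\gamma_R^{(\eta)}$ and $\gamma_L^{(\eta)}$ in equation (\ref{eq:gamma_eta}) and to compute $\gamma_5 \gamma_R^{(-\eta)}\gamma_5$ by conjugating each term by $\gamma_5$. The only algebraic inputs I need are how the two building blocks, $\gamma_5$ itself and $\sign(\gamma_5(D^{\dagger}-D))$, behave under conjugation by $\gamma_5$, together with the observation already noted in the text that $D^{\dagger}D$ commutes with $\gamma_5$, so that every function of $\theta$ appearing as a coefficient commutes with $\gamma_5$ and may be pulled through freely.

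First I would establish that $\sign(\gamma_5(D^{\dagger}-D))$ anticommutes with $\gamma_5$. Writing $D = 1 + \gamma_5\sign(K)$ and $D^{\dagger} = 1 + \sign(K)\gamma_5$, one has $D^{\dagger}-D = [\sign(K),\gamma_5]$, and a direct computation gives $\gamma_5(D^{\dagger}-D)\gamma_5 = -(D^{\dagger}-D)$, i.e.\ $D^{\dagger}-D$ anticommutes with $\gamma_5$. Hence $B := \gamma_5(D^{\dagger}-D)$ also anticommutes with $\gamma_5$, so $\gamma_5 B \gamma_5 = -B$; since $\sign$ is an odd function and commutes with conjugation, $\gamma_5 \sign(B)\gamma_5 = \sign(-B) = -\sign(B)$. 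Conjugating $\gamma_R^{(-\eta)}$ by $\gamma_5$ therefore leaves the $\gamma_5\cos[\tfrac12(1-\eta)(\pi-2\theta)]$ term unchanged and flips the sign of the $\sign(\gamma_5(D^{\dagger}-D))\sin[\tfrac12(1-\eta)(\pi-2\theta)]$ term.

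The result is then matched to $\gamma_L^{(\eta)}$ using only the parity of the trigonometric coefficients in $\eta$: $\cos[\tfrac12(\eta-1)(\pi-2\theta)] = \cos[\tfrac12(1-\eta)(\pi-2\theta)]$ because cosine is even, while $\sin[\tfrac12(\eta-1)(\pi-2\theta)] = -\sin[\tfrac12(1-\eta)(\pi-2\theta)]$ because sine is odd. These two identities convert $\gamma_5\gamma_R^{(-\eta)}\gamma_5$ term by term into the expression defining $\gamma_L^{(\eta)}$, proving the identity on the non-zero modes.

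The one delicate point, which I expect to be the main obstacle, is the deflated zero mode doubler contribution $\pm\sign(\sin((\eta+\epsilon)\pi/2))\ket{\phi_2}\bra{\phi_2}\gamma_5$ in (\ref{eq:gamma_eta}). Here I would use that $\ket{\phi_2}$ has definite chirality, so that $\ket{\phi_2}\bra{\phi_2}$ commutes with $\gamma_5$ and $\gamma_5\ket{\phi_2}\bra{\phi_2} = \ket{\phi_2}\bra{\phi_2}\gamma_5$, which shows the projector part survives conjugation intact. Matching the signs then reduces to checking $-\sign(\sin((-\eta+\epsilon)\pi/2)) = \sign(\sin((\eta+\epsilon)\pi/2))$; since sine is odd this is $\sign(\sin((\eta-\epsilon)\pi/2)) = \sign(\sin((\eta+\epsilon)\pi/2))$, which holds for every non-integer $\eta$ and, at integer $\eta$, holds precisely under the regularisation convention already fixed in the text (replacing the ambiguous $\sign(\sin(\cdot))$ consistently by $+1$ or $-1$). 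Thus the identity holds on all modes, up to the chosen treatment of the doubler ambiguity, which is exactly the caveat carried throughout this discussion.
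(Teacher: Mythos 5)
Your proposal is correct and is exactly the ``inspection'' the paper leaves implicit: conjugating the closed-form expression for $\gamma_R^{(-\eta)}$ by $\gamma_5$, using that $D^{\dagger}D$ (hence $\theta$) and $\ket{\phi_2}\bra{\phi_2}$ commute with $\gamma_5$ while $\gamma_5(D^{\dagger}-D)$ is Hermitian and anticommutes with it, and then matching terms via the parity of $\cos$ and $\sin$ in $\eta$. Your handling of the deflated doubler term, including the even-integer-$\eta$ caveat resolved by the paper's stated sign convention, is also consistent with the text, so nothing further is needed.
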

\begin{proof}{This follows by inspection.\qed}\end{proof}
\begin{lemma}\label{point:2} $\gamma_R^{(1)} = \gamma_5(1-D)$.\end{lemma}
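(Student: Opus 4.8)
The plan is to pin down $\gamma_R^{(1)}$ from the Ginsparg--Wilson relation rather than to expand the trigonometric form (\ref{eq:gamma_eta}) directly. The one structural input I would use is that the overlap operator (\ref{eq:ov}) obeys the canonical relation $\gamma_5 D + D\gamma_5 = D\gamma_5 D$. This follows at once from $\gamma_5$-Hermiticity $D^{\dagger} = \gamma_5 D \gamma_5$ together with the fact that the spectrum of $D$ lies on the circle $|z-1|=1$, i.e.\ $D + D^{\dagger} = D^{\dagger}D$: substituting $D^{\dagger} = \gamma_5 D\gamma_5$ into the latter and multiplying on the left by $\gamma_5$ gives the canonical relation.

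First I would set $\eta = 1$ in the Ginsparg--Wilson chiral symmetry $\gamma_L^{(\eta)} D + D\gamma_R^{(\eta)} = 0$ and insert the identity $\gamma_5 = \gamma_L^{(1)}$ recorded after equation (\ref{eq:gamma_eta0}); this yields $\gamma_5 D + D\gamma_R^{(1)} = 0$. Second, I would verify that the proposed answer $X := \gamma_5(1-D)$ solves the same equation: using the canonical relation,
\begin{gather}
D X = D\gamma_5 - D\gamma_5 D = D\gamma_5 - (\gamma_5 D + D\gamma_5) = -\gamma_5 D,\nonumber
\end{gather}
so that $\gamma_5 D + D X = 0$ as well. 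Subtracting, $D\,(\gamma_R^{(1)} - X) = 0$, and the whole content of the lemma is that this forces $\gamma_R^{(1)} = X$.

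The main obstacle is exactly this last uniqueness step, since $D$ is not invertible once the topological sector is non-trivial: $D\,\Delta = 0$ with $\Delta := \gamma_R^{(1)}-X$ only says $\mathrm{range}(\Delta)\subseteq\ker D$. To close the gap I would use that both operators are Hermitian (the Hermiticity property of the list, and $X^{\dagger} = (1-D^{\dagger})\gamma_5 = \gamma_5 - \gamma_5 D = X$ by $\gamma_5$-Hermiticity), so taking the adjoint gives $\Delta\,D^{\dagger}=0$ as well. Because the overlap zero modes are chiral, $\ker D^{\dagger} = \gamma_5\ker D = \ker D$, hence $\mathrm{range}(D^{\dagger}) = (\ker D)^{\perp}$ and $\Delta$ annihilates $(\ker D)^{\perp}$. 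Thus $\Delta$ is supported on $\ker D$ and maps into $\ker D$; but on a zero mode $\phi_0$ one has $X\phi_0 = \gamma_5\phi_0$, while in (\ref{eq:gamma_eta}) the zero modes sit at $\theta=\pi/2$, where the coefficient of $\sign(\gamma_5(D^{\dagger}-D))$ vanishes and the $\cos$ term is $\cos 0 = 1$, so $\gamma_R^{(1)}\phi_0 = \gamma_5\phi_0$ too. Hence $\Delta=0$ and $\gamma_R^{(1)} = \gamma_5(1-D)$.

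As an independent check (and the route I would fall back on if the uniqueness bookkeeping is judged too delicate), one can substitute $\eta=1$ straight into (\ref{eq:gamma_eta}). With $\tfrac12(1+\eta)(\pi-2\theta)=\pi-2\theta$, the double-angle identities together with $\cos\theta=\tfrac12\sqrt{D^{\dagger}D}$ and $\sin\theta=\sqrt{1-D^{\dagger}D/4}$ give $\gamma_R^{(1)} = \gamma_5(1-D^{\dagger}D/2) + \sign(\gamma_5(D^{\dagger}-D))\sqrt{(1-D^{\dagger}D/4)\,D^{\dagger}D}$; using $D^{\dagger}D = D+D^{\dagger}$ and $D = 1+\gamma_5\sign(K)$ this collapses to $-\sign(K) = \gamma_5(1-D)$, which is quickest to see in the two-dimensional spectral blocks of appendix \ref{app:eigenvalues}.
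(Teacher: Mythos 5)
Your fallback computation is, almost verbatim, the paper's own proof: substitute $\eta=1$ into the trigonometric form, use the double-angle identities to obtain $\gamma_5(1-D^{\dagger}D/2)+\tfrac{1}{2}\gamma_5(D^{\dagger}-D)$, and collapse this with $D+D^{\dagger}=D^{\dagger}D$. (Your assignment $\cos\theta=\tfrac{1}{2}\sqrt{D^{\dagger}D}$, $\sin\theta=\sqrt{1-D^{\dagger}D/4}$ is the one consistent with equation (\ref{eq:gamma_eta0}) and with appendix \ref{app:eigenvalues}, where the zero modes sit at $\theta=\pi/2$; section \ref{sec:3} works with the complementary angle, so do not be thrown by the apparent mismatch there.) On that route you coincide with the paper.

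Your primary route --- characterising $\gamma_R^{(1)}$ as the unique Hermitian solution of $\gamma_5 D+DX=0$ that acts as $\gamma_5$ on $\ker D$ --- is genuinely different, and the uniqueness bookkeeping itself is sound: $D\Delta=0$ plus Hermiticity of $\Delta$ gives $\Delta D^{\dagger}=0$, the chirality of the zero modes gives $\ker D^{\dagger}=\ker D$, hence $\Delta$ is supported on $\ker D$, where both candidates act as $\gamma_5$. The weak point is where you obtain the input $\gamma_5 D+D\gamma_R^{(1)}=0$. You take it from property 4 of section \ref{sec:2}, i.e.\ lemma \ref{point:8} of appendix \ref{app:glgr}; but the paper's proof of that lemma itself invokes lemma \ref{point:2}, so as written your argument is circular relative to the paper's logical ordering. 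It is repairable --- the $\eta=1$ Ginsparg--Wilson relation follows independently of lemma \ref{point:2} from the blocking form $\gamma_R^{(\eta)}=(B^{(\eta)})^{-1}\gamma_5 B^{(\eta)}$, $\gamma_L^{(\eta)}=\overline{B}^{(\eta)}\gamma_5(\overline{B}^{(\eta)})^{-1}$, $D=\overline{B}^{(\eta)}D_0B^{(\eta)}$ and $\{\gamma_5,D_0\}=0$ --- but you would need to say so explicitly. Note also that even then you must fall back on the explicit formula (\ref{eq:gamma_eta}) to evaluate $\gamma_R^{(1)}$ on the zero modes, so the uniqueness route does not avoid the trigonometric definition, it only localises its use to $\ker D$. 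Since the two-line direct computation is available, that is the proof to keep; the uniqueness argument is a nice consistency check rather than a shortcut.
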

\begin{proof}{
\begin{align}
\tilde{\gamma}^{(1)}_R =& \gamma_5 \cos(\pi - 2 \theta) + \sign(\gamma_5 (D^{\dagger} - D)) \sin(\pi - 2 \theta)\nonumber\\
=&\gamma_5 \left( 1 - \frac{D^{\dagger}D}{2}\right) + \frac{1}{2}\gamma_5(D^{\dagger} - D)\nonumber\\
=& \gamma_5(1-D),
\end{align}
where I have used the familiar result $D+D^{\dagger} = D^{\dagger}D$. The result follows immediately.
\qed}\end{proof}
\begin{lemma}\label{point:3}$\gamma_R^{(-1)} = \gamma_5$.\end{lemma}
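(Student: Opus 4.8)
The plan is to substitute $\eta = -1$ directly into the definition of $\gamma_R^{(\eta)}$ in equation (\ref{eq:gamma_eta}) and verify that the trigonometric prefactors collapse. First I would observe that at $\eta = -1$ the argument $\frac{1}{2}(1+\eta)(\pi - 2\theta)$ vanishes identically, so that $\cos\left[\frac{1}{2}(1+\eta)(\pi-2\theta)\right] = 1$ and $\sin\left[\frac{1}{2}(1+\eta)(\pi-2\theta)\right] = 0$, irrespective of the value of $\theta$. Hence the entire second term proportional to $\sign(\gamma_5(D^{\dagger}-D))$ drops out, and the bracketed factor in (\ref{eq:gamma_eta}) reduces to $\gamma_5$ acting on the non-doubler subspace, i.e. to $\gamma_5(1-\ket{\phi_2}\bra{\phi_2})$. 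This is the easy part, immediate from the vanishing of the trigonometric arguments.

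It then remains to treat the deflation term $-\sign(\sin((\eta+\epsilon)\pi/2))\ket{\phi_2}\bra{\phi_2}\gamma_5$. Evaluating at $\eta = -1$ gives $\sin((-1+\epsilon)\pi/2) = \sin(-\pi/2 + \epsilon\pi/2)$, which is negative for $|\epsilon|$ infinitesimal, so that $\sign(\sin((\eta+\epsilon)\pi/2)) = -1$ and the doubler term becomes $+\ket{\phi_2}\bra{\phi_2}\gamma_5$. Collecting the two contributions yields $\gamma_R^{(-1)} = \gamma_5(1-\ket{\phi_2}\bra{\phi_2}) + \ket{\phi_2}\bra{\phi_2}\gamma_5 = \gamma_5 - [\gamma_5,\ket{\phi_2}\bra{\phi_2}]$, so the claim reduces to showing that $\ket{\phi_2}\bra{\phi_2}$ commutes with $\gamma_5$.

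To close this last point I would use that the zero mode doublers are chiral eigenvectors. From $D = 1 + \gamma_5\sign(K)$ in (\ref{eq:ov}), the defining condition $D\ket{\phi_2} = 2\ket{\phi_2}$ forces $\sign(K)\ket{\phi_2} = \gamma_5\ket{\phi_2}$, whence $\ket{\phi_2}$ has definite chirality, $\gamma_5\ket{\phi_2} = \pm\ket{\phi_2}$ (exactly as for the zero modes). Consequently $\gamma_5\ket{\phi_2}\bra{\phi_2} = \pm\ket{\phi_2}\bra{\phi_2} = \ket{\phi_2}\bra{\phi_2}\gamma_5$, the commutator vanishes, and $\gamma_R^{(-1)} = \gamma_5$ as required.

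The computation is elementary, and in the topologically trivial sector, where the doublers $\phi_2$ are absent, the result is simply $\gamma_R^{(-1)} = \gamma_5$ by inspection. The only point requiring any care is the bookkeeping of the zero mode doublers, both in fixing the sign of the deflation coefficient and in verifying that $\ket{\phi_2}\bra{\phi_2}$ commutes with $\gamma_5$; this is the step I expect to be the main, though mild, obstacle.
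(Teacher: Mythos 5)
Your proof is correct and follows the same route as the paper, which disposes of this lemma simply by ``inspection'': substituting $\eta=-1$ kills the trigonometric arguments and the $\sign(\gamma_5(D^{\dagger}-D))$ term, and your bookkeeping of the deflation term $-\sign(\sin((\eta+\epsilon)\pi/2))\ket{\phi_2}\bra{\phi_2}\gamma_5$ (which evaluates to $+\ket{\phi_2}\bra{\phi_2}\gamma_5$ and recombines with $\gamma_5(1-\ket{\phi_2}\bra{\phi_2})$ because the doublers are, by the paper's convention, chiral eigenvectors) is exactly the detail the paper leaves implicit. The only gloss is that $\sign(K)\ket{\phi_2}=\gamma_5\ket{\phi_2}$ by itself shows the $\lambda=2$ eigenspace is $\gamma_5$-invariant rather than that each $\ket{\phi_2}$ is automatically chiral; definite chirality is a (standard) basis choice, which the paper adopts.
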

\begin{proof}{The result follows by Inspection.\qed}\end{proof}
\begin{lemma}\label{point:4} For odd integer $\eta$, $\gamma_R^{(\eta_1)} \gamma_R^{(\eta_2)} = \gamma_5 \gamma_R^{\eta_2 - \eta_1 - 1}$.\end{lemma}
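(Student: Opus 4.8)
The plan is to reduce the statement to the elementary algebra of two anticommuting involutions. Writing $\Phi = \pi/2 - \theta$, so that $\tfrac12(1+\eta)(\pi-2\theta) = (1+\eta)\Phi$, and abbreviating $S = \sign(\gamma_5(D^{\dagger}-D))$, the definition in equation (\ref{eq:8a}) reads $\gamma_R^{(\eta)} = \gamma_5\cos[(1+\eta)\Phi] + S\sin[(1+\eta)\Phi]$. The key observation is that for odd integer $\eta$ the coefficient $\sin[(1+\eta)\Phi]$ of $S$ vanishes identically on the zero modes ($\theta=\pi/2$, so $\Phi=0$) and on the doublers ($\theta=0$, so $(1+\eta)\Phi$ is an integer multiple of $\pi$). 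Hence the deflation terms of equation (\ref{eq:gamma_eta}) are redundant and the closed form above is exact on the whole space. The result will then follow once I establish the three identities $\gamma_5^2 = 1$, $S^2 = 1$, and $\{\gamma_5,S\}=0$ on the subspace where $S$ is defined.

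The one nontrivial identity, and the step I expect to be the main obstacle, is the anticommutation $\{\gamma_5,S\}=0$. First I would record that $A := \gamma_5(D^{\dagger}-D)$ is Hermitian: from $\gamma_5$-Hermiticity $\gamma_5 D\gamma_5 = D^{\dagger}$ one has $A^{\dagger} = (D-D^{\dagger})\gamma_5 = \gamma_5(D^{\dagger}-D) = A$, so $S=\sign A$ is a well-defined Hermitian involution away from the kernel of $A$. The same relation gives $\gamma_5(D^{\dagger}-D)\gamma_5 = D-D^{\dagger} = -(D^{\dagger}-D)$, whence $\gamma_5 A + A\gamma_5 = (D^{\dagger}-D) + \gamma_5(D^{\dagger}-D)\gamma_5 = 0$, i.e.\ $\gamma_5 A\gamma_5 = -A$. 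Consequently $\gamma_5$ commutes with $A^2$, hence with $|A|=\sqrt{A^2}$, and therefore $\gamma_5 S\gamma_5 = (\gamma_5 A\gamma_5)(\gamma_5|A|^{-1}\gamma_5) = -A|A|^{-1} = -S$, which is the desired anticommutation.

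With these facts in hand the computation is routine. Setting $a=(1+\eta_1)\Phi$ and $b=(1+\eta_2)\Phi$,
\begin{align}
\gamma_R^{(\eta_1)}\gamma_R^{(\eta_2)} =& (\gamma_5\cos a + S\sin a)(\gamma_5\cos b + S\sin b)\nonumber\\
=& \cos a\cos b + \sin a\sin b + \gamma_5 S(\cos a\sin b - \sin a\cos b)\nonumber\\
=& \cos(a-b) + \gamma_5 S\sin(b-a),\nonumber
\end{align}
where I used $\gamma_5^2=S^2=1$ and $S\gamma_5=-\gamma_5 S$. On the other hand, $\gamma_5\gamma_R^{(\eta_2-\eta_1-1)} = \gamma_5(\gamma_5\cos c + S\sin c) = \cos c + \gamma_5 S\sin c$ with $c=(\eta_2-\eta_1)\Phi$. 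Since $a-b = (\eta_1-\eta_2)\Phi = -c$, the two expressions coincide, proving $\gamma_R^{(\eta_1)}\gamma_R^{(\eta_2)} = \gamma_5\gamma_R^{(\eta_2-\eta_1-1)}$. Finally I would note that for odd $\eta_1,\eta_2$ the index $\eta_2-\eta_1-1$ is again odd, so the right-hand side is one of the admissible local operators; and on the zero modes and doublers, where the $S$-terms are absent, both sides collapse to the same signed multiple of $\gamma_5$ (consistent with the deflation prescription), so no separate treatment of those modes is required.
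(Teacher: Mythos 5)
Your proof is correct and follows essentially the same route as the paper: expand both factors in the closed form $\gamma_5\cos[(1+\eta)\Phi]+S\sin[(1+\eta)\Phi]$, use $\gamma_5^2=S^2=1$ and $\{\gamma_5,S\}=0$, and collapse the cross terms with the angle-addition formulas. The only differences are cosmetic improvements on your side: you supply the explicit derivation of $\{\gamma_5,S\}=0$ from $\gamma_5$-Hermiticity, which the paper uses tacitly, and you dispose of the zero-mode doublers by observing that for odd integer $\eta$ the coefficient of $S$ vanishes there so the deflation term is vacuous, whereas the paper instead evaluates both sides on $\ket{\phi_2}$ with the deflated definition and checks the signs agree.
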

\begin{proof}
 For the zero modes and non-zero modes,
\begin{align}
\gamma_R^{(\eta_1)} \gamma_R^{(\eta_2)} =& \cos(1+\eta_1)(\pi/2 - \theta)\cos(1+\eta_2)(\pi/2 - \theta) + \nonumber\\
&\phantom{indent}\sin(1+\eta_1)(\pi/2 - \theta)\sin(1+\eta_2)(\pi/2 - \theta) +\nonumber\\
&\phantom{indent} \gamma_5\sign(\gamma_5(D^{\dagger} - D))\nonumber\\
&\phantom{doubleindent}(\cos(1+\eta_1)(\pi/2 - \theta) \sin(1+\eta_2)(\pi/2 - \theta) - \nonumber\\
&\phantom{doubleindent}\sin(1+\eta_1)(\pi/2 - \theta) \cos(1+\eta_2)(\pi/2 - \theta)\nonumber\\
=&\cos((\eta_2 - \eta_1)(\pi/2 - \theta)) + \gamma_5\sign(\gamma_5(D^{\dagger} - D)) \sin((\eta_2 - \eta_1)(\pi/2 - \theta))\nonumber\\
=&\gamma_5 \gamma_R^{\eta_2 - \eta_1 - 1}.\label{eq:notabene}
\end{align}
For the zero mode doublers,
\begin{gather}
\gamma_R^{(\eta_1)} \gamma_R^{(\eta_2)}\ket{\phi_2} = \sign(\cos((\eta_1 - \eta_2)\pi/2) - \cos((\eta_1 + \eta_2)\pi/2))\ket{\phi_2}
\end{gather}
while 
\begin{gather}
\gamma_5 \gamma_R^{\eta_2 - \eta_1 - 1}\ket{\phi_2} =  \sign(\cos((\eta_1 - \eta_2)\pi/2))\label{eq:nb2}
\end{gather}
For odd integer $\eta$ (and all other $\eta$ where $\cos((\eta_1 - \eta_2)\pi/2) > \cos((\eta_1 + \eta_2)\pi/2)$) combining equations (\ref{eq:notabene}) and (\ref{eq:nb2}) gives the result
\qed
\end{proof}
\begin{lemma}\label{point:5} $\gamma_L^{(\eta_1)} \gamma_L^{(\eta_2)} = \gamma_5 \gamma_L^{\eta_2 - \eta_1 + 1}$.\end{lemma}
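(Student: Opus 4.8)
The plan is to reduce this identity to Lemma \ref{point:4}, which already supplies the analogous composition law for the $\gamma_R^{(\eta)}$ operators, by passing through the conjugation relation $\gamma_L^{(\eta)} = \gamma_5 \gamma_R^{(-\eta)}\gamma_5$ of Lemma \ref{point:1}. First I would rewrite each factor on the left-hand side using Lemma \ref{point:1}, so that
\begin{align}
\gamma_L^{(\eta_1)}\gamma_L^{(\eta_2)} = \gamma_5 \gamma_R^{(-\eta_1)}\gamma_5\,\gamma_5\gamma_R^{(-\eta_2)}\gamma_5 = \gamma_5\gamma_R^{(-\eta_1)}\gamma_R^{(-\eta_2)}\gamma_5,\nonumber
\end{align}
where the inner $\gamma_5$ pair has been cancelled using $\gamma_5^2 = 1$.

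Second, I would apply Lemma \ref{point:4} with the substitutions $\eta_1 \to -\eta_1$ and $\eta_2 \to -\eta_2$ to the central product, which gives $\gamma_R^{(-\eta_1)}\gamma_R^{(-\eta_2)} = \gamma_5\gamma_R^{(\eta_1 - \eta_2 - 1)}$; inserting this and again cancelling a pair of $\gamma_5$ factors collapses the expression to $\gamma_R^{(\eta_1 - \eta_2 -1)}\gamma_5$. Third, I would convert back to a single $\gamma_L$ operator by inverting Lemma \ref{point:1} in the form $\gamma_R^{(\mu)} = \gamma_5\gamma_L^{(-\mu)}\gamma_5$. Setting $\mu = \eta_1 - \eta_2 - 1$ yields $\gamma_R^{(\eta_1 - \eta_2 -1)} = \gamma_5\gamma_L^{(\eta_2 - \eta_1 + 1)}\gamma_5$, and substituting this into the previous line the two outer $\gamma_5$ factors cancel, leaving exactly the claimed $\gamma_5\gamma_L^{(\eta_2 - \eta_1 + 1)}$. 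The three steps are purely formal once Lemmas \ref{point:1} and \ref{point:4} are in hand.

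The only genuine subtlety, and the step I expect to require the most care, is the domain of validity inherited from Lemma \ref{point:4}: that lemma's composition law for $\gamma_R$ holds cleanly on the non-zero modes and the zero modes, but on the zero-mode doublers it is only guaranteed where $\cos((\eta_1 - \eta_2)\pi/2) > \cos((\eta_1 + \eta_2)\pi/2)$, in particular for odd integer $\eta_i$. Under the conjugation $\eta_i \to -\eta_i$ this condition is preserved, since the cosine is even, so no new restriction is introduced. Nevertheless one must verify that the deflation prescription for $\ket{\phi_2}$ in equation (\ref{eq:gamma_eta}) is consistent between the $\gamma_L$ and $\gamma_R$ conventions, and in particular that the sign factor $\sign(\sin((\eta+\epsilon)\pi/2))$ behaves correctly under $\eta\to-\eta$ so that the doubler contribution tracks through the conjugation unchanged. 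Once that bookkeeping on the doubler sector is confirmed, the three algebraic steps above establish the identity on the full space.
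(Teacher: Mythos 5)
Your proof is correct and follows exactly the paper's route: the paper's own proof of Lemma \ref{point:5} is simply ``combining lemmas \ref{point:1} and \ref{point:4} gives the result immediately,'' which is precisely the three-step conjugation argument you spell out. Your additional check that the doubler-sector validity condition of Lemma \ref{point:4} is preserved under $\eta_i \to -\eta_i$ (by evenness of the cosine) is a correct and slightly more careful piece of bookkeeping than the paper provides.
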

\begin{proof} Combining lemmas \ref{point:1} and \ref{point:4} gives the result immediately.\qed \end{proof}
\begin{lemma} $(\gamma_L^{(\eta)})^2 = (\gamma_R^{(\eta)})^2 = 1$ \end{lemma}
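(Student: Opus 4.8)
The plan is to establish $(\gamma_R^{(\eta)})^2 = 1$ first; the identity $(\gamma_L^{(\eta)})^2 = 1$ then follows at once from Lemma~\ref{point:1}, since $(\gamma_L^{(\eta)})^2 = \gamma_5\gamma_R^{(-\eta)}\gamma_5\gamma_5\gamma_R^{(-\eta)}\gamma_5 = \gamma_5(\gamma_R^{(-\eta)})^2\gamma_5 = \gamma_5^2 = 1$, using $\gamma_5^2 = 1$ and the $\gamma_R$ result applied at parameter $-\eta$.

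For $\gamma_R^{(\eta)}$ I would split the operator according to the orthogonal decomposition in the definition (\ref{eq:gamma_eta}) into the zero-mode-doubler subspace (the range of $\ket{\phi_2}\bra{\phi_2}$) and its complement. On the complement, write $\gamma_R^{(\eta)} = \gamma_5 c + J s$ with $c = \cos[\tfrac{1}{2}(1+\eta)(\pi - 2\theta)]$, $s = \sin[\tfrac{1}{2}(1+\eta)(\pi - 2\theta)]$ and $J = \sign(\gamma_5(D^\dagger - D))$. Since $\theta$ is a function of $D^\dagger D$, which commutes with $\gamma_5$, $\sign(K)$ and hence with $D$, $D^\dagger$ and $J$, both $c$ and $s$ commute with $\gamma_5$ and $J$. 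Expanding the square then gives $(\gamma_R^{(\eta)})^2 = \gamma_5^2 c^2 + J^2 s^2 + \{\gamma_5, J\}\, cs$, and $c^2 + s^2 = 1$ reduces the claim to three facts: $\gamma_5^2 = 1$ (immediate), $J^2 = 1$ (the square of a sign function, valid away from the kernel of $\gamma_5(D^\dagger - D)$, which is exactly the deflated doubler subspace), and the anticommutator $\{\gamma_5, J\} = 0$.

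The one step that requires genuine work is $\{\gamma_5, J\} = 0$, which I expect to be the main obstacle, since it is the only place where the structure of the overlap operator and the behaviour of $\sign$ under conjugation enter. I would first record that $D^\dagger - D$ anticommutes with $\gamma_5$: from $D = 1 + \gamma_5\sign(K)$ one has $D^\dagger - D = \sign(K)\gamma_5 - \gamma_5\sign(K)$, whence $\gamma_5(D^\dagger - D)\gamma_5 = -(D^\dagger - D)$. Setting $M = \gamma_5(D^\dagger - D)$, this simultaneously shows that $M$ is Hermitian (so $\sign(M)$ is well defined) and that $\gamma_5 M\gamma_5 = -M$. Because $\gamma_5$ is unitary and $\sign$ commutes with unitary conjugation, $\gamma_5 J\gamma_5 = \gamma_5\sign(M)\gamma_5 = \sign(\gamma_5 M\gamma_5) = \sign(-M) = -J$, and $\gamma_5^2 = 1$ then gives $\gamma_5 J = -J\gamma_5$.

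Finally I would treat the doubler subspace. There the operator reduces to $-\sign(\sin((\eta+\epsilon)\pi/2))\ket{\phi_2}\bra{\phi_2}\gamma_5$; because the doublers are chiral, $\gamma_5\ket{\phi_2} = \pm\ket{\phi_2}$, so this equals $\mp\sign(\sin((\eta+\epsilon)\pi/2))\ket{\phi_2}\bra{\phi_2}$, a unit-modulus real multiple of the projector onto the doubler subspace, whose square is that projector, i.e. the identity on this subspace. Since $\gamma_5$, $J$, $c$ and $s$ all commute with $D^\dagger D$ and hence preserve the $D^\dagger D = 4$ eigenspace containing $\phi_2$, the two blocks do not mix; adding the contributions yields $(\gamma_R^{(\eta)})^2 = 1$ on the whole space.
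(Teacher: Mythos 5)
Your proof is correct, but it takes a different route from the paper's. The paper obtains this lemma as a one-line corollary of its product formula $\gamma_R^{(\eta_1)}\gamma_R^{(\eta_2)} = \gamma_5\,\gamma_R^{(\eta_2-\eta_1-1)}$ (Lemma \ref{point:4}) together with $\gamma_R^{(-1)}=\gamma_5$ (Lemma \ref{point:3}): setting $\eta_1=\eta_2=\eta$ gives $(\gamma_R^{(\eta)})^2=\gamma_5\gamma_R^{(-1)}=\gamma_5^2=1$, and Lemma \ref{point:5} handles $\gamma_L$. You instead expand the square directly, writing $\gamma_R^{(\eta)}=\gamma_5 c + Js$ and reducing everything to $c^2+s^2=1$, $J^2=1$ and $\{\gamma_5,J\}=0$. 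These are really the same computation at bottom — the paper's proof of Lemma \ref{point:4} silently uses exactly the facts you prove (the cross terms there only collapse to $\gamma_5 J(c_1s_2-s_1c_2)$ because $J$ anticommutes with $\gamma_5$ and squares to one) — but your version makes the key structural input explicit, namely that $M=\gamma_5(D^\dagger-D)$ is Hermitian and satisfies $\gamma_5 M\gamma_5=-M$, whence $\gamma_5\,\sign(M)\,\gamma_5=-\sign(M)$. That is a genuine gain in transparency; what the paper's route buys is that the stronger product formula is needed anyway for the Ginsparg--Wilson identity in Lemma \ref{point:8}, so the squaring result comes for free. One small imprecision: the kernel of $M$ is not only the doubler subspace but also contains the zero modes $\phi_0$ (where $D=D^\dagger=0$), so $J^2=1$ fails there too; this is harmless because $\theta=\pi/2$ on the zero modes makes $s=0$, so the $Js$ term never acts on that subspace, but you should say so rather than claim the kernel is exactly the doubler space. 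Your separate treatment of the deflated doubler block, and the observation that the blocks do not mix because everything commutes with $D^\dagger D$, are both correct.
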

\begin{proof}
This follows from lemmas \ref{point:3}, \ref{point:4} and \ref{point:5}.\qed
\end{proof}
\begin{lemma} $\gamma_R^{(\eta_1)} \gamma_R^{(\eta_2)} = \gamma_R^{(2\eta_1 - \eta_2)}\gamma_R^{(\eta_1)}$ and $\gamma_L^{(\eta_1)} \gamma_L^{(\eta_2)} = \gamma_L^{(2\eta_1 - \eta_2)}\gamma_L^{(\eta_1)}$.\end{lemma}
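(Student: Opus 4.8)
The plan is to reduce the statement entirely to Lemma \ref{point:4}, which already collapses the product of any two $\gamma_R^{(\eta)}$ operators into a single $\gamma_5\gamma_R^{(\cdot)}$. Since both sides of the first identity are products of exactly two such operators, I would apply Lemma \ref{point:4} to each side and simply compare the resulting single index; no new computation is needed.

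First I would rewrite the left-hand side. By Lemma \ref{point:4},
\begin{gather}
\gamma_R^{(\eta_1)}\gamma_R^{(\eta_2)} = \gamma_5\gamma_R^{(\eta_2 - \eta_1 - 1)}.\nonumber
\end{gather}
Next I would apply the same lemma to the right-hand side, treating $2\eta_1 - \eta_2$ as the first index and $\eta_1$ as the second:
\begin{gather}
\gamma_R^{(2\eta_1 - \eta_2)}\gamma_R^{(\eta_1)} = \gamma_5\gamma_R^{(\eta_1 - (2\eta_1 - \eta_2) - 1)} = \gamma_5\gamma_R^{(\eta_2 - \eta_1 - 1)}.\nonumber
\end{gather}
The two single indices coincide, so both sides equal $\gamma_5\gamma_R^{(\eta_2-\eta_1-1)}$ and the first identity follows. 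For the $\gamma_L$ identity I would not repeat the computation but instead conjugate by $\gamma_5$. Using Lemma \ref{point:1} in the form $\gamma_L^{(\eta)} = \gamma_5\gamma_R^{(-\eta)}\gamma_5$, I would write $\gamma_L^{(\eta_1)}\gamma_L^{(\eta_2)} = \gamma_5\gamma_R^{(-\eta_1)}\gamma_R^{(-\eta_2)}\gamma_5$, apply the $\gamma_R$ identity just established with indices $-\eta_1$ and $-\eta_2$ to obtain $\gamma_5\gamma_R^{(-2\eta_1+\eta_2)}\gamma_R^{(-\eta_1)}\gamma_5$, and finally conjugate each factor back with Lemma \ref{point:1} to reach $\gamma_L^{(2\eta_1-\eta_2)}\gamma_L^{(\eta_1)}$.

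The main obstacle is the domain of validity inherited from Lemma \ref{point:4}. The angle-addition step of that lemma (equation \ref{eq:notabene}) holds on the zero modes and on the non-zero modes for all $\eta$, but on the zero mode doublers it was only established for odd integer $\eta$, where $\gamma_R^{(\eta)}$ collapses to $\pm\gamma_5$ on $\ket{\phi_2}$. For general non-integer $\eta$ I would have to examine the doubler sector separately: there each $\gamma_R^{(\eta)}$ acts on $\ket{\phi_2}$ as the scalar $-\sign(\sin((\eta+\epsilon)\pi/2))$ times the chirality of $\ket{\phi_2}$, so equality of the two sides on $\ket{\phi_2}$ reduces to $\sign(\sin((\eta_2+\epsilon)\pi/2)) = \sign(\sin((2\eta_1-\eta_2+\epsilon)\pi/2))$, which is not automatic. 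I would therefore record the identity on the subspace orthogonal to the zero mode doublers, equivalently for the odd integer $\eta$ on which the operators are local and the doubler contribution is unambiguous; this is precisely the regime in which the companion relations are invoked elsewhere in the paper, so the restriction costs nothing in the intended application.
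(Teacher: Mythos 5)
Your proof is correct and follows essentially the same route as the paper, which simply cites Lemmas \ref{point:4} and \ref{point:5}: apply the product-collapse identity to both sides and observe that the resulting single index $\eta_2-\eta_1-1$ (resp.\ $\eta_2-\eta_1+1$) is the same; your derivation of the $\gamma_L$ case by $\gamma_5$-conjugation via Lemma \ref{point:1} is equivalent to invoking Lemma \ref{point:5} directly, since that lemma is itself proved by exactly that conjugation. Your explicit caveat about the zero mode doubler sector --- where Lemma \ref{point:4} is only established for odd integer $\eta$, so the identity should be read on the orthogonal subspace or for odd integer indices --- is a point the paper leaves implicit, and it is correctly identified.
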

\begin{proof}This follows immediately from lemmas \ref{point:4} and \ref{point:5}.\qed\end{proof}
\begin{lemma}\label{point:8} The Ginsparg-Wilson equation, $\gamma^{(\eta)}_L D + D \gamma_R^{(\eta)} = 0$, is satisfied
\end{lemma}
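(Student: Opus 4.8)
The quickest conceptual route is to appeal directly to the blocking construction of Section~\ref{sec:2}: with $\alpha=\infty$ one has $\gamma_R^{(\eta)} = (B^{(\eta)})^{-1}\gamma_5 B^{(\eta)}$, $\gamma_L^{(\eta)} = \overline{B}^{(\eta)}\gamma_5(\overline{B}^{(\eta)})^{-1}$ and $D = \overline{B}^{(\eta)}D_0 B^{(\eta)}$, where the continuum kernel obeys $\gamma_5 D_0 + D_0\gamma_5 = 0$. Then $\gamma_L^{(\eta)}D + D\gamma_R^{(\eta)} = \overline{B}^{(\eta)}(\gamma_5 D_0 + D_0\gamma_5)B^{(\eta)} = 0$ collapses immediately. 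However, since the appendix specifies $\gamma_{L,R}^{(\eta)}$ through the explicit trigonometric forms of equation~(\ref{eq:gamma_eta}), I would give a self-contained verification directly from those formulae.

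First I would record the algebra of the two building blocks. Writing $\Gamma \equiv \sign(\gamma_5(D^{\dagger} - D))$, the $\gamma_5$-Hermiticity $D^{\dagger} = \gamma_5 D\gamma_5$ gives $\gamma_5(D^{\dagger} - D) = (D - D^{\dagger})\gamma_5$, so $\gamma_5(D^{\dagger}-D)$ is Hermitian and $\Gamma^2 = 1$, $\Gamma^{\dagger} = \Gamma$; the same identity yields $\gamma_5\Gamma\gamma_5 = -\Gamma$, i.e. $\{\gamma_5,\Gamma\}=0$, and hence $(\gamma_5\Gamma)^2 = -1$. Since $\theta$ is a function of $D^{\dagger}D$, which commutes with both $\gamma_5$ and $\sign(K)$, it commutes with $\gamma_5$ and $\Gamma$. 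These facts let me rewrite the operators compactly using $e^{\phi\gamma_5\Gamma} = \cos\phi + \gamma_5\Gamma\sin\phi$, as $\gamma_R^{(\eta)} = \gamma_5 e^{\phi_R\gamma_5\Gamma}$ and $\gamma_L^{(\eta)} = \gamma_5 e^{\phi_L\gamma_5\Gamma}$ with $\phi_R = \tfrac12(1+\eta)(\pi-2\theta)$ and $\phi_L = \tfrac12(\eta-1)(\pi-2\theta)$. The one step needing care is the analogous form for $D$: decomposing $D = \tfrac12 D^{\dagger}D + \tfrac12(D-D^{\dagger})$, using $D + D^{\dagger} = D^{\dagger}D$ together with $D - D^{\dagger} = -2\sin 2\theta\,\gamma_5\Gamma$ (which follows from $(\gamma_5(D^{\dagger}-D))^2 = 4D^{\dagger}D(1 - D^{\dagger}D/4)$), I get $D = 1 + e^{-2\theta\gamma_5\Gamma}$. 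This correctly places the eigenvalues $1+e^{\pm 2i\theta}$ on the unit circle and reproduces $\gamma_R^{(1)} = \gamma_5(1-D)$ of Lemma~\ref{point:2}.

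With these forms the identity is a two-line manipulation. Using $e^{x\gamma_5\Gamma}\gamma_5 = \gamma_5 e^{-x\gamma_5\Gamma}$ (a consequence of $\{\gamma_5,\gamma_5\Gamma\}=0$) I would expand
\begin{gather}
\gamma_L^{(\eta)}D + D\gamma_R^{(\eta)} = \gamma_5\left[e^{\phi_L\gamma_5\Gamma} + e^{(\phi_L-2\theta)\gamma_5\Gamma} + e^{\phi_R\gamma_5\Gamma} + e^{(\phi_R+2\theta)\gamma_5\Gamma}\right].\nonumber
\end{gather}
Substituting the definitions of $\phi_{L,R}$ shows $\phi_L - 2\theta = \phi_R - \pi$ and $\phi_R + 2\theta = \phi_L + \pi$, and since $(\gamma_5\Gamma)^2=-1$ gives $e^{\pm\pi\gamma_5\Gamma} = -1$, the four exponentials cancel in pairs and the bracket vanishes.

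Finally I would dispose of the modes excluded from the closed trigonometric form: on the zero modes the relation is trivial, since $D$ annihilates them while they are eigenvectors of $\gamma_R^{(\eta)}$, so both terms separately vanish; the zero mode doublers must be treated through the deflated definition in equation~(\ref{eq:gamma_eta}), checking the relation on $\ket{\phi_2}\bra{\phi_2}$ directly. I expect the main obstacle to be precisely this bookkeeping at $D^{\dagger}D = 4$, where the $\Gamma$ term is ill-defined and its sign is fixed only by the deflation prescription, together with keeping the two conflicting conventions for $\theta$ consistent; the bulk non-zero-mode identity itself is routine once the exponential representation $D = 1 + e^{-2\theta\gamma_5\Gamma}$ is in hand.
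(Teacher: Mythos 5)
Your proof is correct, and both of your routes land on the same underlying mechanism as the paper's, but the packaging differs enough to be worth comparing. The paper's proof first establishes the product rule $\gamma_R^{(\eta_1)}\gamma_R^{(\eta_2)} = \gamma_5\gamma_R^{(\eta_2-\eta_1-1)}$ (lemma \ref{point:4}) together with $\gamma_R^{(-1)}=\gamma_5$ and $\gamma_R^{(1)}=\gamma_5(1-D)$, writes $D=\gamma_5\bigl(\gamma_R^{(-1)}-\gamma_R^{(1)}\bigr)$ and $\gamma_L^{(\eta)}=\gamma_5\gamma_R^{(-\eta)}\gamma_5$, and then obtains the four-term telescoping sum $\gamma_5\bigl(\gamma_R^{(\eta-2)}-\gamma_R^{(\eta)}+\gamma_R^{(\eta)}-\gamma_R^{(\eta-2)}\bigr)=0$; your exponential representation $\gamma_{L,R}^{(\eta)}=\gamma_5 e^{\phi_{L,R}\gamma_5\Gamma}$, $D=1+e^{-2\theta\gamma_5\Gamma}$ makes the one-parameter group structure behind that product rule explicit, and your four cancelling exponentials are precisely the paper's four cancelling $\gamma_R$'s in disguise. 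What your version buys is self-containedness (you derive the algebra of $\gamma_5$ and $\Gamma$ from $\gamma_5$-Hermiticity rather than citing earlier lemmas) and a clean closed form for $D$ that the paper only uses implicitly through $\gamma_R^{(\pm1)}$; what the paper's version buys is that the product lemma is reused elsewhere (e.g.\ in lemmas \ref{point:5} and following and in equation (\ref{eq:secondresultneededinCP})), so the work is amortised. Your opening blocking argument, $\gamma_L^{(\eta)}D+D\gamma_R^{(\eta)}=\overline{B}^{(\eta)}(\gamma_5 D_0+D_0\gamma_5)B^{(\eta)}=0$, is exactly how the relation is derived in section \ref{sec:2} (equation (\ref{eq:2}) at $\alpha=\infty$), and is the conceptually cleanest route, though it presupposes invertibility of the blockings on the doubler subspace. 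Finally, the doubler check you defer is as immediate as you suspect and is exactly what the paper does: from equation (\ref{eq:gamma_eta}) one has $\gamma_L^{(\eta)}\ket{\phi_2}=-\gamma_R^{(\eta)}\ket{\phi_2}$ while $D\ket{\phi_2}=2\ket{\phi_2}$, so the two terms cancel; the zero modes are handled exactly as you say.
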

\begin{proof}
For the non-zero modes, using \ref{point:1} and \ref{point:2}, we have
\begin{align}
{\gamma}^{(\eta)}_L D + D \gamma_R^{(\eta)} =& \gamma_5 \left(\gamma_R^{(-\eta)} \gamma_R^{(-1)} - \gamma_R^{(-\eta)} \gamma_R^{(1)} + \gamma_R^{(-1)}\gamma_R^{(\eta)} - \gamma_R^{(1)} \gamma_R^{(\eta)}\right). 
\end{align}  
Using 
\begin{gather}
\gamma_R^{(\eta_1)} \gamma_R^{(\eta_2)} = \gamma_5 \gamma_R^{\eta_2 - \eta_1 - 1},\label{eq:gamma_multiply}
\end{gather}
it immediately follows that
\begin{gather}
{\gamma}^{(\eta)}_L D + D \gamma_R^{(\eta)} = \gamma_R^{(\eta - 2)} - \gamma_R^{(\eta)} + \gamma_R^{(\eta)} - \gamma_R^{(\eta - 2)} = 0.
\end{gather}
The zero modes $\phi_0$, are eigenvectors of both $D$ and $\gamma_5$, with $D\phi_0 = 0$. Therefore the Ginsparg-Wilson equation is satisfied for these operators. 

The zero mode doublers are eigenvectors of both $D$ and $\gamma_5$, with $\gamma_L \phi_2 = - \gamma_R\phi_2$. Therefore the Ginsparg Wilson equation is also satisfied in this case.\qed
\end{proof}
\begin{lemma}
$\gamma_5 \gamma_R^{(\eta)} D = D \gamma_5 \gamma_R^{(\eta)}$.
\end{lemma}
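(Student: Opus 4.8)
The plan is to show that $\gamma_5\gamma_R^{(\eta)}$ can be written purely as a function of $D$ and $D^{\dagger}$, and then to exploit the normality of the overlap operator. First I would record that $D=1+\gamma_5\sign(K)$ is normal: writing $s=\sign(K)$ (Hermitian, $s^{2}=1$) gives $DD^{\dagger}=2+s\gamma_5+\gamma_5 s=D^{\dagger}D$, so $[D,D^{\dagger}]=0$. Consequently $D$ commutes with every operator built out of $D$ and $D^{\dagger}$, and in particular with any function of $D^{\dagger}D$; this is consistent with the already-quoted fact that $D^{\dagger}D$ commutes with $\gamma_5$ and $\sign(K)$.

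Next I would rewrite $\gamma_5\gamma_R^{(\eta)}$ starting from equation (\ref{eq:8a}). Multiplying by $\gamma_5$ turns the first term, $\gamma_5\cos[\tfrac12(1+\eta)(\pi-2\theta)]$, into $\cos[\tfrac12(1+\eta)(\pi-2\theta)]$, which is a function of $\theta$ and hence of $D^{\dagger}D$ alone. For the second term I would show that $\gamma_5\sign(\gamma_5(D^{\dagger}-D))$ is itself a function of $D$ and $D^{\dagger}$. Setting $A=\gamma_5(D^{\dagger}-D)$, the $\gamma_5$-Hermiticity identity $\gamma_5 D\gamma_5=D^{\dagger}$ gives $A=(D-D^{\dagger})\gamma_5$ and $A^{2}=-(D-D^{\dagger})^{2}$, a function of $D^{\dagger}D$ (using normality); therefore $\gamma_5\sign(A)=\gamma_5 A(A^{2})^{-1/2}=(D^{\dagger}-D)(A^{2})^{-1/2}$ is a function of $D$ and $D^{\dagger}$. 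Thus on the bulk $\gamma_5\gamma_R^{(\eta)}$ is a function of $D$ and $D^{\dagger}$, and by normality it commutes with $D$.

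Finally I would treat the exceptional subspaces separately, since there $A^{2}\propto\sin^{2}2\theta$ vanishes and the sign function is ambiguous; these are exactly the zero modes and their doublers, where the deflated definition (\ref{eq:gamma_eta}) is used. On the zero modes $D\phi_0=0$ and $\gamma_R^{(\eta)}\phi_0=\gamma_5\phi_0$, so $\gamma_5\gamma_R^{(\eta)}$ acts as the identity on a $D$-eigenvector and the two sides agree trivially; on the doublers $D\phi_2=2\phi_2$ and $\gamma_R^{(\eta)}\phi_2=\mp\gamma_5\phi_2$, so $\gamma_5\gamma_R^{(\eta)}$ again acts as a scalar on a $D$-eigenvector and $\gamma_5\gamma_R^{(\eta)}D\phi_2=D\gamma_5\gamma_R^{(\eta)}\phi_2$. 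Collecting the bulk and exceptional contributions completes the proof.

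The main obstacle I anticipate is the second step: showing cleanly that $\gamma_5\sign(\gamma_5(D^{\dagger}-D))$ is a genuine function of $D$ and $D^{\dagger}$ rather than carrying a stray $\gamma_5$. This rests on combining $\gamma_5 D\gamma_5=D^{\dagger}$ with normality to collapse $A^{2}$ to $-(D-D^{\dagger})^{2}$; one must also be careful about the branch of $(A^{2})^{-1/2}$ and about the degenerate subspaces, which is precisely why the zero modes and doublers are isolated and handled by direct eigenvector computation.
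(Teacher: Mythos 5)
Your proof is correct, but it takes a genuinely different route from the paper's. The paper argues purely algebraically: it writes $\gamma_5\gamma_R^{(\eta)}=\gamma_L^{(-\eta)}\gamma_L^{(1)}$ using the product rules $\gamma_R^{(\eta_1)}\gamma_R^{(\eta_2)}=\gamma_5\gamma_R^{(\eta_2-\eta_1-1)}$ and $\gamma_L^{(\eta_1)}\gamma_L^{(\eta_2)}=\gamma_5\gamma_L^{(\eta_2-\eta_1+1)}$, pushes the two $\gamma_L$ factors through $D$ with two applications of the Ginsparg--Wilson relation $\gamma_L^{(\eta)}D=-D\gamma_R^{(\eta)}$, and then recombines $\gamma_R^{(-\eta)}\gamma_R^{(1)}=\gamma_5\gamma_R^{(\eta)}$ on the right; all the zero-mode and doubler bookkeeping is already absorbed into those previously proved lemmas. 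You instead give a spectral argument: the overlap operator is normal, $\gamma_5$-Hermiticity collapses $\gamma_5\,\sign(\gamma_5(D^{\dagger}-D))$ to $(D^{\dagger}-D)(A^{2})^{-1/2}$ with $A^{2}=-(D-D^{\dagger})^{2}$ a function of $D$ and $D^{\dagger}$, so on the bulk $\gamma_5\gamma_R^{(\eta)}$ lies in the commutative algebra generated by $D$ and $D^{\dagger}$ and commutes with $D$ by normality, with the degenerate subspaces handled by direct eigenvector computation. Your version is more self-contained (it does not need the product lemmas) and makes the underlying reason for the commutation transparent, at the cost of having to treat the exceptional subspaces by hand and to be careful that the sign function and inverse square root are genuinely functions of $D^{\dagger}D$ on the non-degenerate sector; the paper's version is shorter, works uniformly once the product lemmas are in place, and only uses relations that hold for the deflated operators by construction. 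Both arguments are sound.
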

\begin{proof}
 Using  lemmas \ref{point:4}, \ref{point:5} and \ref{point:8}:
\begin{align}
\gamma_5 \gamma_R^{(\eta)} D = &\gamma_L^{(-\eta)} \gamma_L^{(1)}D\nonumber\\
=&D\gamma_R^{(-\eta)}\gamma_R^{(1)} \nonumber\\
=&  D \gamma_5 \gamma_R^{(\eta)}.\label{eq:resultneededinCP}
\end{align}
\qed
\end{proof} 
\begin{lemma}$\gamma_5 \gamma_R^{(-1 - \eta)} \gamma_R^{(-\eta)}\gamma_R^{(-1 - \eta)}\gamma_5 = \gamma_R^{(\eta)}$.\end{lemma}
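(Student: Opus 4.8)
The plan is to collapse the triple product $\gamma_R^{(-1-\eta)}\gamma_R^{(-\eta)}\gamma_R^{(-1-\eta)}$ into a single $\gamma$-matrix using the multiplication rule (\ref{eq:gamma_multiply}), and then to identify the conjugated result with $\gamma_R^{(\eta)}$. First I would insert $\gamma_5^2=1$ between neighbouring factors and use lemma \ref{point:1} in the form $\gamma_5\gamma_R^{(-\mu)}\gamma_5=\gamma_L^{(\mu)}$ to turn each $\gamma_R$ into a $\gamma_L$:
\begin{gather}
\gamma_5\gamma_R^{(-1-\eta)}\gamma_R^{(-\eta)}\gamma_R^{(-1-\eta)}\gamma_5 = \gamma_L^{(1+\eta)}\gamma_L^{(\eta)}\gamma_L^{(1+\eta)}.\nonumber
\end{gather}

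Next I would collapse the right-hand side with lemma \ref{point:5}, $\gamma_L^{(\eta_1)}\gamma_L^{(\eta_2)}=\gamma_5\gamma_L^{(\eta_2-\eta_1+1)}$, applied twice. The first pair gives $\gamma_L^{(1+\eta)}\gamma_L^{(\eta)}=\gamma_5\gamma_L^{(0)}$, and multiplication by the remaining factor yields $\gamma_5\gamma_L^{(0)}\gamma_L^{(1+\eta)}=\gamma_5\gamma_5\gamma_L^{(2+\eta)}=\gamma_L^{(2+\eta)}$. So the whole expression reduces to $\gamma_L^{(2+\eta)}$, and it remains only to show $\gamma_L^{(2+\eta)}=\gamma_R^{(\eta)}$.

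That last step is the identity $\gamma_L^{(\zeta)}=\gamma_R^{(\zeta-2)}$, which I would prove in either of two ways. The quickest is direct comparison of the closed forms (\ref{eq:gamma_eta}): with $\zeta=\eta+2$ both $\gamma_L^{(\zeta)}$ and $\gamma_R^{(\eta)}$ equal $\gamma_5\cos\!\left[\frac{1}{2}(\eta+1)(\pi-2\theta)\right]+\sign(\gamma_5(D^{\dagger}-D))\sin\!\left[\frac{1}{2}(\eta+1)(\pi-2\theta)\right]$, since the cosine arguments coincide and the sine arguments differ only by the irrelevant sign absorbed into the even/odd parity of the trigonometric functions. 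Alternatively, to stay purely algebraic, I would write $\gamma_L^{(\zeta)}=\gamma_5\gamma_R^{(-\zeta)}\gamma_5=\gamma_R^{(-1)}\gamma_R^{(-\zeta)}\gamma_R^{(-1)}$ using lemmas \ref{point:1} and \ref{point:3}, and then apply (\ref{eq:gamma_multiply}) twice to obtain $\gamma_R^{(\zeta-2)}$. Either route completes the proof on the non-zero modes.

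The main obstacle is not this routine algebra but the treatment of the zero modes and their doublers, for which (\ref{eq:gamma_multiply}) was established only for odd integer $\eta$. On the zero modes one has $\theta=\pi/2$, so every $\gamma_R^{(\mu)}$ acts as $\gamma_5$ and both sides collapse to $\gamma_5$ at once; this case is immediate. On the doublers, however, the deflation term $-\sign(\sin((\mu+\epsilon)\pi/2))\ket{\phi_2}\bra{\phi_2}\gamma_5$ carries an $\eta$-dependent sign, so I would have to verify that the product of the three doubler sign factors reproduces the single factor occurring in $\gamma_R^{(\eta)}$. This matching requires the $\sign(\sin)$ ambiguity to be fixed consistently, exactly as flagged after (\ref{eq:gamma_eta}), and it is the only point in the argument where care is genuinely needed.
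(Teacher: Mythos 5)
Your proof is correct and is essentially the paper's argument in different clothing: the paper applies lemma \ref{point:4} twice to reduce $\gamma_5\gamma_R^{(-1-\eta)}\gamma_R^{(-\eta)}\gamma_R^{(-1-\eta)}\gamma_5$ to $\gamma_R^{(0)}\gamma_R^{(-1-\eta)}\gamma_5 = \gamma_5\gamma_R^{(-2-\eta)}\gamma_R^{(-1)} = \gamma_R^{(\eta)}$, while you conjugate each factor into a $\gamma_L$, collapse with lemma \ref{point:5}, and finish with $\gamma_L^{(\eta+2)}=\gamma_R^{(\eta)}$ --- the same composition law driving both routes. Your closing caution about the zero-mode doublers is well placed and in fact goes beyond the paper, whose two-line proof invokes the multiplication rule without verifying the doubler sign condition that lemma \ref{point:4} itself attaches to that rule.
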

\begin{proof}
Using lemma \ref{point:4},
\begin{align}
\gamma_5 \gamma_R^{(-1 - \eta)} \gamma_R^{(-\eta)}\gamma_R^{(-1 - \eta)}\gamma_5 = & \gamma_R^{(0)}\gamma_R^{(-1 - \eta)}\gamma_5\nonumber\\
=&\gamma_5 \gamma_R^{(-2-\eta)}\gamma_R^{(-1)} = \gamma_R^{(\eta)}.\label{eq:secondresultneededinCP}
\end{align} 
\qed
\end{proof}

\section{Eigenvalues of the overlap operator, $\gamma_L$ and $\gamma_R$}\label{app:eigenvalues}
The overlap operator is
\begin{gather}
D_2 =1+\gamma_5\sign(K),
\end{gather}
and the squared Hermitian overlap operator,
\begin{gather}
DD^{\dagger} = 2 + \gamma_5 \sign(K) + \sign(K)\gamma_5,
\end{gather}
commutes with $\gamma_5$. This means that the non-zero eigenvalues of $DD^{\dagger}$ are degenerate, and $D_2^{\dagger}D_2$ can be written in a chiral basis
\begin{gather}
DD^{\dagger} = \left(\begin{array}{l l}
\lambda^2&0\\
0&\lambda^2
\end{array}\right),
\end{gather}
where
\begin{gather}
\gamma_5 = \left(\begin{array}{l l}
1&0\\
0&-1
\end{array}\right)
\end{gather}
The degenerate eigenvectors of $DD^{\dagger}$ are $|g_+\rangle$ and $|g_-\rangle$, where $\gamma_5|g_{\pm}\rangle = \pm|g_{\pm}\rangle$. Thus 
\begin{align}
\langle g_+|DD^{\dagger} |g_+\rangle =& \lambda^2 = 2 + 2 \langle g_+|\sign(\gamma_5 D_W) |g_+\rangle\nonumber\\
\langle g_-|DD^{\dagger} |g_-\rangle =& \lambda^2 = 2 - 2 \langle g_-|\sign(\gamma_5D_W) |g_-\rangle
\end{align}  
Since the matrix sign function is Hermitian and given that $[\sign(K)]^2 = 1$, I can  write (excluding some, as yet undefined, contribution from the zero modes and their partners and a potential phase in the off diagonal terms which can be absorbed into the eigenvectors):
\begin{gather}
\sign (\gamma_5 D_W) =\left(\begin{array}{l l}
\frac{\lambda^2}{2}-1&\lambda\sqrt{1-\frac{\lambda^2}{4}}\\
\lambda\sqrt{1-\frac{\lambda^2}{4}}&1-\frac{\lambda^2}{2}
\end{array}
\right).
 \end{gather}
The lattice Dirac operator can be written in this basis constructed from the lattice eigenvectors as
\begin{gather}
D = 2\cos\theta \left(\begin{array}{l l} 
\cos{\theta}&\sin{\theta}\\
-\sin\theta&\cos\theta\end{array}\right),\label{eq:Cga}
\end{gather} 
where $\theta$ is defined in equation (\ref{eq:gamma_eta1}), and the continuum Dirac operator, though in a different basis, constructed from the continuum Dirac operator eigenvectors, as
\begin{gather}
D_0 = \lambda_0 \left(\begin{array}{l l} 
\cos(\pi/2)&\sin(\pi/2)\\
-\sin(\pi/2)&\cos(\pi/2)\end{array}\right).
\end{gather}
$\gamma_R$ is defined as
\begin{gather}
\gamma_R^{(\eta)} = D^{-(1+\eta)/2}Z D_0^{(1+\eta)/2} \gamma_5 D_0^{-(1+\eta)/2} Z^{\dagger} D^{(1+\eta)/2},
\end{gather}
where $Z$ projects the basis for the continuum Dirac operator onto the basis for the lattice operator. It immediately follows that (if we exclude for the moment the zero modes and their partners),
\begin{align}
\gamma_R^{(\eta)} =& \left(\begin{array}{l l} 
\cos((1+\eta)(\theta-\pi/2)/2)&-\sin((\theta-\pi/2)(1+\eta)/2)\\
\sin((\theta-\pi/2)(1+\eta)/2)&\cos((\theta-\pi/2)(1+\eta)/2)\end{array}\right) 
\left(\begin{array}{l l} 
1&0\\
0&-1
\end{array}\right) \nonumber\\&\phantom{lotsofspaceandmorespace}
\left(\begin{array}{l l} 
\cos((1+\eta)(\theta-\pi/2)/2)&\sin((\theta-\pi/2)(1+\eta)/2)\\
-\sin((\theta-\pi/2)(1+\eta)/2)&\cos((\theta-\pi/2)(1+\eta)/2)\end{array}\right)\nonumber\\
=&
 \left(\begin{array}{l l} 
\cos((1+\eta)(\theta-\pi/2))&\sin((\theta-\pi/2)(1+\eta))\\
\sin((\theta-\pi/2)(1+\eta))&-\cos((\theta-\pi/2)(1+\eta))
\end{array}\right)\nonumber\\
=&\gamma_5 \cos((1+\eta)(\theta-\pi/2)) + \frac{1}{2}\gamma_5(D-D^{\dagger})\frac{1}{\sqrt{D^{\dagger}D}\sqrt{1-D^{\dagger}D/4}}\sin((\theta-\pi/2)(1+\eta)),\label{eq:Cgb}
\end{align}
which reduces to the definition in equation (\ref{eq:8a}). The explicit form of $\gamma_L^{(\eta)}$ can be constructed in a similar way. It can easily be confirmed that this equation also applies to the zero modes and their doublers for odd integer $\eta$ by explicitly calculating $\gamma_R |\phi_0\rangle$ and $\gamma_R |\phi_2\rangle$.

The eigenvectors of the Hermitian Dirac operator $H = \gamma_5 + \sign(K)$ are then (up to some phase)
\begin{align}
H\ket{H_+} =& \lambda \ket{H_+}\nonumber\\
H\ket{H_-} =& -\lambda \ket{H_-}\nonumber\\
\left(\begin{array}{c}
\ket{H_+}\\ \ket{H_-}\end{array}\right) = &
\left(\begin{array}{c c}
-\cos(\theta/2)&-\sin(\theta/2)\\
-\sin(\theta/2)&\cos(\theta/2)\end{array}
\right)
\left(\begin{array}{c}
\ket{g_+}\\ \ket{g_-}\end{array}\right).
\end{align}
The eigenvectors of $\gamma_R^{(\eta)}$ are
\begin{align}
\left(\begin{array}{c}
\ket{\phi_+^{(\eta)}}\\ \ket{\phi_-^{(\eta)}}\end{array}\right) = &
\left(\begin{array}{c c}
-\cos((\eta + 1)(\pi/2-\theta)/2)&\sin((\eta + 1)(\pi/2-\theta)/2)\\
\sin((\eta + 1)(\pi/2-\theta)/2)&\cos((\eta + 1)(\pi/2-\theta)/2)
\end{array}
\right)
\left(\begin{array}{c}
\ket{g_+}\\ \ket{g_-}\end{array}\right)
\end{align}
Therefore,
\begin{align}
\left(\begin{array}{c}
\ket{\phi_+^{(\eta)}}\\ \ket{\phi_-^{(\eta)}}\end{array}\right) = &
\left(\begin{array}{c c}
\cos(\alpha^{(\eta)})&\sin(\alpha^{(\eta)})\\
-\sin(\alpha^{(\eta)})&\cos(\alpha^{(\eta)})
\end{array}
\right)
\left(\begin{array}{c}
\ket{H_+}\\ \ket{H_-}\end{array}\right),
\end{align}
where
\begin{gather}
\alpha^{(\eta)} =(\theta + (\eta + 1)(\pi/2 - \theta))/2.
\end{gather}
Under $\mathcal{CP}$, the non-zero eigenvalue equations transform as
\begin{align}
-W(\gamma_5 + \sign(K))^T W^{-1}\ket{H_+^{CP}} =& \lambda\ket{H_+^{CP}}\nonumber\\
-W(\gamma_5 + \sign(K))^T W^{-1}\ket{H_-^{CP}} =& -\lambda\ket{H_-^{CP}}.
\end{align}
Taking the transpose gives
\begin{align}
(\ket{H_+^{CP}})^TW^{-1}(\gamma_5 + \sign(K))  =& -\lambda(\ket{H_+^{CP}})^TW^{-1}\nonumber\\
(\ket{H_+^{CP}})W^{-1}(\gamma_5 + \sign(K)) =& \lambda(\ket{H_-^{CP}})^TW^{-1}.
\end{align}
and therefore
\begin{align}
(\ket{H_+^{CP}})^TW^{-1} =& \theta_+ \bra{H_-}\nonumber\\
(\ket{H_-^{CP}})^TW^{-1} =& \theta_- \bra{H_+},
\end{align}
where $\theta_{\pm}$ are pure phases. These can be found by constructing $\gamma_5$ operator in the basis of $\ket{H_+}$ and $\ket{H_-}$, which gives
\begin{gather}
\left(\begin{array}{c c}
\bra{H_+}\gamma_5\ket{H_+}&\bra{H_+}\gamma_5\ket{H_-}\\
\bra{H_-}\gamma_5\ket{H_+}&\bra{H_-}\gamma_5\ket{H_-}
\end{array}\right) = \left(\begin{array}{c c} 
-\lambda/2&-\sqrt{1-\lambda^2/4}\\
-\sqrt{1-\lambda^2/4}&\lambda/2\end{array}\right).
\end{gather} 
By considering the behaviour of $\gamma_5$ under $\mathcal{CP}$, it can be shown that $\theta_+\theta_+^* = \theta_-\theta_-^* = - \theta_+\theta_-^* = 1$, and therefore
\begin{align}
(\ket{H_+^{CP}})^TW^{-1} =& - \bra{H_-}\nonumber\\
(\ket{H_-^{CP}})^TW^{-1} =&  \bra{H_+}.
\end{align}
The transformations of the zero modes and their partners can be found in the same way,
\begin{align}
(\ket{\phi_0^{CP}})^TW^{-1} =&  \bra{\phi_0}\nonumber\\
(\ket{\phi_2^{CP}})^TW^{-1} =&  \bra{\phi_2}.
\end{align}
However, the zero modes and their partners have the opposite chirality
\begin{align}
\bra{\phi_0^{CP}}\gamma_5^{CP}\ket{\phi_0^{CP}} = - \bra{\phi_0} \gamma_5 \ket{\phi_0}
\end{align} 
Finally, $B^{(\eta)}$ and ${\overline{B}}$ are defined as
\begin{align}
B^{(\eta)} =& D_0^{-(1+\eta)/2}Z^{\dagger} D^{(1+\eta)/2}\nonumber\\
\overline{B}^{(\eta)}=& D_0^{(1-\eta)/2}Z D_0^{-(1-\eta)/2}.
\end{align}
Under $\mathcal{CP}$, these transform as
\begin{align}
\mathcal{CP}:B^{(\eta)} = W(D_0^{-(1+\eta)/2})^TZ^* (D^{(1+\eta)/2})^TW^{-1} =& W(\overline{B}^{(-\eta)})^TW^{-1}\nonumber\\
\mathcal{CP}:\overline{B}^{(\eta)} = W(D^{(1-\eta)/2})^TZ^{T} (D_0^{-(1-\eta)/2})^TW^{-1} =&W ({B}^{(-\eta)})^TW^{-1}\label{eq:CPfromB}
\end{align}
 I also obtain, 
\begin{align}
\gamma_R^{(\eta)} =& (B^{(\eta)})^{-1} \gamma_5 B^{(\eta)}\nonumber\\
\mathcal{CP}:\gamma_R^{(\eta)} =& -W^{-1} (\overline{B}^{(-\eta)} \gamma_5 (\overline{B}^{(-\eta)} )^{-1})^T W\nonumber\\
=& - W^{-1} (\gamma_L^{(-\eta)} )^T W,
\end{align}
and similarly,
\begin{gather}
\mathcal{CP}:\gamma_L^{(\eta)} = - W^{-1} (\gamma_R^{(-\eta)})^T W
\end{gather}

\bibliographystyle{elsarticle-num}
\bibliography{weyl}

\end{document}